\documentclass[12pt]{article}%
\pdfoutput=1
\usepackage[utf8]{inputenc}%
\usepackage[T1]{fontenc}%
\usepackage{amsmath, amsthm, a4, latexsym, amssymb, mathrsfs, graphicx}%
\usepackage{natbib}%
\usepackage{bm}%
\usepackage{lineno}%
\usepackage[all]{xy}%
\usepackage{color}%
\usepackage{rotating}%
\usepackage{a4wide,fullpage}%
\usepackage{setspace}%
\usepackage{enumerate}%
\setstretch{1.5}%
\DeclareMathAlphabet{\mathpzc}{OT1}{pzc}{m}{it}%
 %
 %
 \newcommand{\Var}[1]{\ensuremath{\mathrm{Var}\!\left( #1\right) } }%
\newcommand{\Cov}[2]{\ensuremath{\mathrm{Cov}\!\left(#1, #2 \right) } }
\newcommand{\Covn}[3]{\ensuremath{\mathrm{Cov}_{#1}\!\left(#2, #3 \right) } }
\newcommand{\N}{\mathbb{N}}%
 \newcommand{\E}{\mathbb{E}}%
 \renewcommand{\P}{\mathbb{P}}%

 \newcommand{\hi}{\ensuremath{\hat{\imath}}}
\newcommand{\hj}{\ensuremath{\hat{\jmath} } }

 \renewcommand{\N}{{\mathbb N}}%

\newcommand{\ind}{\mathbb I}%
 \newcommand{\EE}[1]{\ensuremath{\E\!\left[ #1 \right] }}%
\newcommand{\bone}[1]{\ensuremath{ \boldsymbol{1}_{\left( #1 \right) } } }
%
%
%
%
%
%
%
%

\newtheorem{theorem}{Theorem}[figure]

\newtheorem{prop}[theorem]{Proposition}



%
%
%

\newtheoremstyle{thm}{1.5ex}{1.5ex}{\itshape\rmfamily}{}
{\bfseries\rmfamily}{}{2ex}{}

\newtheoremstyle{rem}{1.3ex}{1.3ex}{\rmfamily}{}
{\itshape\rmfamily}{}{1.5ex}{}
\theoremstyle{rem}
\refstepcounter{subsubsection}





\sloppy

\title{Statistical properties of the site-frequency spectrum associated with Lambda-coalescents}
\date{\today}
\author{Matthias Birkner\\Johannes-Gutenberg-Universität\\ Institut für Mathematik\\ 55099 Mainz, Germany\\\\ Jochen Blath und  Bjarki Eldon \\ TU Berlin\\ Institut für Mathematik\\  10623 Berlin, Germany  }
\begin{document}%
\maketitle

\clearpage
\pagebreak
\newpage

Running title:  Lambda-coalescents and inference\\
keywords:  Lambda-coalescents,  site-frequency spectrum, covariance, Atlantic cod, pseudo-likelihood\\\\

corresponding author:\\
Bjarki Eldon\\
TU Berlin, Institut für Mathematik \\
Stra{\ss}e des 17.~Juni 136 \\
 10623 Berlin, Germany \\
Email: eldon@math.tu-berlin.de\\
Phone: +49 303 1425 762\\
Fax: +49 +(0) 30 314 21695 \\

\clearpage
\pagebreak
\newpage

\begin{abstract}%

  Statistical properties of the site frequency spectrum associated
  with Lambda-coalescents are our objects of study. In particular, we
  derive recursions for the expected value, variance, and covariance
  of the spectrum, extending earlier results of Fu (1995) for the
  classical Kingman coalescent.  Estimating coalescent parameters
  introduced by certain Lambda-coalescents for datasets too large for
  full likelihood methods is our focus.  The recursions for the
  expected values we obtain can be used to find the parameter values
  which give the best fit to the observed frequency spectrum.  The
  expected values are also used to approximate the probability a
  (derived) mutation arises on a branch subtending a given number of
  leaves (DNA sequences), allowing us to apply a pseudo-likelihood
  inference to estimate coalescence parameters associated with certain
  subclasses of Lambda coalescents.  The properties of the
  pseudo-likelihood approach are investigated on simulated as well as
  real mtDNA datasets for the high fecundity Atlantic cod (\emph{Gadus
    morhua}).  Our results for two subclasses of Lambda coalescents
  show that one can distinguish these subclasses from the Kingman
  coalescent, as well as between the Lambda-subclasses, even for 
  moderate sample sizes.
  \end{abstract}

\clearpage
\pagebreak
\newpage





Large offspring number population models have recently been proposed
as appropriate models with which to investigate high fecundity natural
populations.  Some marine populations may belong to the class of high
fecundity populations, including Pacific oysters
\citep[\emph{Crassostrea gigas};][]{li98,B94,BCCHB02}, white sea bream
\citep[\emph{Diplodus sargus};][]{PL02}, and Atlantic cod
\citep[\emph{Gadus morhua};][]{A04}.  Oysters feature in \cite{Wi75}'s
elm and oyster model as an example of a high fecundity population.
Indeed, high fecundity populations are discussed at length by
\cite{Wi75} when comparing the benefits of sexual versus asexual
reproduction.  \cite{A88} compares genetic distances for mtDNA
variation for three vertebrate species, american eels (\emph{Anguilla
  rostrata}), hardhead catfish (\emph{Arius felis}), and red-winged
blackbirds (\emph{Agelaius phoeniceus}), and conclude that historical
effective population sizes may have been much lower than current
census size.  Low effective population size compared to census
population size observed for certain marine populations in particular
\citep[e.g.][]{HCW92}, and reviewed by \cite{HP11}, may be evidence of
high variance in offspring distribution.  Indeed, \cite{H05} observes
that low effective population size results from high variance in
reproductive success in a population with large census size.  High
fecundity may also be a way for certain marine organisms with broadcast
spawning to compensate for high mortality rate among juveniles, and thus
exhibiting Type III survivorship curves.

Multiple merger coalescent processes, so-called Lambda- and
Xi-coalescents, arise naturally from large offspring number models
\citep{S99,DK99,MS01,S03,EW06,SW08,BB09,HM11}.  A key assumption to obtain
multiple merger coalescent process from a large offspring number model
is that the number of offspring of an individual can be up to the
order of the population size with non-negligible probability.  In
contrast to the classical Kingman coalescent framework, which has
served as the standard null-model in population genetics since its
introduction \citep{K82}, Lambda-coalescents allow multiple instead of
only binary collisions of ancestral lineages, in which the group of
lineages coalescing in a given instance may include anywhere from two
to all of the current active lineages.  Selective sweeps
\citep{DS04,DS05,CR12}, or large-scale spatial extinction and
recolonisation events \citep{BEV10,BEV13}, can also give rise to
multiple merger coalescent processes.  In some cases, one naturally
encounters the even more general Xi-coalescents \citep{S00} allowing
{\em simultaneous} multiple mergers at a time, i.e., when distinct
groups of ancestral lineages coalesce at the same time to distinct
ancestors.  Models of severe bottlenecks \citep{BBMST09}, or diploidy
\citep{MS03} and recombination in conjunction with distinct loci can
give rise to Xi-coalescents.  By way of example, a Xi-coalescent
admitting up to quadrifold simultaneous mergers is obtained from a
diploid multi-locus model with a skewed (high variance) offspring
distribution \citep{BBE13}. \cite{MS01} provide a full mathematical
classification of coalescent limits (as population size tends to
infinity) of Cannings's exchangeable population models.

The statistical methodology to test for the presence and
particularities of the effects of multiple mergers, or to distinguish
between different underlying Lambda- and Xi-coalescents within a
family of models, is, however,  still largely missing. \cite{HP11} argue
that large offspring number models are the right framework in which to
study high fecundity marine populations.  Some large offspring number
models \citep[e.g.][]{S03,EW06} introduce new parameters
(\emph{coalescence} parameters) determining the size and intensity of
large offspring number events, i.e. when individuals have very many
offspring.  Estimating the coalescence parameters from genetic data
may give us a way to distinguish between different population models.
Indeed, \cite{HP11} call for the development of such inference
methods, in particular to infer parameters, assess the quality of the
estimators, and to construct hypothesis tests and confidence bounds.


On one hand are summary-statistics based estimators for the parameter
of certain classes of Lambda-coalescents \citep{E11} in the
infinite-sites model, such as the number of segregating sites, which
may not represent a sufficient statistic, but are computationally
efficient, and could be used to obtain prior information. On the
other hand, there are full-likelihood-based point estimators, derived
via relatively complex recursions - which are, in principle, amenable to
Monte-Carlo methods - taking all available information into account,
but whose practical treatment becomes computationally prohibitive even
for medium-sized datasets \citep{BB08,BBS11}. A small
empirical simulation study for the latter can be found in
\cite{S09}. For both types of estimators, an analysis of their
statistical properties in the Lambda-case is still largely missing.

A summary statistic that contains much of the essential (in
particular, topological) information of the underlying genealogical
tree of the data is the {\em site frequency spectrum (SFS)} (however,
one may want to consult e.g.\ \cite{M07} for theoretical
limitations). The spectrum is a key quantity when applying coalescent
theory in inference, see e.g.\ \cite{W07} for an overview and a
discussion of the relation of the SFS with various other (simpler)
summary statistics.  The statistical properties of the SFS under the Kingman coalescent  have been
investigated in several studies \citep{F95,GT98,A09,ZW08,ST11}.
Extending in particular the results of \cite{F95} to Lambda
coalescents is our aim.  Recursions for the expected value, the
variance and the covariance of the number of derived mutations in the
SFS, are derived for a general Lambda coalescent.

Substantial theoretical research has been undertaken in order to
identify the asymptotic behaviour of the frequency spectrum associated
with Lambda-coalescents, culminating in the recent a.s.\ convergence
results, as sample size tends to infinity, contained in
\cite{BBL12}. However, so far it is not known how large samples need
to be in order for the spectrum to be well-approximated by the
limiting asymptotic results of \cite{BBL12}, or, indeed, if the
approximation is equally good for all values of the relevant
coalescent parameter. This question is addressed using simulations.



Finally, as an illustration  we apply our methods to site-frequency spectra obtained for
Atlantic cod \citep{A00,SA03,A04} and fit Lambda coalescents to the
data. The problem of distinguishing between coalescent classes by
means of observing a realisation of the SFS is also addressed using
simulations.

\begin{center}
{\sc Setup and previously known results}%
\end{center}

\noindent {\bf Notation}

A formal definition of the site frequency spectrum (SFS) is our
starting point.  We assume the {\em infinitely-many-sites
  model}  \citep{K69,W75} of population genetics, where the genealogy of a sample of
size $n$ is described by an $n$-$\Lambda$-coalescent. Mutations 
 occur at a rate $\theta/2$ on the coalescent
branches. Suppose further that we can distinguish between mutant and
wild-type, e.g.\ with the help of an outgroup. Then, the SFS of an
$n$-sample is given by
$$
{\boldsymbol{\xi}}^{(n)} := \big(\xi_1^{(n)}, \dots, \xi^{(n)}_{n-1}\big),
$$
where the $\xi^{(n)}_i, i=1,\dots, n-1$ denote the number of sites at
which variants appear $i$-times in our sample.  The $\xi_i^{(n)}$ are
also referred to as the number of {\em derived} mutations in $i$
copies, when the ancestral state is known.  The
vector $\boldsymbol{\xi}^{(n)}$ is called the {\em unfolded}
site-frequency spectrum of the $n$-sample.

\newcommand{\hn}{\ensuremath{\lfloor n/2 \rfloor}}

In the case where the ancestral state (variant) is unknown, one often
considers $\eta^{(n)}_i = \xi_i^{(n)} $ for $i=1, \dots, \hn$, which
is the number of sites at which the less frequent variant (the minor
allele) is present in $i$ sequences, and $\hn := \bone{\textrm{$n$
    even}} n/2 + \bone{\textrm{$n$ odd}}(n-1)/2$, where $\bone{A} = 1$
if event $A$ is true, and zero otherwise.     The observed polymorphisms become `few and far between'
with corresponding small expected values as $i$ increases,   at least for the Atlantic
cod data we investigate.  Thus, for $i > \hat{\imath}$ for some
suitable $\hat{\imath}$, we lump them together into one class labelled
as $\hat{\imath}_+$.  Considering only  the first few classes and
lumping the rest also speeds up the computations.

Then, ${\boldsymbol{\eta}}^{(n)}:= \left(\eta_1^{(n)}, \dots,
  \eta^{(n)}_{\hn}\right)$ is called the {\em folded} site frequency
spectrum of an $n$-sample.


\newcommand{\one}{\mathbf{1}}

A {\em Lambda-coalescent} is a partition-valued (one can think of
enumerated DNA sequences and their ancestral relationships) exchangeable coalescent process determined by
a finite measure $\Lambda$ on $[0, 1]$ \citep{P99,S99,DK99}. For a comprehensive overview see e.g.\ \cite{B09}. If
there are currently $n$ blocks in the partition (ie.\ $n$ active
ancestral lineages), any particular set of $k$ blocks merges into one
at rate
\begin{equation} 
\label{eq:lambdarates}
\lambda_{n,k} = \int_{[0,1]} x^{k-2} (1-x)^{n-k} \Lambda(dx), \quad k=2,..,n.
\end{equation}
Certain special cases of a Lambda-coalescent include the following,
some of which we will use in our examples below:
\begin{itemize}
\item One recovers the Kingman coalescent for 
\begin{equation}
\label{eq:Kingman}
\Lambda(dx)= \delta_0(dx),
\end{equation} 
i.e.\ when the measure $\Lambda$ is concentrated at the point $0$.  
\item The Beta$(2-\alpha, \alpha)$-coalescent \citep{schweinsberg03} is a Lambda-coalescent characterised by the $\Lambda$-measure
\begin{equation}\label{eq:betameasure} 
\Lambda(dx) = \frac{\Gamma(2)}{\Gamma(2-\alpha)\Gamma(\alpha)} x^{1-\alpha} (1-x)^{\alpha-1} \,dx,
\end{equation}
with $\alpha \in (0,2)$; i.e.\ when the measure $\Lambda$ is associated
with the beta distribution with parameters $2-\alpha$ and $\alpha$.
The limiting case $\alpha=2$ (in the sense of weak convergence of
measures) corresponds to the Kingman coalescent.
\item \cite{EW06} considered purely atomic Lambda coalescents of the following types:
\begin{equation}
\label{eq:EWoneatom}
\Lambda(dx) =  \delta_\psi(dx), 
\end{equation}
and
\begin{equation}
\label{eq:EWtwoatom}
\Lambda(dx) =\frac{2}{2+\psi^2} \delta_0(dx) + \frac{\psi^2}{2+\psi^2} \delta_\psi(dx), 
\end{equation}  
with $\psi \in [0,1]$, where $\psi=0$ gives the Kingman coalescent.
\end{itemize}
In the application  examples that follow, we will mainly be concerned with
coalescents \eqref{eq:betameasure} and \eqref{eq:EWoneatom}.  Thus,
the parameter $\alpha$ $(1 < \alpha \leq 2)$ refers exclusively to
coalescent \eqref{eq:betameasure}, and the parameter $\psi$ $(0 < \psi
< 1)$ to coalescent \eqref{eq:EWoneatom}.

For $\alpha$, we focus our attention on the parameter range $1 < \alpha
< 2$, since this corresponds to Beta-coalescents which ``come down
from infinity'', i.e.\ admit a unique most-recent common ancestor even
in the infinite-sample limit \citep{P99,S00}. It can be argued whether this
property is necessary from a biological point of view. However, our
empirical results for real datasets yields so far only estimates of
$\alpha$ between 1 and 2.  In addition, the coalescent process
derived form \cite{schweinsberg03}'s  population model converges to coalescent
\eqref{eq:betameasure} only when $1 \le \alpha < 2$.  For $0 < \alpha
< 1$, the coalescent process is a discrete-time Xi-coalescent
\citep{schweinsberg03}.

\noindent {\bf Related asymptotic results}

In inference, one often relies on asymptotic results.  The asymptotic
behaviour of the site- and allele frequency spectrum of Lambda
coalescents for (very) large sample sizes (as $n \to \infty$), as well
as related asymptotic results concerning the distributions of
tree-lengths or the ``speed of coming down from infinity'', i.e.\ when
the sample size is infinite (a mathematical peculiarity) how long it
takes the ancestral lineages to become finite in number, is an active
field of mathematical research.  The concept of `coming down from
infinity' may yet have real implications for inference.  By way of
example, the point-mass coalescent \eqref{eq:EWoneatom} does not come
down from infinity, which implies that external branches become more dominating as
sample size increases, regardless of the value of $\psi$.  On the
other hand, the beta-coalescent \eqref{eq:betameasure} does come down
from infinity, as does the Kingman coalescent.

Important asymptotic results regarding the SFS are obtained by
\cite[][Theorem 3]{BBL12}, who provide convergence results (in the
almost sure sense), as sample size tends to infinity, of the scaled
SFS for an important class of Lambda-coalescents, including the
beta-coalescent \eqref{eq:betameasure}, as will now be explained.  We
say that the measure $\Lambda$ has (strong) $\alpha$-regular variation
at zero if
$$
\Lambda(dx) = f(x)dx, \quad  
\mbox{ where } f(x) \sim A x^{1- \alpha} \mbox{ as } x \to 0, 
$$
for some $\alpha \in (1,2)$ and $A > 0$. This holds for the
Beta$(2-\alpha, \alpha)$-coalescent from \eqref{eq:betameasure}.
Then, the almost sure asymptotic behaviour of each $\xi^{(n)}_i$ as $n
\to \infty$ is determined by the behaviour of $f(x)$ as $x \downarrow 0$
for a large class of Lambda-coalescents:

\begin{theorem}\citep{BBL12}
\label{thm:bbl}%
Assume that $\Lambda$ has (strong) $\alpha$-regular variation at zero for some $1 < \alpha < 2$ and $A > 0$. Then, for the site frequency spectrum of an $n$-sample, where the underlying genealogy is governed by a Lambda-coalescent and mutation rate $\theta/2$,
$$
\lim_{n \to \infty} \frac{\xi^{(n)}_i}{n^{2-\alpha}} = \frac{\theta}{2} C_{A, \alpha} \frac{(2-\alpha)\Gamma(i+\alpha-2)}{i! \Gamma(\alpha-1)}
= \frac{\theta}{2} C_{A, \alpha} (2-\alpha)\frac{(\alpha-1)\cdots (\alpha+i-3)}{i!}, 
$$
in the almost sure sense, for $i =1, \dots , n$, where
$$
C_{A, \alpha}= \frac{\alpha(\alpha-1)}{A \Gamma (2-\alpha)(2-\alpha)}.
$$
\end{theorem}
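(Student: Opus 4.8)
The plan is to route everything through the total length of branches subtending exactly $i$ of the $n$ leaves, and to exploit the almost sure ``coming down from infinity'' (CDI) asymptotics of the block-counting process. Write $B^{(n)}_i$ for the total length of all branches of the $n$-coalescent that subtend exactly $i$ leaves. Under the infinitely-many-sites model with mutation rate $\theta/2$, conditionally on the coalescent tree the count $\xi^{(n)}_i$ is Poisson with mean $(\theta/2)B^{(n)}_i$. Hence it suffices to prove $B^{(n)}_i/n^{2-\alpha}\to C_{A,\alpha}\,p_i$ almost surely, with $p_i:=\tfrac{(2-\alpha)\Gamma(i+\alpha-2)}{i!\,\Gamma(\alpha-1)}$, and then to transfer the limit to $\xi^{(n)}_i$: given the tree the conditional variance of $\xi^{(n)}_i$ equals its conditional mean, of order $n^{2-\alpha}$, so the centred fluctuations are of order $n^{(2-\alpha)/2}=\littleo(n^{2-\alpha})$, and a Borel--Cantelli argument along the deterministic scale $n^{2-\alpha}$ upgrades this to the same almost sure limit.

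Next I would extract the length from the CDI phase. Let $N_t$ be the number of blocks at time $t$ and $N_i(t)$ the number of those blocks containing exactly $i$ leaves, so $B^{(n)}_i=\int_0^\infty N_i(t)\,\d t$ and the total length is $L_n=\sum_i B^{(n)}_i=\int_0^\infty N_t\,\d t$. Introduce the Laplace exponent $\Psi(q)=\int_{[0,1]}(e^{-qx}-1+qx)\,x^{-2}\Lambda(\d x)$; strong $\alpha$-regular variation gives $\Psi(q)\sim B q^{\alpha}$ as $q\to\infty$ with $B=A\Gamma(2-\alpha)/(\alpha(\alpha-1))$, and the CDI asymptotics yield $N_t\sim v(t)$ almost surely as $t\downarrow 0$, where $v'=-\Psi(v)$, whence $v(t)\sim (B(\alpha-1)t)^{-1/(\alpha-1)}$. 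Changing variables by $\d t=-\d N/\Psi(N)$ shows the length is accumulated where $N_t$ decreases from $n$ to $\delta n$:
\[
\int N_t\,\d t\sim \tfrac1B\int_{\delta n}^{n}N^{1-\alpha}\,\d N\longrightarrow \frac{n^{2-\alpha}}{B(2-\alpha)}=C_{A,\alpha}\,n^{2-\alpha}\quad(\delta\downarrow0),
\]
so $L_n/n^{2-\alpha}\to C_{A,\alpha}$, and since $\sum_i p_i=1$ this fixes the normalisation.

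It then remains to show that $B^{(n)}_i\sim p_i\,L_n$, i.e.\ to determine the length-weighted size-$i$ profile. Writing $N_i(t)/N_t=g_i(N_t/n)+\littleo(1)$ for the limiting fraction of size-$i$ blocks when the block count is a fraction $\beta=N_t/n$ of $n$, the change of variables above gives $B^{(n)}_i/n^{2-\alpha}\to \tfrac1B\int_0^1 g_i(\beta)\,\beta^{1-\alpha}\,\d\beta$, so the task reduces to computing $g_i$ and checking $(2-\alpha)\int_0^1 g_i(\beta)\beta^{1-\alpha}\d\beta=p_i$. The function $g_i$ is obtained from the merger rates $\lambda_{n,k}$ of \eqref{eq:lambdarates} by a generator computation: their regular-variation asymptotics produce a closed (hypergeometric) recursion for the block-size distribution of the partition with $\beta n$ blocks, and integrating against $\beta^{1-\alpha}$ collapses it to the distribution with probability generating function $\sum_{i\ge1}p_i x^i=1-(1-x)^{2-\alpha}$, the size law attached to the associated stable continuous-state branching process. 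For the almost sure statement I would control $N_i(t)$ by the martingale formed from its compensator, bound its second moment, and combine the resulting $L^2$-estimate with the almost sure CDI control of $N_t$ to integrate the profile through the window.

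The hard part is this last step. The size-$i$ fraction is \emph{not} constant across the CDI phase---at $t=0^+$ the blocks are singletons ($g_1(1)=1$, $g_i(1)=0$ for $i\ge2$), while for small $\beta$ each fixed size-$i$ fraction decays---so one must show that its $\beta^{1-\alpha}$-weighted average is exactly $p_i$, and that the error accumulated over the entire range of block numbers, including the few macroscopic blocks carrying the bulk of the leaf mass, is $\littleo(n^{2-\alpha})$ almost surely. Making the interchange of the $n\to\infty$ limit with the time integral rigorous---uniformly over the large range of block counts and strongly enough for an almost sure rather than merely $L^1$ conclusion---is where the regular-variation hypothesis does its real work and where the principal technical effort lies.
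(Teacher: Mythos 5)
You should first note a mismatch of targets: the paper never proves Theorem~\ref{thm:bbl} at all --- it is quoted verbatim from \cite{BBL12} (their Theorem~3, restated with mutation rate $\theta/2$), and the present paper only \emph{uses} it. So the relevant benchmark is the proof in \cite{BBL12}, and your outline does follow that proof's broad architecture: Poissonization to reduce $\xi_i^{(n)}$ to the branch lengths $B_i^{(n)}$, the almost sure speed of coming down from infinity $N_t \sim v(t)$ with $v' = -\Psi(v)$, and the change of variables through the Laplace exponent to obtain the total-length asymptotics. Your constants are all correct: $\Psi(q)\sim Bq^{\alpha}$ with $B = A\Gamma(2-\alpha)/(\alpha(\alpha-1))$, $C_{A,\alpha} = 1/\bigl(B(2-\alpha)\bigr)$, and the probabilities $p_i$ do sum to one with generating function $1-(1-x)^{2-\alpha}$.

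The genuine gap is that the entire $i$-dependence of the theorem --- which is the whole content of a \emph{sampling formula}, as opposed to a total-length asymptotic --- rests on the one step you explicitly leave open: identifying the limiting block-size profile $g_i(\beta)$ and proving $(2-\alpha)\int_0^1 g_i(\beta)\,\beta^{1-\alpha}\,d\beta = p_i$, together with almost sure error control uniform over the whole descent from $n$ blocks to $O(1)$ blocks. You assert the answer (a ``hypergeometric recursion'' that ``collapses'' to the pgf $1-(1-x)^{2-\alpha}$) but never derive it, and this is precisely where the main technical work of \cite{BBL12} lives (their small-time analysis of block frequencies via comparison with the stable continuous-state branching process); it is not a routine verification that can be waved through. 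As written, your argument establishes only $L_n/n^{2-\alpha}\to C_{A,\alpha}$, i.e.\ the statement summed over $i$. Two smaller but real lacunae: (i) your transfer from $B_i^{(n)}$ to $\xi_i^{(n)}$ via ``variance equals mean'' plus Borel--Cantelli fails if done with Chebyshev, since the resulting bound is of order $n^{-(2-\alpha)}$ and $\sum_n n^{-(2-\alpha)} = \infty$ for $\alpha\in(1,2)$; you need Poisson exponential tail bounds (which do give summability, since $e^{-cn^{2-\alpha}}$ is summable) or a subsequence-plus-interpolation argument. (ii) The speed-of-CDI theorem controls the coalescent started from infinity as $t\downarrow 0$; to apply it to the $n$-coalescent you must invoke the consistency coupling and show that $N^{(n)}_t$ tracks $v(t)$ uniformly over the window carrying the length, which your sketch does not address.
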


We formulate Thm.\ \eqref{thm:bbl} with a mutation rate $\theta/2$
instead of $\theta$ as in \cite{BBL12} in order to be consistent with
\cite{F95} and other literature.  The asymptotic results should be of
direct relevance for us in the case of large sample sizes.  However,
the rate of convergence of the asymptotic results in terms of sample
size $n$ and the parameter $\alpha$ seems to be unresolved.  The issue
of rate of convergence will be investigated below using simulations.


From a statistical point of view an ideal result would be an
(asymptotic) result for the joint distribution of the normalized SFS
for general Lambda-coalescents, and this may well soon be within
reach. For the Kingman case, this has already been achieved: There is
asymptotic normality for the joint distribution of tree-lengths.
\cite{KS13} obtain an asymptotic normality result for the scaled
branch lengths associated with the Kingman coalescent, from which
Kersting and Stanciu deduce that the asymptotic joint distribution of
the frequency spectrum is independent Poisson.





As a partial analogue, \cite{KSW12} prove convergence of the total
length of {\em external} branches of the beta-coalescent
\eqref{eq:betameasure} in distribution to a stable random variable
with index $\alpha$.  Their result indicates that the joint
distribution of the branch lengths associated with a Lambda-coalescent
will not be asymptotically normal, but might instead be a multivariate
stable distribution.  However, the rate of convergence is not known in
the results of \cite{KSW12}.



\begin{center}
{\sc The expected value, variance and covariance of derived mutations in the SFS associated with a Lambda-coalescent }
\end{center}
In this section we compute the expected value $\EE{\xi^{(n)}_i}$, the
variance $\Var{\xi^{(n)}_i}$, and the covariance
$\Cov{\xi^{(n)}_i}{\xi^{(n)}_j}$  of the SFS associated
with Lambda coalescents.  While exact solutions for a finite sample
size $n$ appear quite hard to obtain due to the multiple-merger
property of Lambda-coalescents, we derive recursions for these
quantities.    \cite{F95} (cf.\ Theorem
\ref{thm:Fu} in Appendix for reference) obtained closed-form
expressions for these quantities when associated with the Kingman
coalescent.

As above, let $\xi^{(n)}_i$ denote the random number of derived
mutations in $i$ copies in a $n$ sample with mutation rate $\theta/2$
and genealogy governed by a Lambda-coalescent.  Before stating our
first result, we recall some simple properties of the block-counting
process associated with a Lambda-coalescent from \cite{BB08}.  The
block-counting process $(Y_t)$ simply counts the number of ancestral
lineages present each time.  Recall the corresponding rates
$\lambda_{n,k}$ from \eqref{eq:lambdarates}.  When we refer to number
of `leaves' $(n)$, we will mean the initial sample size. Thus, a
`leaf' will refer to one sampled (DNA) sequence.  Thus, by using
leaves rather than sequences we emphasize that we are thinking of the
sequences as vertices in a graph describing the ancestral relations of
the leaves.  Let $(Y_t)_{t \ge 0}$ be the block counting process of
our Lambda-coalescent, which is a continuous-time Markov chain on $\N$
with jump rates
\[ 
q_{ij} = {i \choose i-j+1} \lambda_{i, i-j+1}, \quad i > j \ge 1.
\]

The total jump rate away from state $i$ is
$-q_{ii} = \sum_{j=1}^{i-1} q_{ij}$. 
By
\begin{equation} 
\label{def:pij}
p_{ij} := \frac{q_{ij}}{-q_{ii}}
\end{equation}
we denote  the transition probabilities of the embedded discrete  skeleton chain. Let 
\begin{equation}%
g(n,m) := \E_n\left[ \int_0^\infty \one_{( Y_s = m )} \, ds \right] 
\quad \mbox{for $\quad n \ge m \ge 2$} 
\end{equation}%
be the expected amount of time that $Y_t$, starting from $n$, spends in $m$. Decomposing according to the first jump of $Y_t$ gives a recursion for $g(n,m)$, 
\begin{equation}
\label{eq:grec} 
g(n,m)  =  \sum_{k=m}^{n-1} p_{nk} 
g(k,m), \quad 
n > m \ge 2, \quad \mbox{ and } \quad 
g(m,m) =  \frac{1}{-q_{mm}}, \quad m \ge 2.  
\end{equation} 


\textbf{The expected value $\EE{\xi^{(n)}_i}$ of derived mutations} 

For $1 < k \le n$,  as in \cite{F95}, an important quantity will be  the probability
$${p^{(n)}[k,b]}, \quad \mbox{ for } b\in \{1, \dots, n-(k-1)\}, $$ 
that {\em in a $\Lambda$-coalescent starting from $n$ leaves,
  conditioned that there are at some point in time exactly $k$
  branches, a given one of these $k$ branches (e.g.\ the first, if we
  think of some ordering) subtends exactly $b$ leaves} (see
Figure~\ref{fig:illustr1}). 


%

\begin{figure}
  \caption{Illustration of the occurrence of an event with probability
    $p^{(n)}[k,b]$, with $k = 2$, and $2 \le b = n - 3$.  Ancestral
    lineage labelled as (2,1), present when there are only two active
    ancestral lineages, is ancestral to $b$ leaves ($b$ sampled
    sequences).  In this example, the first merger in the genealogical
    history of the $n$ leaves is a 3-merger not involving any of the
    encircled $b$ leaves. The second event is a $b$-merger with $b$
    edges subtending the encircled $b$ leaves merging to block $(2,1)$
    (square).  A `level' refers to the values of the lineage-counting
    process $Y_t$.  }
  \label{fig:illustr1}%
  \vspace{-1cm}
\includegraphics[width=6.5in,height=8in]{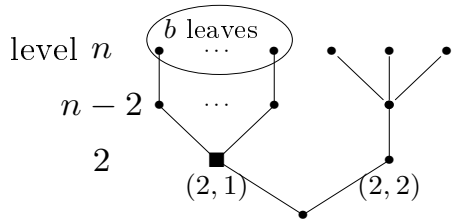}
\vspace{-15cm}
\end{figure}%

%

%



A recursion for $p^{(n)}[k,b]$ (\ref{eq:rnk}), given in Appendix (prop.\ \ref{prop:r}), can be solved numerically, separately for each $k$.   
Proposition~\ref{prop:r} allows us to derive a recursion for the
expected frequency spectrum associated with Lambda-coalescents, given in the following proposition.

\begin{prop}
\label{prop:Exi}
Under the above assumptions, we have, for $i = 1, \dots, n-1$,
$$
\E\left[\xi_i^{(n)}\right] = \frac{\theta}{2}\sum_{k=2}^{n-i+1} {p^{(n)}[k,i]} \cdot k \cdot g(n,k).
$$
\end{prop}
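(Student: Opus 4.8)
The plan is to reduce $\EE{\xi_i^{(n)}}$ to the expected total length of the branches subtending exactly $i$ leaves, and then to evaluate that length by decomposing the genealogy according to the levels of the block-counting process $(Y_t)$. First I would use the infinitely-many-sites mechanism: conditionally on the realised coalescent tree, mutations are deposited along the branches as a Poisson process of rate $\theta/2$, and a mutation placed on a branch subtending exactly $i$ leaves yields precisely a derived variant carried by $i$ of the $n$ sampled sequences. Writing $B_i^{(n)}$ for the total length of all branches subtending exactly $i$ leaves, the conditional mean of $\xi_i^{(n)}$ given the tree is $(\theta/2)\,B_i^{(n)}$, so by the tower property
$$
\EE{\xi_i^{(n)}} = \frac{\theta}{2}\,\E\!\left[B_i^{(n)}\right],
$$
and it remains to compute $\E[B_i^{(n)}]$.

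Next I would exploit that $(Y_t)$ is strictly decreasing, so each value $k \in \{2,\dots,n\}$ is attained during at most one (possibly empty) time interval, which I call level $k$. Throughout level $k$ there are exactly $k$ branches present, each subtending a fixed set of leaves. Let $T_k$ be the total time spent at level $k$, so that $\E[T_k] = g(n,k)$, and let $N_k^{(i)}$ be the number of the $k$ branches at level $k$ subtending exactly $i$ leaves. Since each branch is present precisely during the levels it spans,
$$
B_i^{(n)} = \sum_{k=2}^{n} N_k^{(i)}\, T_k,
\qquad
\E\!\left[B_i^{(n)}\right] = \sum_{k=2}^{n} \E\!\left[N_k^{(i)}\, T_k\right].
$$

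The crux, and the step I expect to be the main obstacle, is to factorise $\E[N_k^{(i)}\, T_k]$. Let $A_k$ be the event that level $k$ is reached; on $A_k^c$ both factors vanish. On $A_k$, I would invoke the Markov property of $(Y_t)$: given that the chain occupies state $k$, the holding time there is exponential with parameter $-q_{kk}$ and is independent of the past of the process, hence independent of the topology built on the way down from $n$ to $k$, which is exactly what fixes $N_k^{(i)}$. Thus $N_k^{(i)}$ and $T_k$ are conditionally independent given $A_k$, so that, using $\E[T_k\mid A_k] = 1/(-q_{kk})$ together with $g(n,k) = \PP{A_k}/(-q_{kk})$,
$$
\E\!\left[N_k^{(i)}\, T_k\right] = \PP{A_k}\,\E\!\left[N_k^{(i)}\mid A_k\right]\,\frac{1}{-q_{kk}} = \E\!\left[N_k^{(i)}\mid A_k\right]\, g(n,k).
$$
Here I would then use exchangeability of the $\Lambda$-coalescent: conditional on $A_k$ the $k$ branches at level $k$ are exchangeable, each subtending exactly $i$ leaves with the common probability $p^{(n)}[k,i]$, whence $\E[N_k^{(i)}\mid A_k] = k\,p^{(n)}[k,i]$ and $\E[N_k^{(i)}\,T_k] = k\,p^{(n)}[k,i]\,g(n,k)$.

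Finally I would record the range of summation. A single branch at level $k$ can subtend at most $n-(k-1)$ leaves, since the remaining $k-1$ branches subtend at least one leaf each; hence $p^{(n)}[k,i] = 0$ whenever $i > n-k+1$, i.e.\ whenever $k > n-i+1$. Summing the factorised contributions over $k = 2,\dots,n$ and discarding the vanishing terms gives $\E[B_i^{(n)}] = \sum_{k=2}^{n-i+1} p^{(n)}[k,i]\cdot k\cdot g(n,k)$, and multiplying by $\theta/2$ yields the claimed identity.
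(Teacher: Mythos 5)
Your proof is correct and takes essentially the same route as the paper's: both reduce $\E\big[\xi_i^{(n)}\big]$ to $(\theta/2)$ times the expected total length of branches subtending $i$ leaves, and decompose that length over the levels of the block-counting process, with $k\, p^{(n)}[k,i]\, g(n,k)$ as the level-$k$ contribution. The only difference is expository: you spell out the conditional-independence factorisation $\E\big[N_k^{(i)} T_k\big] = k\, p^{(n)}[k,i]\, g(n,k)$ (via the Markov property and $g(n,k)=\PP{A_k}/(-q_{kk})$), which the paper's one-line proof asserts implicitly.
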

The proof proceeds by decomposing according to the `level' (the
values taken by the lineage-counting process $Y_t$) of the coalescent
tree where a mutation arises - see Appendix for details.
Proposition~\ref{prop:Exi} and its proof are a natural extension of
the arguments of \cite{GT98} to the multiple merger case.


The expression in Prop.\ \eqref{prop:Exi} can be used to define a
quantity $(\varphi_n(i))$, the  `expected normalized frequency spectrum',  which describes the shape of the genealogy and 
does not depend on $\theta$,  as follows.  Let $B_i^{(n)}$ denote the
random total length of branches subtending $i$ leaves, for $1 \le i <
n$.  Then, clearly, $\EE{\xi_i^{(n)}} = (\theta/2)\EE{B_i^{(n)}}$,
with
$$
\EE{B_i^{(n)}} =  \sum_{k=2}^{n-i+1} {p^{(n)}[k,i]} \cdot k \cdot g(n,k).  
$$
If we now let $B^{(n)}$ denote the random total length of the
genealogy started from $n$ leaves, and by $\xi^{(n)} = \xi_1^{(n)} + \cdots + \xi_{n-1}^{(n)}$ we denote the
random total number of segregating sites in a sample of size $n$ gene
copies (DNA sequences), we have
$$
\EE{\xi^{(n)}} = (\theta/2)\EE{B^{(n)}}.
$$
If $T_\ell$ denotes the random time during which  there are $\ell$ active  lineages, for $2 \le \ell \le n$, we have
$$
\EE{B^{(n)}} =  \sum_{\ell=2}^{n} \ell \EE{ T_\ell }.
$$
We define $\varphi$ as 
\begin{equation} 
  \label{eq:expnorm}
\varphi_n(i) := \frac{\E\left[ \xi_i^{(n)} \right] }{\sum_{\ell=2}^{n} \ell \E\big[ T_\ell \big]}
= \frac{\sum_{k=2}^{n-i+1} {p^{(n)}[k,i]}\, \cdot \, k\, \cdot \, g(n,k)}{%
\sum_{\ell=2}^{n} \ell g(n,\ell)}. 
\end{equation}
Loosely speaking, the probability that a mutation, under the
infinitely many sites assumption \citep{K69,W75}, with known ancestral types, appears
$i$ times in a sample of size $n$ is $\varphi_n(i)$.  The quantity
$\varphi_n(i)$ is a quotient of expectations which can be interpreted
as size-biasing - it should not be confused with averaging over the
quotient of these quantities, e.g.\ obtained from an empirical study,
which has a different interpretation and will, at least for finite
sample sizes, take different values.  However, one can apply
$\varphi_n(i)$ as the main ingredient in a pseudo-likelihood approach
to infer coalescence parameters, as we do  below.




\subsection*{The covariance $\Covn{}{\xi_i^{(n)}}{\xi_j^{(n)}}$  of derived mutations}
%
%

In order to compute covariances one needs to compute the probability
that two edges (ancestral lineages) drawn at random without
replacement (ordered sampling) from the past, say from $Y$-states $k$
and $\ell \le n$, produce $i$ resp.\ $j$ subtended leaves.  To this
end, we need to distinguish three situations, namely whether our two
edges are taken from the same `layer' (that is, when $Y$ is in the
same state $k$ for both edges), or, if this is not the case, whether
one of our two edges is a descendant of the other (`nested case'), or
not (`unnested case'). In our calculations the leaves are {\em
  unlabelled}.

\paragraph{Case 1: two edges in the same layer}
For $2 \le k \le n$ let ${p}^{(n)}_{\mathrm{eq}}[k; i,j]$ be the probability 
that in an $n$-$\Lambda$-coalescent, conditioned on there being 
$k \ge 2$ ancestral lineages at some time in the past, if we draw two of the $k$ lines 
at random without replacement from these $k$ (ordered sample), the 
first drawn line has $i$ and the second $j$ subtended leaves (see Figure~\ref{fig:peqill}). 
A recursion for $p_{\mathrm{eq}}^{(n)}$ (prop.\ \ref{prop:1}) and proof are given in Appendix.




\begin{figure}
  \caption{Illustration of the occurrence of the event with probability $p^{(n)}_{\mathrm{eq}}[k;i,j]$ with $i = j = 1$.  The ancestral lines $(k,1)$ and  $(k,2)$ are each ancestral to one leaf.  Level $k$ is reached in one  $n-k+1$-merger not involving the two leaves  subtended by $(k,1)$ or $(k,2)$.        }
  \label{fig:peqill}%
   \vspace{-1cm}
\includegraphics[width=6.5in,height=8in]{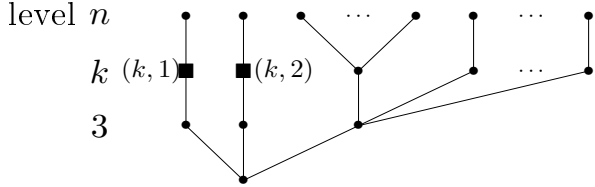}
\vspace{-16.5cm}

\end{figure}%

%


\paragraph{Case 2: Two edges in different layers, unnested case} 

For $2\le k < \ell \le n$ let ${p}^{(n)}_{\mathrm{un}}[k, i; \ell ,j]$ be the
probability that in an $n$-$\Lambda$-coalescent, conditioned on there
being $k$ lines at some time in the past and $\ell$ lines at some
(other) time in the past, if we draw one of the $k$ lines at random
and independently one of the $\ell$ lines at random (ordered sample),
the first drawn line has $i$ and the second $j$ subtended leaves, and
the second is not a descendant of the first (see
Figure~\ref{fig:punillustr}).  A recursion for $p_{\mathrm{un}}^{(n)}$
(prop.\ \ref{prop:2}) is obtained in a similar way as for
$p_{\mathrm{eq}}^{(n)}$, and is given in Appendix along with a proof.

%

\begin{figure}
  \caption{Illustration of the occurrence of the event with
    probability $p^{(n)}_{\mathrm{un}}[k,i;\ell,j]$.  Edge $(\ell,1)$
    in level $\ell$ subtends $j$ leaves, while edge $(k,1)$ in level
    $k$ subtends $i$ leaves; $2 \le k < \ell \le n$, and the two encircled 
    groups of $i$ and $j$ leaves are distinct.  Level $\ell$ is reached in one merger including all of the $j$ leaves.   The following merger  includes all of the  $i$ leaves.     }
  \label{fig:punillustr}%
  
   \vspace{-1cm}
\includegraphics[width=6.5in,height=8in]{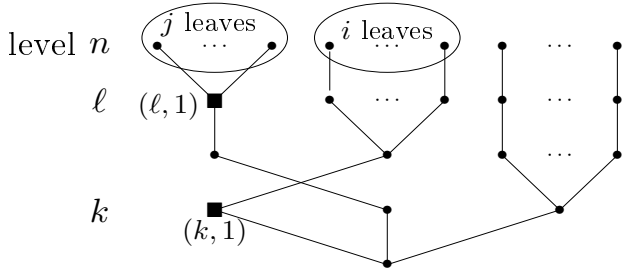}
\vspace{-15cm}

\end{figure}%

%


\paragraph{Case 3: Two edges in different layers, nested case.} 

For $2\le k<\ell\le n$ let ${p}^{(n)}_{\mathrm{ne}}[k, i; \ell,j]$ be
the probability that in an $n$-$\Lambda$-coalescent, conditioned on
there being $k$ lines at some time in the past and $\ell > k$ lines at
some more recent time in the past, if we draw (ordered sample) one of
the $k$ lines at random ($(k,1)$, say, see
Figure~\ref{fig:punillustr2}) and independently one of the $\ell$
lines ($(\ell, 1)$, say, see Figure~\ref{fig:punillustr2}) at random,
the first drawn line has $i$ and the second $j$ subtended leaves, and
the second $(\ell, 1)$ is a descendant of the first ($(k,1)$; so
necessarily $i \ge j$, otherwise the probability is zero; see
Figure~\ref{fig:punillustr2}). The recursion for
$p_{\mathrm{ne}}^{(n)}$ is given in proposition \ref{prop:3} in
Appendix, with the proof following similar arguments to  proofs of propositions \eqref{prop:1} and \eqref{prop:2}.



\begin{figure}
  \caption{Illustration of the occurrence of the event with
    probability $p^{(n)}_{\mathrm{ne}}[k,i;\ell,j]$.  Edge $(\ell,1)$
    in level $\ell$ subtends the encircled subset of $j$ leaves, while edge $(k,1)$ in level
    $k$ subtends $(\ell,1)$ and the larger encircled set  of $i$ leaves necessarily containing the set of $j$ leaves.  }
  \label{fig:punillustr2}%
  
   \vspace{-1cm}
\includegraphics[width=6.5in,height=8in]{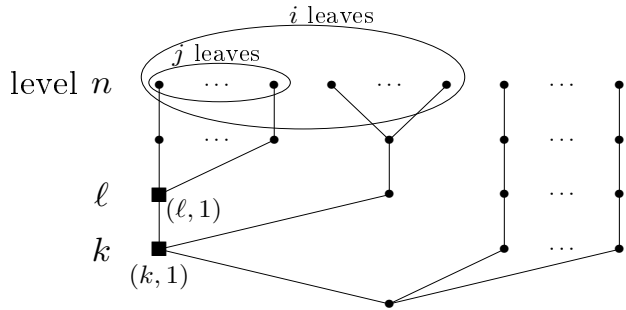}
\vspace{-15cm}

\end{figure}%

%
Unfortunately,   the computational complexity of the recursions for
$p^{(n)}_{\mathrm{eq}}$ \eqref{eq:p.eq.rec}, $p^{(n)}_{\mathrm{un}}$ \eqref{eq:p.un.rec}, and
$p^{(n)}_{\mathrm{ne}}$ \eqref{eq:p.ne.rec}, as they stand, is $O(n^5)$, thus restricting
numerical computations to relatively small values of $n$.

\paragraph{Second moments and covariance of the frequency spectrum}
Now we can put our recursions for $p^{(n)}_{\mathrm{eq}}$,
$p^{(n)}_{\mathrm{un}} $, and $p^{(n)}_{\mathrm{ne}}$, together to
obtain the covariances $\Cov{\xi_i^{(n)}} {\xi_{j}^{(n)}}$.  Our main
result on the site frequency spectrum associated with
Lambda-coalescents, an expression for
$\E\left[\xi_i^{(n)}\xi_j^{(n)}\right]$ in terms of
$p^{(n)}_{\mathrm{eq}}$, $p^{(n)}_{\mathrm{un}} $,
$p^{(n)}_{\mathrm{ne}}$, $p^{(n)}[k,i]$, and $g(n,m)$, will now be
stated - a proof is in Appendix (section~\ref{sec:proofcov}).
\begin{theorem}
\label{thm:cov}
For $1 \le i, j < n$, $2 \le i + j \le n$, we have
\begin{align*}
\E&\big[\xi^{(n)}_i \xi^{(n)}_j \big] \\
&= \, \frac{\theta^2}{4} \sum_{k=2}^n k(k-1) 
{p}^{(n)}_{\mathrm{eq}}[k; i,j]
\frac{g(n,k)}{g(k,k)} 
\frac{2}{(-q_{kk})^2} \\
& \,\;\; + \one_{(i=j)} \sum_{k=2}^n k {p^{(n)}[k,i]} \frac{g(n,k)}{g(k,k)} 
\Big( \frac{\theta}{2} \frac{1}{-q_{kk}} + 
\frac{\theta^2}{4}\frac{2}{(-q_{kk})^2} \Big)\\
& \;\; + 
\frac{\theta^2}{4} \sum_{k=3}^n \sum_{\ell=2}^{k-1} k \, k' 
\frac{{p}^{(n)}_{\mathrm{un}}[k,i ; \ell,j] 
+ {p}^{(n)}_{\mathrm{ne}}[k,i ; \ell,j] + 
{p}^{(n)}_{\mathrm{un}}[k,j ; \ell, i] 
+ {p}^{(n)}_{\mathrm{ne}}[k,j ; \ell, i]}{(-q_{kk})(-q_{\ell\ell})} \\
& \phantom{MMMMMMMMMMMMMMMMMMMMMMMMMMM} \times
\frac{g(n,\ell)}{g(\ell,\ell)} \frac{g(\ell,k)}{g(k,k)}.
\end{align*}
\end{theorem}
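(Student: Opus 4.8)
The plan is to condition on the entire genealogy and exploit the fact that, under the infinitely-many-sites model, mutations are placed on the coalescent tree by a Poisson process of rate $\theta/2$ along the branch lengths. Writing $Y$ for the block-counting process, I would decompose each spectrum entry according to the level at which a mutation arises, $\xi_i^{(n)} = \sum_{k=2}^n \xi_i^{(n)}[k]$, where $\xi_i^{(n)}[k]$ counts the mutations that fall, during the sojourn of $Y$ in state $k$, on one of the $k$ branches present at that level which subtends exactly $i$ leaves. Since every mutation falls during a unique sojourn $Y=k$ on a branch subtending a well-defined number of leaves, this is a genuine decomposition, and expanding the product gives $\E[\xi_i^{(n)}\xi_j^{(n)}] = \sum_{k,\ell}\E[\xi_i^{(n)}[k]\,\xi_j^{(n)}[\ell]]$, which I would split into the same-layer part $k=\ell$ and the different-layer part $k\neq\ell$. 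Throughout I would use two structural facts: (i) conditionally on the jump chain, the sojourn times $T_k$ are independent exponentials of rate $-q_{kk}$, independent of the tree topology and hence of all subtended-leaf counts and nesting relations; and (ii) $g(n,k)/g(k,k)=\P_n(Y\text{ visits }k)$, which holds because $Y$ is non-increasing so each level is entered at most once, whence $g(n,k)=\P_n(Y\text{ visits }k)/(-q_{kk})$ while $g(k,k)=1/(-q_{kk})$.

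For the same-layer term I would separate the contribution of a single edge (the diagonal) from that of two distinct edges at level $k$. Conditionally on $T_k$, the mutation count on a fixed branch is $\mathrm{Poisson}((\theta/2)T_k)$; for two distinct branches the counts are independent, so the off-diagonal conditional expectation is $(\theta/2)^2 T_k^2$ per ordered pair, while the diagonal one is $(\theta/2)T_k + (\theta/2)^2 T_k^2$ per branch. Averaging over $T_k$ replaces $T_k$ by $1/(-q_{kk})$ and $T_k^2$ by $2/(-q_{kk})^2$. It then remains to count edges: conditionally on visiting $k$, the expected number of ordered pairs of distinct branches with subtended-leaf counts $(i,j)$ equals $k(k-1)\,p^{(n)}_{\mathrm{eq}}[k;i,j]$ by the very definition of $p^{(n)}_{\mathrm{eq}}$ as the probability of such a draw without replacement, and the expected number of single branches subtending $i$ leaves equals $k\,p^{(n)}[k,i]$. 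Multiplying by the visit probability $g(n,k)/g(k,k)$ and summing over $k$ produces exactly the first line (distinct edges, all $i,j$) and the $\one_{(i=j)}$ line (the diagonal correction, present only when $i=j$) of the theorem.

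For the different-layer part I would reorganise the sum over unordered level pairs with $k>\ell$, writing $\sum_{k\neq\ell}\xi_i[k]\xi_j[\ell]=\sum_{k>\ell}\big(\xi_i[k]\xi_j[\ell]+\xi_i[\ell]\xi_j[k]\big)$; the two orderings are precisely what produce the swap $(i,j)\leftrightarrow(j,i)$ in the numerator. Because the sojourns $T_k$ and $T_\ell$ occupy disjoint time intervals, the relevant mutation counts are conditionally independent even when they sit on the same physical edge, so the conditional cross-expectation is $(\theta/2)^2 T_kT_\ell$ times the number of ordered branch pairs, one at each level, with the prescribed subtended-leaf counts. I would split these pairs according to whether the more recent branch descends from the more ancient one (nested, governed by $p^{(n)}_{\mathrm{ne}}$) or not (unnested, governed by $p^{(n)}_{\mathrm{un}}$); conditionally on visiting both levels the expected number of such ordered pairs is $k\ell$ times the corresponding probability. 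Averaging the holding times gives the factor $1/((-q_{kk})(-q_{\ell\ell}))$, and the joint visit probability factorises by the Markov property as $\frac{g(n,k)}{g(k,k)}\cdot\frac{g(k,\ell)}{g(\ell,\ell)}$ for $k>\ell$. Collecting the four nested/unnested-times-two-orderings pieces yields the final double sum.

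The routine parts are the Poisson moment identities and the exponential-time averages; the genuine work lies in the bookkeeping of the last paragraph. The main obstacle is to set up the correspondence between physical pairs of edges in the random tree and the three conditioned sampling probabilities cleanly: one must verify that sampling branches uniformly at the two cross-sections $Y=k$ and $Y=\ell$ and classifying the pair as equal-layer, nested, or unnested exhausts all ordered pairs of mutation-carrying edge-segments exactly once, that no pair is double-counted across the same-layer and different-layer sums, and that the descendant relation forces $i\ge j$ in the nested case so that the vanishing entries are handled automatically. Making the joint-visit probability and the symmetrisation over $(i,j)$ consistent with the index conventions of $p^{(n)}_{\mathrm{un}}$ and $p^{(n)}_{\mathrm{ne}}$ is where the care is most needed.
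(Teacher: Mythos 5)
Your proposal is correct and follows essentially the same route as the paper's own proof: the paper likewise writes $\xi^{(n)}_i=\sum_{k=2}^n\sum_{\ell=1}^k \one_{(L^{(n)}_{k,\ell}=i)}M^{(n)}_{k,\ell}$, splits $\E\big[\xi^{(n)}_i\xi^{(n)}_j\big]$ into same-layer (off-diagonal pairs plus the $i=j$ diagonal) and cross-layer (nested plus unnested, symmetrised over the two orderings of the levels) contributions, evaluates the mutation moments by the same exponential/Poisson computations, and removes the conditioning in $p^{(n)}_{\mathrm{eq}}$, $p^{(n)}_{\mathrm{un}}$, $p^{(n)}_{\mathrm{ne}}$, $p^{(n)}[k,i]$ via the hitting probabilities $g(n,k)/g(k,k)$ exactly as you do. One remark: your joint-visit factor $\frac{g(n,k)}{g(k,k)}\cdot\frac{g(k,\ell)}{g(\ell,\ell)}$ for $k>\ell$ is the correct one and agrees with the paper's proof (its equation for $\P\{L^{(n)}_{k,\ell}=i,L^{(n)}_{k',\ell'}=j\}$ with $k<k'$); the factor $\frac{g(n,\ell)}{g(\ell,\ell)}\frac{g(\ell,k)}{g(k,k)}$ printed in the theorem has the indices transposed, since $g(\ell,k)=0$ when $\ell<k$.
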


In Figures~\ref{cov1} and \ref{cov2} we graph the covariances
\eqref{eq:EBiBj} of the branch lengths $B_i^{(n)}$ and $B_j^{(n)}$,
where $B_i^{(n)}$ denotes the total length of branches subtending $i$
of $n$ leaves, and the coalescent process is the Beta$(2-\alpha,
\alpha)$-coalescent.  
The variances and covariances become completely dominated by the
variance of the length of external branches when multiple mergers are more prominent
in the genealogy, i.e.\ when $\alpha$ is close to 1
(Figure~\ref{cov1}).  This means that the variances and covariances of
the SFS will be dominated by the variance in the number of singletons,
when associated with Lambda-coalescents, and multiple mergers dominate
the genealogy, as would occur in a high fecundity population with
frequent occurrence of large offspring number events.  In the beta
coalescent \eqref{eq:betameasure}, large offspring number events
become more prominent as $\alpha$ approaches 1, and less prominent as
$\alpha$ approaches 2.  And even when $\alpha$ is more `modest' at
$1.5$, the covariances are still quite small compared to the
variances.  Figure~\ref{cov2} shows the covariances on their own
scale, i.e.\ with the variances and the diagonal covariances
$\Cov{B_i^{(n)}}{B_{n-i}^{(n)}}$ set at zero.  These results suggest
that one may model the site-frequency spectrum of an
$n$-$\Lambda$-coalescent as composed of uncorrelated (but not
independent) components $\xi_i^{(n)}$ for $n$ large enough.

For fixed $n$, the covariances between $B_i^{(n)}$ and $B_{n-i}^{(n)}$
seem to be larger than the covariances off the diagonals. As for the
Kingman case, this can be explained by considering the first branching
event in the coalescent history forward in time after the root. This
has substantial probability to be a binary split, say in subfamilies
of size $i$ and $n-i$, which yields a positive correlation between
$B_i^{(n)}$ and $B_{n-i}^{(n)}$.


\clearpage
\pagebreak
\newpage

\begin{figure}[h!]
  \caption{The covariance $\Cov{B_i^{(n)}}{B_j^{(n)}}$ $(C)$ for sample size $n = 40$ for the Beta$(2-\alpha, \alpha)$-coalescent with $\alpha$ as shown.  The lower  panels show only the  covariance terms,  with the variance terms  $\Var{B_i^{(n)}}$ set equal to zero.  The covariances range between  $-0.094$ and $14.85$ when $\alpha = 1.05$, and between  $-0.045$  and   $2.06$ when $\alpha = 1.5$.  One has $\Cov{\xi_i^{(n)}}{\xi_j^{(n)}} = (\theta^2/4)\Cov{B_i^{(n)}}{B_j^{(n)}}$   for $i \neq j$, and  $\Var{\xi_i^{(n)}} = (\theta/2)\EE{B_i^{(n)}} + (\theta^2 /4)\Var{B_i^{(n)}}$.    }
  \label{cov1}
 \vspace{-1cm}
  \includegraphics[width=7in,height=7in]{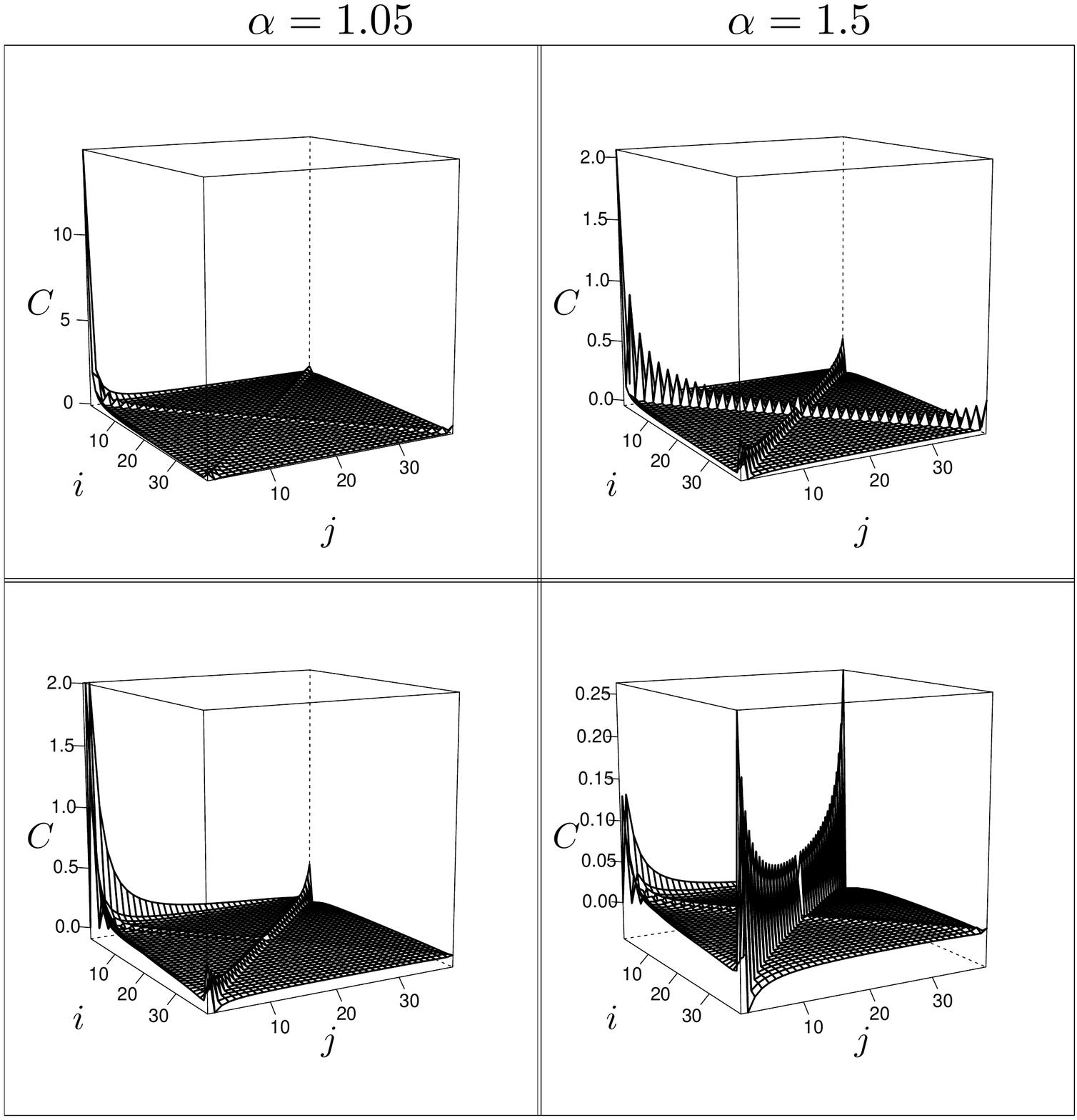}
  \vspace{-3cm}
  \end{figure}%

\clearpage
\pagebreak
\newpage


\begin{figure}[h!]
  \caption{ The covariance $\Cov{B_i^{(n)}}{B_j^{(n)}}$ $(C)$ for sample size $n = 40$ for the Beta$(2-\alpha, \alpha)$-coalescent with $\alpha$ as shown, and with covariances  $\Cov{B_i^{(n)}}{B_{n-i}^{(n)}}$ and the variance terms set equal to zero.    }
\vspace{-1cm}
\vspace{-1cm}
\label{cov2}
\includegraphics[width=7in,height=7in]{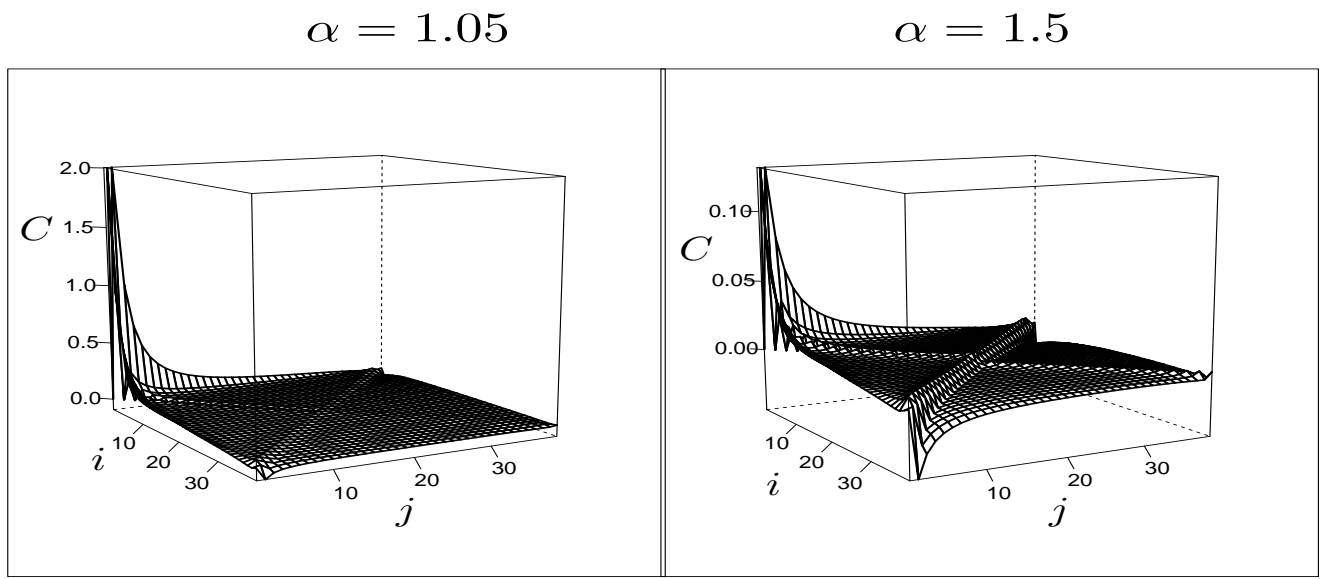}
  \vspace{-3cm}
  \end{figure}%

\clearpage
\pagebreak
\newpage

Extending our results to the case of Xi-coalescents and multiple loci
is a natural follow-up task, but lies beyond the scope of the current
discussion and will be part of future research.

\medskip

\begin{center}%
{\sc  Simulation studies and data analysis}
\end{center}%

{\bf Reliability of the  asymptotic results of  \cite{BBL12}}

The asymptotic results obtained by \cite{BBL12} (see
Theorem~\eqref{thm:bbl}) can be used to obtain estimates of $\alpha$
provided the asymptotic results are `close' to the true value,
\emph{and} one has an estimate of $\theta$.  Ideally, such inference
should take into account correlation between loci.  Now we address the
question how large the sample must be for the SFS to be well
approximated by the asymptotic results of \cite{BBL12}, and, indeed,
if one can trust the asymptotic results for all values of $\alpha$.

Figure~\ref{fig:bblfitnmed} compares the asymptotic results in
Theorem~\eqref{thm:bbl} to simulated values for sample size $n=500$
(left column) and $n=1000$ (right column), and with $\alpha$ varying
as shown.  Even though we have recursions to obtain the exact expected
values for finite $n$, we were not able, with present computer power,
to compute covariances for $n \ge 100$.  The estimated expected values
and the ones from the recursions are in agreement (results not shown).
Even for $n=10^4$ (Figures~\ref{fig:bblfitnbig1} and
\ref{fig:bblfitnbig2}), which would be a rather large sample, the
asymptotic results (grey bars) match the simulated ones (white bars)
rather poorly when $\alpha < 1.5$, but the correspondence improves as
$\alpha$ increases, and at $\alpha = 1.5$ the match is already quite
good.  For more common sample sizes $n \le 10^3$
(Figures~\ref{fig:bblfitnsmall} and \ref{fig:bblfitnmed}) the match is
not very good for $\alpha \le 1.5$.  Relying on asymptotic results to
estimate $\alpha$ might thus give misleading results, even when sample
size is quite large, as demonstrated in Figure~\ref{fig:bblsingles}
which compares the asymptotic results for the singletons to the scaled
(with $n^{\alpha - 2}$) exact expected values as sample size $n$ and
$\alpha$ vary as shown.  The poor fit of the asymptotic results to
exact values of the SFS, at least for lower values of $\alpha$,
 suggest that exact results for the SFS, if only in the form
of recursions, may be a better choice for inference methods.

\clearpage
\pagebreak
\newpage

\clearpage
\pagebreak
\newpage


\begin{figure}[h!]
  \caption{The results of \cite{BBL12}  (Thm.\ \ref{thm:bbl}) regarding convergence of the scaled unfolded frequency spectrum  (grey bars) compared to  simulations (white bars) varying over $\alpha$ as shown,  with $\theta = 1$.     Panels in left column are for sample size  $n=500$, panels in right column are for sample size $n=1000$. The whiskers represent standard deviation.     }
  \label{fig:bblfitnmed}%
  \vspace{-1cm}
\hspace{-1in}\includegraphics[width=8.5in,height=9in]{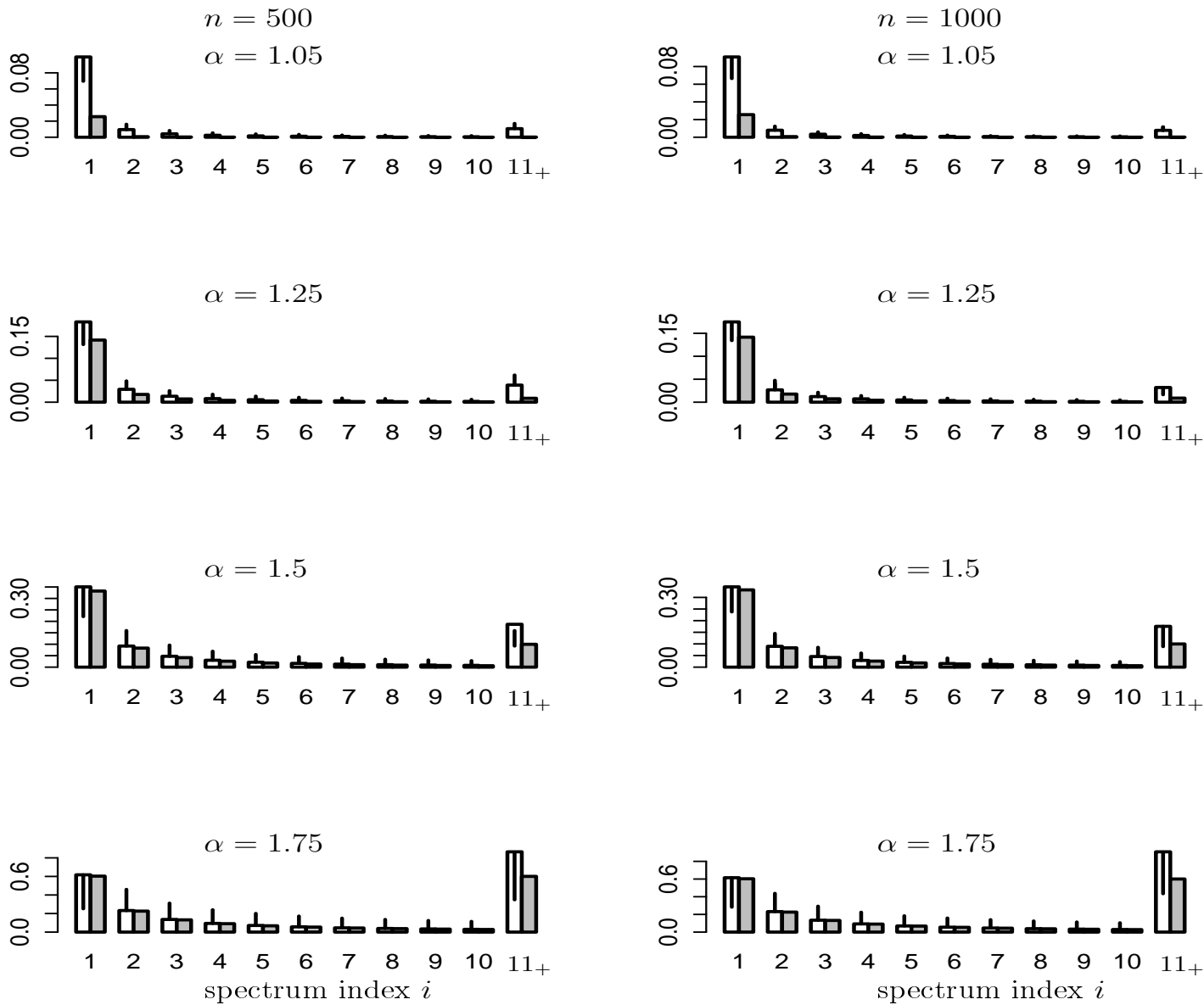}
\vspace{-1cm}
\end{figure}%

\clearpage
\pagebreak
\newpage

\begin{figure}[h!]
  \caption{The asymptotic results of \cite{BBL12} (Thm.\
    \ref{thm:bbl}) for the scaled singletons (lines) compared to the
    corresponding scaled exact expected values (symbols) as a function of sample
    size $n \in \{10,20,50,100,200,300,500,1000,2000,10000\}$ on log-scale,  and varying over $\alpha$ as shown.  Thus, the solid line and circles
    refer to $\alpha = 1.05$, the short-dashed line and squares to $\alpha = 1.25$, and so  on.  }
  \label{fig:bblsingles}%
  \vspace{-1cm}
\includegraphics[width=5in,height=5in]{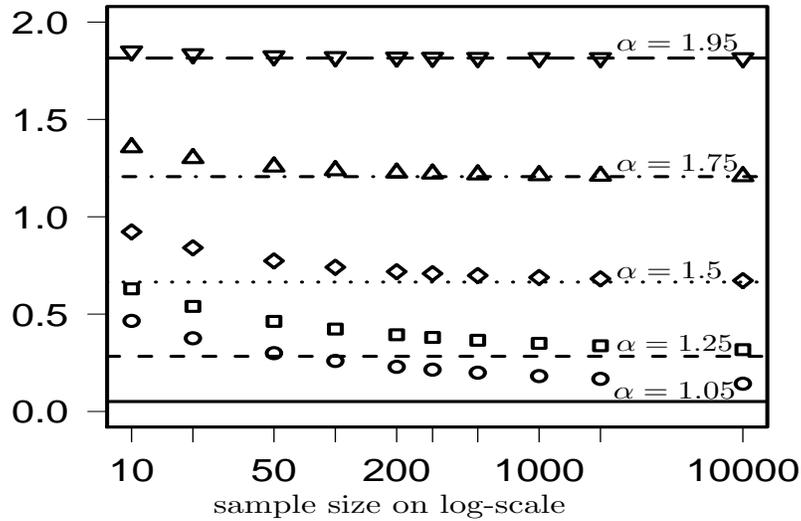}
\vspace{-1cm}
\end{figure}%

\clearpage
\pagebreak
\newpage

\begin{center}
{\sc Inference methods and data analysis}
\end{center}

  Before we discuss our results on the analysis of the
Atlantic cod data, we introduce our two inference methods, comparing
observed and expected values using an $\ell^2$-distance, and a
pseudo-likelihood approach.

{\bf The $\ell^2$-distance}


The simplest way to find a value of $\alpha$ or $\psi$ that gives the
best fit to the data would be to compare the expected values of the
frequency spectrum to the observed one, and report the parameter value
that minimizes the sum of squares \eqref{eq:ltwodist}.  In order to avoid having to
jointly estimate the mutation rate, we can define the scaled frequency
spectrum $\zeta_i^{(n)}$ as  
$$
\zeta_i^{(n)} :=  \frac{\eta_{i}^{(n)} }{ \sum_{i=1}^{\lfloor n/2 \rfloor}\eta_i^{(n)} } 
$$
The observed $\zeta_i^{(n)}$ are compared to the ratio of
expected values   
$$
r_i^{(n)} :=  \frac{\EE{\eta_{i}^{(n)} } }{\sum\limits_{i=1}^{\lfloor n/2 \rfloor}\EE{\eta_{i}^{(n)}}  }.
$$
Although clearly one should compare $\zeta_i^{(n)}$ to its actual
expected value, simulation results indicate that our approximation is
quite good (results not shown).   A natural distance measure to compare $\zeta_i^{(n)}$ and $r_i^{(n)}$ is the
$\ell^2$-distance
\begin{equation}%
\label{eq:ltwodist}%
\ell^2 = \sqrt{ \sum_{i=1}^{\lfloor n/2 \rfloor}\left( \zeta_i - r_i\right)^2 }
\end{equation}%
A similar distance measure is the $G_\xi$ statistic proposed by
\cite{F96}, in which each term in the sum is weighted by the variance
of $\xi_i^{(n)}$ $\left( \eta_i^{(n)} \right)$.  The computational
cost of computing the variances for a Lambda-coalescent prevents us
from considering an identical statistic for Lambda-coalescents.  A
drawback of \cite{F96}'s $G_\xi$ statistic is that it depends on the
mutation rate $\theta$.  Ideally, one would want to apply a statistic
that did not depend on $\theta$.  Instead, a statistic that reflects
the topology of the underlying genealogy is desirable, since the
processes that one usually tries to learn about in population genetics
affect the topology of the genealogy of a sample in one way or
another.  Determining statistical significance levels of the
$\ell^2$-distance will be postponed for now; however one may
apply the approach of \cite{F96} to determine significance levels and
statistical power.

{\bf  The pseudo-likelihood approach}




A more fundamental statistical inference procedure than simply
applying the $\ell^2$-statistic is a likelihood inference.  Ideally,
in our case, to construct a likelihood function, one would have a way
of writing down the probability of observing a mutation $i$ times in a
sample in terms of the coalescence parameters.  However, these
probabilities depend on the coalescence parameters in a way that is
hard to make explicit, and therefore we approximate them with the
quantity $\varphi_n(i)$ \eqref{eq:expnorm}.  We interpret the observed
site-frequency spectrum as an observation from a multinomial
distribution, in which the probability of each class is estimated
using $\varphi_n(i)$.  For $1 \le i < n$, $\varphi_n(i)$ is the
expected number of mutations in $i$ copies scaled by the expected
total number of segregating sites, which corresponds to the expected
total number of branches subtending $i$ leaves scaled by the expected
total size of the genealogy.  Since we can only compute $\varphi_n(i)$
numerically, Fisher Information cannot be computed easily, and it
would not be a priori clear if the variance and confidence bounds
derived therefrom via classical statistical theory would be
valid. Finally, we refer to our likelihood approach as a
pseudo-likelihood inference.

Now we state our pseudo-likelihood function.  Let $s$ denote the total
number of segregating sites, and let $\boldsymbol{s} = (s_1, \ldots,
s_k, s_{k+})$ the observed folded frequency spectrum, where
$s_{k_{+}}$ denotes the collapsed class.  Write $\varphi_i :=
\varphi_n(i)$, and define $\varphi_{k_+} = 1 - \varphi_1 - \cdots -
\varphi_k$.  The pseudo-likelihood function $L\left(\pi |
  \boldsymbol{s} \right)$ takes a multinomial form, in which $\pi$ denotes the
appropriate coalescence parameter ($\psi$ or $\alpha$):
\begin{equation}%
\label{eq:likelihoodfunc}
L\left(\pi | \boldsymbol{s} \right) =  \binom{s}{s_1 \cdots s_k s_{k_+}}\varphi_1^{s_1}\cdots\varphi_k^{s_k}\varphi_{k_+}^{s_{k_+}}.
\end{equation}%
The form of $L$ can be seen from the way we view the data
$\boldsymbol{s}$: as an observation from a multinomial distribution in
which the probability of each class is estimated with $\varphi_i$. It suffices to consider the pseudo-log-likelihood function   
$$
\ell \left(\pi | \boldsymbol{s} \right) =  s_1\log\left( \varphi_1\right) + \cdots + {s_{k_+}}\log\left(\varphi_{k_+}\right).
$$
As in the case of the statistics $\zeta_i^{(n)}$ and $r_i^{(n)}$, this
pseudo-likelihood approach does not require an estimate of $\theta$.
In Table~\ref{tab:Echi3} we do show that the pseudo-likelihood
performs as it should, when applied to the expected site-frequency
spectrum of different coalescent processes.

Table~\ref{tab:Echi3} shows maximum likelihood estimates of the
coalescence parameters $\alpha$ and $\psi$, when the data are the
expected values of the site frequency spectrum,   obtained for given values of the appropriate
coalescent parameter, denoted by $\pi$ in Table~\ref{tab:Echi3}, and computed using our recursion (prop.\ \ref{prop:Exi}).   
Thus, $\pi = \alpha$ when $\pi \in (1,2)$, and $\pi = \psi$ when $\pi
\in (0,1)$.  For each combination of sample size $n$ and $\pi$, the
log-likelihood values are compared with the ones obtained for the
Kingman coalescent $\log L(0)$.  Highest log-likelihood values are
always obtained for the correct parameter value even for sample size
$n$ as low as ten.  And for modest sample sizes, one should be able to
correctly identify the Kingman coalescent ($\pi = 0$ in Table~\ref{tab:Echi3}).

In Figures~\ref{fig:graphlBBab}--\ref{fig:mlekingman} in Supporting Information we estimate the
distribution of the maximum-likelihood estimator for different values
of $n$, $\theta$, and the coalescence parameters $\alpha$ and $\psi$.
As one expects, the performance of the pseudo-likelihood is better for
larger values of $n$ and $\theta$.

A natural question to address is how well do the probabilities
$\varphi_n(i)$ \eqref{eq:expnorm} actually fit with the correct values $\EE{\zeta_i^{(n)}}$
In Figures~\ref{fig:errfit0} and \ref{fig:errfit1} in Supporting Information we
address this issue via simulations. The agreement between $p_i$ and
$\overline{R_i}$ is best when the  number $n$ of leaves (sample size) is
large, and $\alpha$ closer to 2 than to 1.

The C code written to carry out the computations is available upon
request.

\clearpage
\pagebreak
\newpage

\begin{table}[H!]
  \caption{Pseudo-log-likelihood values and maximum likelihood estimates ($\hat{\alpha}$ and $\hat{\psi}$) of coalescent parameters $\alpha$ and $\psi$  for the  pseudo-likelihood \eqref{eq:likelihoodfunc}  with  $p_i := \EE{L_i}/\EE{L}$. By $L(0)$ we denote the log-likelihood associated with the Kingman coalescent.  Data is the expected unfolded frequency spectrum associated with a particular coalescent with parameter $\pi$.  Values associated with the $\Lambda_\psi$ coalescent \eqref{eq:EWoneatom} were computed for the range $[0.02, 0.96]$, and for $\Lambda_\alpha$ \eqref{eq:betameasure} in the range $[1.02, 1.96]$.  One  computes $\log L(0) = \sum_{i=1}^{n-1}(1/i)\log((2/i)/\EE{L_K} )$, $\EE{L_K} = 2\sum_{i=1}^{n-1}1/i$.            }
  \label{tab:Echi3}%
  \begin{tabular}{lllllllll}
    \hline \\
    $n$ &  $\pi$ &  $\hat{\alpha}$ & $\hat{\psi}$  &  $\log L(\hat{\alpha})$ & $\log L(\hat{\psi})$ & $\log L(0)$  \\ 
    \hline
    10 & 0 & $1.96$ & $0.02$ &  $-5.404$  & $-5.405$  &  $-5.404$ \\
   100 &  &  $1.96$ & $0.02$ & $ -19.027$ & $ -19.176$ & $ -19.021$ \\
   1000&  &  $1.96$ & $0.02$ & $ -38.868$ &  $-43.281$  & $ -38.847$ \\
   10 & $1.05$ & $1.06$ & $0.30$ & $-5.122$ & $-5.159$ &  $-5.583$  \\
   100& & $1.06$ & $0.12$  &   $-26.179$ & $-27.483$ & $ -35.3964 $ \\
   1000&& $1.04$ &$0.04$ &  $-106.80$ & $-120.525$ & $ -204.1335$ \\
   10 & $1.5$ & $1.50$ & $0.14$ & $-5.410$ & $-5.419$ & $ -5.508$ \\
   100& & $1.50$ & $0.04$ &  $-23.731$ & $-24.163$ & $ -25.272$ \\
   1000&& $1.50$ & $0.02$ & $-71.143$ & $-78.175$ &  $ -81.3238$ \\
   10 & $0.01$ & $1.96$ & $0.02$  & $-54110.92$ & $-54113.630$  & $-54113.59$ \\
   100&        & $1.90$ & $0.02$  &  $-194644.8$ & $-194884.834$  & $-194951.4$ \\
   1000&       & $1.62$ & $0.02$ &  $-440572.40$&  $-432453.967$ & $ -466074.3$ \\
   10  &  $0.5$& $1.02$ & $0.5$  &  $-19.219$   &  $-18.378$  & $ -23.28243$ \\
   100 &       & $1.02$ & $0.5$ &  $-85.749$   &  $-63.553$  & $ -207.8592$ \\
   1000&       & $1.02$ & $0.5$ &  $-329.189$  & $-130.413$  & $ -2099.191$ \\
    \hline
    \hline
\end{tabular}\end{table}

\clearpage
\pagebreak
\newpage

{\bf  Atlantic cod mtDNA  data analysis }

An application of the recursion (Prop.\ \eqref{prop:Exi}) for the
exact expected values for the site-frequency spectrum will now be
illustrated using data on Atlantic cod obtained from various
localities in the North-Atlantic \citep{A00,SA03,A04}.  The shallow
gene genealogies observed among the haplotypes sampled, coupled with
very high fecundity of Atlantic cod, lead \cite{A04} to argue that the
Kingman coalescent may not be an appropriate null-model for Atlantic
cod.  Later studies \citep{E11,SBB13} support that conclusion by
finding better match between the data and the beta-coalescent, than
obtained for the Kingman coalescent.  \cite{E11} employ the asymptotic
results of \cite{BBL12} on the site-frequency spectrum obtained from
\cite{A04}, while \cite{SBB13} employ full-likelihood methods on the
sequence data of Atlantic cod datasets of smaller size than in
\cite{A04}.

The Atlantic cod mtDNA  datasets we analyse are summarized in Table~\ref{tab:coddata}.
The sequence data from the Faroe Islands \citep{SA03} consist of a 250
bp segment from the mitochondrial \emph{cytochrome} b gene in addition
to 242bp and 74bp segments for a total of 566bp.  The data from
Greenland and Iceland described in \cite{A00} consist of the 250bp
segment, and so does the combined data  described in  \cite{A04}.

\begin{table}[h!]
  \caption{The folded frequency spectrum for the Atlantic cod data in \cite{A00}, \cite{SA03}, and \cite{A04}. The sample size of each dataset is given in parentheses.   Mutations of  frequency  classes from  11 onwards are lumped together in the $11_+$ class except for the data from Faroe Islands,   in which mutations from seven onwards are lumped into the $7_+$ class. The symbol $s$ denotes the total number of segregating sites.     }%
  \label{tab:coddata}%
  \centering
  \begin{tabular}{lrrrrrrrrrrrr}%
    \hline
    source & $s$& $\eta_1$& $\eta_2$&  $\eta_3$&  $\eta_4$&  $\eta_5$&  $\eta_6$&  $\eta_7$&  $\eta_8$&  $\eta_9$&  $\eta_{10}$&  $\eta_{11+}$ \\
    \hline
    \cite{A00} &   \\
    Greenland (78) & 9&  3& 1& 0& 1& 0& 0& 1& 1& 1& 0& 1 \\
    Iceland (519) &23 &  12&  1&  4&  1&  0&  0&  0&  0&  0&  0&  5\\
    joint (597) & 24 & 11 & 3 & 3 & 1 & 1 & 0 & 0 & 0 & 0 & 0 & 5 \\
    \hline
    \cite{SA03}\\
    Faroe Islands (74) &44  & 23 & 9 & 1 & 1 & 1 & 0 & 9 & 0 & 0 & 0 & 0 \\
    \hline
    \cite{A04} ($1278$) &39 & 17 & 4 & 4 & 2 & 4& 1&0&1&0&0& 6 \\
    \hline
    \end{tabular}%
\end{table}%

In Figures~\ref{fig:rltd_a} and \ref{fig:rltd_p} the
$\ell^2$-distances comparing $\zeta_i^{(n)}$ and $r_i^{(n)}$ obtained
for the different Atlantic cod datasets are compared.  The
corresponding parameter estimates are reported in
Table~\ref{tab:ltwomin}.  For the larger datasets one observes much
`sharper' curves than for the smaller ones as one would expect, since
the larger datasets carry more information.  The difference in
$\ell^2$-distance between the coalescent processes is also biggest for
the larger datasets, with the beta-coalescent showing the best fit for
all the datasets, although the difference between the two
Lambda-coalescents is not always large.  However, based on this simple
parameter estimation method, one would clearly prefer one of the two
Lambda-coalescents over the Kingman coalescent.

\clearpage
\pagebreak
\newpage

\begin{figure}[h!]
  \caption{The $\ell^2$-distance between $\zeta_i$ and $r_i$ for the
    five different Atlantic cod datasets, with expected values
    obtained for the beta-coalescent.  The values associated with the
    Kingman coalescent are obtained for $\alpha = 2$.  The A04
    dataset refers to the one in \cite{A04}.  }
  \label{fig:rltd_a}%
\includegraphics[width=5in,height=5in]{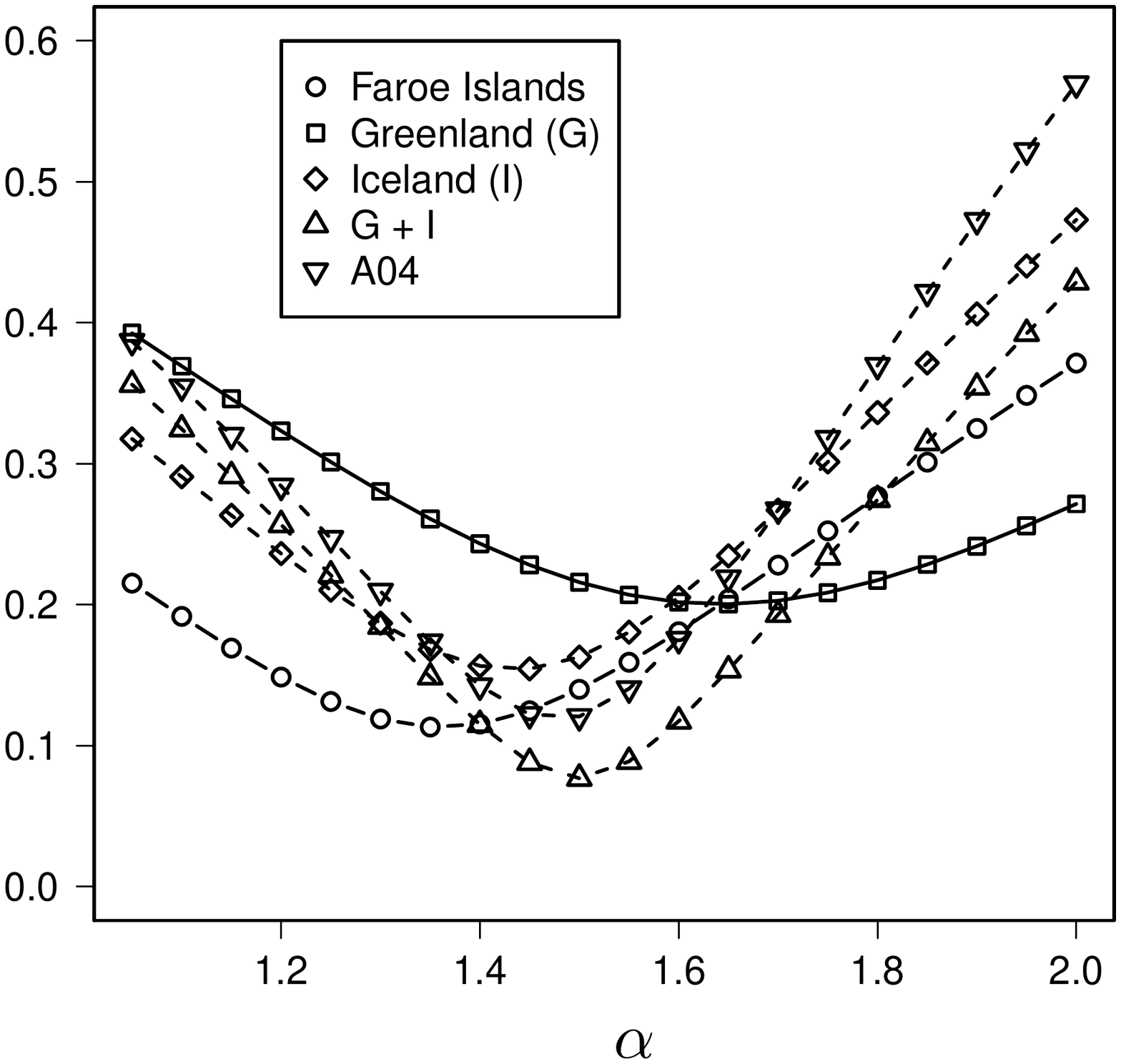}
\end{figure}%

\clearpage
\pagebreak
\newpage

\begin{figure}[h!]
  \caption{The $\ell^2$-distance  between  $\zeta_i$ and  $r_i$ for the five different Atlantic cod datasets, with expected values obtained for the point-mass coalescent.     See Figure~\ref{fig:rltd_a} for explanation of symbols.    }
  \label{fig:rltd_p}%
\includegraphics[width=5in,height=5in]{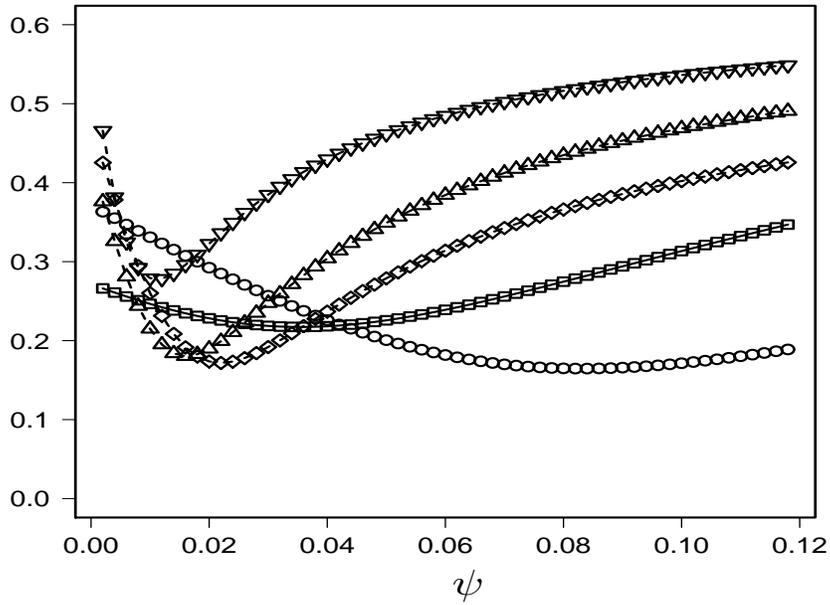}
\end{figure}%

\clearpage
\pagebreak
\newpage

\begin{table}[H!]
\caption{The parameter values minimizing the $\ell^2$ distance.  The $\ell^2$-distance  $\ell^2(0)$   associated with the Kingman coalescent is given for reference. The reference `A04' refers to the large dataset in \cite{A04}.       }
\label{tab:ltwomin}%
\centering
\begin{tabular}{llllll}
\hline
data &  $\hat{\alpha}$ & $\hat{\psi}$ &  $\ell^2(\hat{\alpha})$ &    $\ell^2(\hat{\psi})$ &   $\ell^2(0)$   \\
\hline
Faroe Islands &  $1.35$ & $0.084$ & $0.113$ &  $0.165$  &  $0.371$   \\
Greenland (G) &  $1.65$ & $0.036$ & $0.200$ &  $0.217$  & $0.272$ \\
Iceland (I)  &  $1.45$ & $0.022$ & $0.154$ & $0.172$ & $0.473$ \\
G + I & $1.50$ & $0.016$ & $0.077$ & $0.180$   & $0.429$  \\
A04  & $1.50$ & $0.012$ & $0.121$ &  $0.278$  &  $0.569$  \\
\hline
\end{tabular}
\end{table}

\clearpage
\pagebreak
\newpage

The maximum-likelihood estimates (Table~\ref{tab:mlecod}) agree quite
well with the $\ell^2$-distance approach in Table~\ref{tab:ltwomin}.
\cite{SBB13} apply full-likelihood methods on small Atlantic cod
datasets, including the Faroe Islands dataset \citep{SA03}, and the
Greenland subsample of \cite{A00} to obtain estimates of $\alpha$.
Their estimates $(\hat{\alpha} = 1.5)$ for the Greenland subsample,
and $\hat{\alpha} = 1.3$ for the Faroe Islands sample, agree with our
estimates, in particular the Faroe Islands sample.

\begin{table}[h!]
  \caption{Estimates $(\hat{\alpha}, \hat{\psi})$ based on the pseudo-likelihood \eqref{eq:likelihoodfunc} for the Atlantic cod data in \cite{A00}, \cite{SA03}, and \cite{A04}. The estimate $\hat{\theta}$ of $\theta$ is the one associated with the Beta$(2-\alpha,\alpha)$-coalescent. The maximum pseudo-log-likelihood values $\log L(\cdot)$, along with   the corresponding value $\log L(0)$ associated with the Kingman coalescent, are given.   The $\ell^2$ distances between the observed and expected site-frequency spectrum  computed for the maximum pseudo-likelihood estimates  are given for the  beta-coalescent $\left(\ell_B^2\right)$, and the Kingman coalescent $\left(\ell_K^2\right)$.   The estimate $\hat{\theta}$ of $\theta$ is the one associated with the beta-coalescent based on the total number of segregating sites.    }%
  \label{tab:mlecod}%
  \centering
  \begin{tabular}{lrrrrrrrrrr}%
    \hline
    data &  $\hat{\alpha}$ &  $\log L(\hat{\alpha})$ &   $\hat{\psi}$  &    $\log L(\hat{\psi})$  &  $\log L(0)$ & $\hat{\theta}$   & $\ell^2_B$   & $\ell^2_K$  \\
    \hline
    Faroe Islands &  $1.28$ & $-82.239$ & $0.06$ & $-89.634$&  $-97.543$ &  $4.577$   & $4.730$ &     $16.441$    \\
    Greenland (G) & $1.70$  & $-21.445$ & $0.03$ & $-21.615$  & $-21.963$  &  $1.409$    &  $2.394$ &  $2.470$  \\
    Iceland (I) & $1.38$ &   $-29.895$  &  $0.02$  & $-32.452$ &  $-40.547$  &  $0.982$   & $3.683$ &  $10.880$ \\
    G + I   &  $1.47$&   $-55.557$  &  $0.01$  & $-60.403$ &   $-68.523$ & $1.179$  & $5.204$ &  $13.247$  \\ 
    \cite{A04} & $1.48$ &  $-71.099$ & $0.005$   & $-81.846$   &  $-90.594$ & $1.373$   & $4.625$ &  $19.448$ \\ 
    \hline
    \end{tabular}   
  \end{table}

  \begin{table}[h!]
    \caption{Akaike's Information Criterion (AIC) for the five cod datasets. By AIC$(0)$ we denote the AIC associated with the Kingman coalescent. }%
    \label{tab:aic}%
    \begin{tabular}{lrrrrr}%
      \hline
      data &  AIC$(\alpha)$ & AIC$(\psi)$ &  AIC$(0)$  \\
      \hline
      Faroe Islands &  $166.478$  &  $181.268$ & $195.086$ \\
      Greenland (G) &  $44.89$    &  $45.23$   & $43.926$  \\
      Iceland (I)   &  $61.79$    &  $66.904$  & $81.094$  \\
      G + I         &  $113.114$  &  $122.806$ & $137.046$ \\
      \cite{A04}    &  $144.198$  &  $165.692$ & $181.188$  \\
      \hline
      \end{tabular}%
  \end{table}%

  The beta-coalescent yields the highest likelihood for all the
  datasets (Table~\ref{tab:mlecod}). The fit to the dataset reported
  in \cite{A04} is, in particular, much better than for the Kingman
  coalescent, as measured by the $\ell^2$-distance.  The superior fit
  of the beta-coalescent is further illustrated in
  Figure~\ref{fig:codalldata}, in which the expected site-frequency
  spectrum of the beta-coalescent and the Kingman coalescent is
  compared to the observed one of \cite{A04}.  A similar graph
  (Figure~\ref{fig:farisdata}) for the Faroe Islands data of
  \cite{SA03} is given in Supporting Information.  Comparing
  \cite{A73}'s Information Criterion (Table~\ref{tab:aic}) between
  models tells the same story (except for  the Greenland data).

\begin{figure}[h!]
  \caption{The folded freq.\ spectrum (white bars) of the data of \cite{A04} along with predictions of the Kingman coalescent (dark-grey bars), and the Beta$(2-\hat{\alpha},\hat{\alpha})$-coalescent  (light-grey bars). The vertical lines represent the standard deviation; obtained for the Beta$(2-\hat{\alpha},\hat{\alpha})$-coalescent  from $10^5$ iterations. The class labelled `11' represents the collated tail of the spectrum, from 11 to 1278/2.    }
  \label{fig:codalldata}%
\includegraphics[width=5in,height=5in]{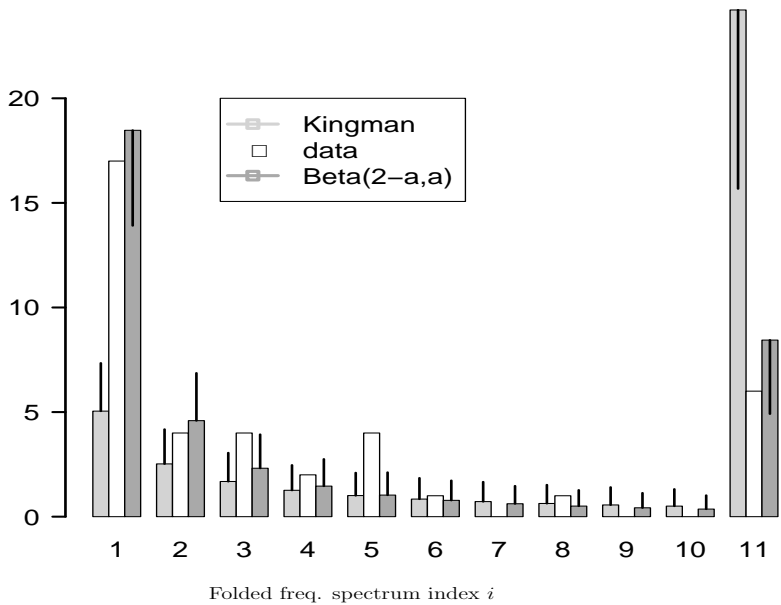}
\end{figure}%

\begin{center}%
{\sc Discussion}
\end{center}%

Inferring coalescence parameters of Lambda-coalescents from large
samples is our main focus.  To this end we derive recursions for the
expected values and covariances of the site-frequency spectrum
associated with a Lambda-coalescent.  The multiple-merger property of
Lambda-coalescents makes obtaining closed-form solutions, as done
by \cite{F95} for the Kingman coalescent, prohibitively hard.  The
value of the recursions in obtaining point-estimates of coalescent
parameters for two specific Lambda-coalescents is then illustrated in
two ways, utilising mtDNA  data on Atlantic cod.

Asymptotic results play a key role in inference.  Knowing for what
regions of the parameter space the asymptotic results approximate the
true values `well enough', in some sense, in other words knowing the
rate of convergence, is also quite important for inference.  We
investigate the rate of convergence for recent asymptotic results, as
sample size $n \to \infty$, of \cite{BBL12} regarding the scaled
site-frequency spectrum associated with Lambda-coalescents, since we
do not have an expression for the error in the asymptotic results.
Our results show that the rate of convergence, in terms of sample
size, is quite slow for low values of the coalescence parameter
$\alpha$.  Thus, exact inference  methods appear preferable.

A simple distance measure is employed on the scaled frequency
spectrum, removing the need to jointly estimate the mutation rate.
The observed spectrum is compared to the expected values for different
values of the coalescence parameters, and the value that minimizes the
distance is reported.  A straightforward pseudo-likelihood approach is
also applied, where the probabilities of mutations to be in different
classes are estimated from the recursions for the expected values of
the spectrum.  These two different approaches give very similar
estimates for the two Lambda-coalescents we consider.  In addition,
they are shown to be capable of distinguishing Lambda-coalescents from
the Kingman coalescent, as well as distinguishing between the two
Lambda-coalescents when sample size is not too small.  Both approaches
also show promise in distinguishing between different
Lambda-coalescents, and they do not require estimates of the mutation
rate.  Our results open up new possibilities to construct test
statistics to test for `neutrality' when the underlying population
model admits large offspring numbers, i.e.\ in highly fecund natural
populations such as Atlantic cod \citep{A04}.  One could, by way of
example, consider linear weighted combinations of the site frequency
spectrum to construct tests for `neutrality' in high fecundity
populations, in the spirit of \cite{A09}.  The present discussion will
not address the statistical power of different statistics one could
construct using our recursions, and their performance under various
scenarios, such as population expansion.  This will be the subject of
future work.  Our aim here is to illustrate the wide applicability of
the recursions we obtain for the mean and variance of the
site-frequency spectrum associated with Lambda-coalescents.

Lambda-coalescents are essentially single-locus models, as they admit
only one merger each time.  Models for multiple loci would naturally
involve Xi-coalescents admitting simultaneous multiple mergers.
Intuitively, one might think that multi-loci inference methods would
be better able to distinguish between different coalescent processes,
and, indeed, the underlying population models, than single-locus
approaches.  A natural question to address is the sample size, as well
as the number of loci, one would recommend in order to identify the
appropriate coalescent process.  Such issues are truly relevant, not
least in part by advances in DNA sequencing technology.  Our partial
response is that our results indicate that one may only need moderate
sample sizes in order to distinguish between Lambda-coalescents.  The
question on the number of loci required is still very much open, not
least because the development of ancestral recombination graphs
admitting simultaneous multiple mergers is only in it's early stages
\citep{EW08,BBE13}.

A key distinguishing feature of the site-frequency spectrum drawn from
a large offspring number population is the excess of singletons
compared to the spectrum one expects in an ordinary Wright-Fisher
population.  When data contains sequencing errors, some authors have
proposed analysing the data without the singletons, leading to
corresponding test statistics which exclude the singletons
\citep{A08}.  At present we do not address the issue of sequencing
errors, but we point out that for many natural high-fecundity
populations, and with advances in DNA sequencing techniques, this may
not be an issue.  However, the issue of sequencing errors remains an
important subject for future analysis.

The role of mtDNA as a suitable genetic marker for studying population
history in general has been subject to some criticism
\citep{BW04,BGG06,B10}.  A single genetic marker may not be sufficient
to infer population history, due to stochasticity in the effects of a
populations' demography on the genome \citep{BW04}.  Natural selection
may be continously acting on (at least) parts of the mitochondrial
genome of Atlantic cod, thus biasing any inference made on supposedly
`neutral' segments of the non-recombining mitochondrial genome
\citep{SH74,G00,DS05}.  In addition, we have not compared the data to
models of population growth, which may mimic the effects of high
fecundity.  However, overfishing of Atlantic cod may have
significantly diminished cod stocks throughout its' range in the
North-Atlantic, which suggests that population expansion may not be an
appropriate model for Atlantic cod.  In addition, as mtDNA is
inherited maternally, and female cods may lay millions of eggs
annually \citep{M67,A04}, our fit with Lambda coalescents might
represent sweepstakes reproduction.  Whatever the main reasons for the
observed patterns of genetic diversity observed (see \cite{A04}, in
particular the discussion on pp.\ 1882 -- 3), we stress that our
purpose is simply to illustrate our inference methods, rather than
reach a firm conclusion on the main mechanism shaping mtDNA diversity
in Atlantic cod.

The two specific Lambda-coalescents we consider will certainly not
(one hopes) be the only Lambda-coalescents ever applied to
highly-fecund natural populations.  Recent work \citep{HM11,HM11b,M11}
introduces specific Lambda-coalescent processes not considered in the
present discussion.  It remains an important exercise to compare all
those new processes to actual data.  In conclusion, we emphasize that
even though one finds that one particular coalescent process fits the
data better than another one, this does not automatically mean that
one has found the correct model.  Biological knowledge and insight
remain key factors in identifying appropriate population models for
natural populations.  By way of example, the haploid population models
underlying the two Lambda-coalescents we focus on -
\eqref{eq:EWoneatom} and \eqref{eq:betameasure} - are very different.
In the model underlying \eqref{eq:betameasure}, the key assumption is
about the distribution of the number of viable (potential) offspring
each individual contributes to the population \citep{schweinsberg03}.
In the model underlying \eqref{eq:EWoneatom}, a single parent
contributes a fixed number of offspring in each reproduction event
\citep{EW06}.  Such assumptions about the population model must form
an integral part of any inference study.

{\small  We thank two referees for insightful comments which improved the presentation.    M.\ Birkner was supported in part by DFG grant BI 1058/2-1.
  J.\ Blath and B.\ Eldon were supported in part by DFG grant BL
  1105/3-1.  }

\newpage

\begin{appendix}

\section{Appendix}

\setcounter{figure}{0}
\renewcommand{\thefigure}{A\arabic{figure}}
\renewcommand{\thetheorem}{A\arabic{theorem}}%

\subsection{A recursion for $p^{(n)}[k,b]$ (prop.\ \ref{prop:r})}

Assume we start the block counting process in some state $Y_0=n$ and denote its distribution by $\P_n$. Then, the probability that the first jump takes it to state $n' < n$, conditionally on the event that it hits state $k \le n'$ is given by
\begin{align}
\label{conditioning}
\P_n\{Y \mbox{ first jumps from $n$ to } n' \, | \,  Y \mbox{ hits } k \} 
       &= \frac{\P_n \{Y \mbox{ first jumps to } n' \cap Y \mbox{ hits } k \}}
       				{\P_n \{ Y \mbox{ hits } k \}}\notag \\ 
       &= p_{n, n'} \frac{\P_{n'} \{ Y \mbox{ hits } k \}}
       				{\P_n \{ Y \mbox{ hits } k \}} = p_{n, n'} \frac{g(n', k)}{g(n, k)},
\end{align}
due to the Markov property of $Y$. This well-known observation will be useful later.

Decomposing according to 
the first jump of $Y$, starting from $n$, yields the following recursion for  $p^{(n)}[k,b]$.

\begin{prop} 
\label{prop:r}
For $1 < k \le n$, we have
\begin{equation} 
\label{eq:rnk}
{p^{(n)}[k,b]} = 
\sum_{n'=k}^{n-1} p_{n,n'}\frac{g(n',k)}{g(n,k)} 
\bigg( \one_{(b > n-n')} \frac{b-(n-n')}{n'} {p^{(n')}[k,b-(n-n')]}
       + \one_{(b < n')} \frac{n'-b}{n'} {p^{(n')}[k,b]} \bigg), 
\end{equation}
with the obvious boundary conditions ${p^{(n)}[n,b]}=\delta_{1b}$ and 
${p^{(n)}[k,b]} = 0$ if $b > n-(k-1)$.
\end{prop}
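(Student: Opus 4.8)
The plan is to condition on the first jump of the block-counting process $Y$ and to exploit the exchangeability of the $\Lambda$-coalescent. Starting from $n$ blocks, the first jump carries $Y$ to some state $n'$, and only $n' \in \{k, \dots, n-1\}$ can contribute: states $n' < k$ are ruled out because $Y$ is non-increasing and can then never return to $k$, consistently with $g(n',k)=0$. That first jump corresponds to a single merger of $n-n'+1$ of the initial blocks into one \emph{heavy} block subtending $n-n'+1$ leaves, while each of the remaining $n'-1$ blocks still subtends a single leaf. Conditioning on the event $\{Y\text{ hits }k\}$, the probability that the first jump lands at $n'$ is, by the Markov property and \eqref{conditioning}, equal to $p_{n,n'}\,g(n',k)/g(n,k)$.

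First I would dispose of the boundary conditions. When $k=n$ no merger has yet occurred, so every one of the $n$ branches subtends exactly one leaf, giving $p^{(n)}[n,b]=\delta_{1b}$. At level $k$ the $k$ branches partition the $n$ leaves, so a single branch subtends at most $n-(k-1)$ of them; hence $p^{(n)}[k,b]=0$ whenever $b>n-(k-1)$.

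For the recursion, after the first jump the process continues as a genuine $\Lambda$-coalescent on the $n'$ blocks, treating them exchangeably, and by exchangeability a ``given'' branch at level $k$ has the same leaf-count distribution as a uniformly chosen one. Tracking such a distinguished branch once $Y=k$ is reached, it subtends some number $b'$ of the $n'$ super-blocks with probability $p^{(n')}[k,b']$, and given this the heavy block lies among the $b'$ subtended super-blocks with probability $b'/n'$ (a uniformly random size-$b'$ subset of $n'$ elements contains a fixed element with this probability). I would then split into two cases. If the heavy block is subtended, the branch subtends $(n-n'+1)+(b'-1)=b'+(n-n')$ original leaves; setting this equal to $b$ forces $b'=b-(n-n')$ and contributes $\frac{b-(n-n')}{n'}\,p^{(n')}[k,b-(n-n')]$, which is admissible only when $b>n-n'$. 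If the heavy block is not subtended, the branch subtends exactly $b'$ original leaves; setting $b'=b$ contributes, with probability $(n'-b')/n'$, the term $\frac{n'-b}{n'}\,p^{(n')}[k,b]$, admissible when $b<n'$. Weighting by the conditional first-jump probabilities $p_{n,n'}\,g(n',k)/g(n,k)$ and summing over $n'$ then yields \eqref{eq:rnk}.

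The main obstacle is justifying that conditioning on $\{Y\text{ hits }k\}$ does not disturb the exchangeable assignment of leaves to blocks: it constrains only the path of block-counts, while the uniform (paintbox) choice of which blocks merge at each step is left untouched, so the heavy block created at the first merger may legitimately be treated as one of $n'$ interchangeable super-blocks in the continuation. This decoupling of the block-count trajectory from the combinatorial merger structure is precisely what makes both the factor $b'/n'$ and the reappearance of $p^{(n')}[k,\cdot]$ valid.
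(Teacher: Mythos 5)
Your proof is correct and follows essentially the same route as the paper's: conditioning on the first jump of the block-counting process via \eqref{conditioning}, then arguing forwards in time from $n'$ lineages that the initial $(n-n'+1)$-merger either lies among the lineages subtended by the distinguished branch (forcing $b' = b-(n-n')$) or among the $n'-b$ others. The only difference is one of exposition: you spell out the exchangeability/paintbox justification (that conditioning on $\{Y \text{ hits } k\}$ constrains only the block-count path, leaving the uniform choice of merging blocks intact) which the paper's proof leaves implicit.
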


The boundary conditions allow us to replace the second indicator
function on the rhs by $\one_{(b \le n-(k-1))}$.   In case of the Kingman coalescent (indicated by an additional superscript-`${(K)}$'), ${p^{(n)}[k,b]}$ becomes
$$
{p^{(n)(K)}[k,b]} =  \one_{(b > 1)}\frac{b - 1 }{n - 1}{p^{(n-1)(K)}[k,b-1]}  +  \one_{(b < n-1)}\frac{n - 1 - b}{n - 1}{p^{(n-1)(K)}[k,b]}
$$
and \cite{F95} obtains a closed-form solution: $ {p^{(n)(K)}[k,b]} =
{\binom{n - b - 1}{k - 2} }/{\binom{n - 1 }{k - 1 } }.  $ A proof of
proposition~\eqref{prop:r} now follows.


\begin{proof}
Conditionally on $Y$ hitting $k$, the first jump can take $Y$ to any $n' \in \{k, \dots, n-1\}$.
The probability of seeing a jump from $n$ to $n'$, conditionally on hitting $k$, has probability
$p_{n,n'}\frac{g(n',k)}{g(n,k)}$, by \eqref{conditioning}. Then, thinking `forwards in time from $n'$ lineages', either the initial $(n-n'+1)$-split occurred to one of the (then necessarily $b-(n-n')\,$)
lineages subtended to the one we are interested in, or it occurs to
one of the (then necessarily $n'-b$) others.  
\end{proof}

\subsection{Proof of Proposition~\ref{prop:Exi}}
\begin{proof}
Let 
$ 
T_k :=
\int_0^\infty \one_{(Y_s = k)} \, ds
$ 
be the length of the time interval
during which there are $k$ lineages (which is possibly $0$), and consider, in an $n$-$\Lambda$-coalescent,
\begin{equation} 
\psi_n(b) := \mbox{expected total length of all branches with $b$ subtended 
leaves}. \notag
\end{equation}
One now obtains
\begin{equation} 
\label{eq:psi} 
\E[\xi_i^{(n)}] = \frac{\theta}{2} \psi_n(i) = 
\frac{\theta}{2}\sum_{k=2}^{n-i+1} {p^{(n)}[k,i]} k \E\big[ T_k \big]
= \frac{\theta}{2}\sum_{k=2}^{n-i+1} {p^{(n)}[k,i]} k g(n,k). 
\end{equation}
\end{proof}

\subsection{ A recursion for  ${p}^{(n)}_{\mathrm{eq}}[k; i,j]$  (prop~\ref{prop:1}) }

\begin{prop}
\label{prop:1}
For $2 \le k < n$ and $i, j \ge 1; i+j \le n-k+2$, we have
\begin{align} 
\label{eq:p.eq.rec}
{p}^{(n)}_{\mathrm{eq}}[k; i,j] & = 
\sum_{m=k}^{n-1} p_{n,m} \frac{g(m,k)}{g(n,k)} 
\Big[ \frac{i-(n-m)}{m}  {p}^{(m)}_{\mathrm{eq}}[k; i-(n-m),j] 
\one_{(i>n-m)} \notag \\
& \hspace{9em} {} 
+ \frac{j-(n-m)}{m}  {p}^{(m)}_{\mathrm{eq}}[k; i,j-(n-m)] 
\one_{(j>n-m)} \notag \\
& \hspace{9em} {} 
+ \frac{m-i-j}{m} {p}^{(m)}_{\mathrm{eq}}[k; i,j] 
\one_{(i+j>m)}  \Big], 
\end{align}
and (as boundary condition)
\begin{align} \label{eq:p.eq.rec.bc}
{p}^{(n)}_{\mathrm{eq}}[n; i,j] = \one_{(i=j=1)}, \quad n \in \N.
\end{align}
\end{prop}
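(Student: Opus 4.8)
The plan is to mimic the proof of Proposition~\ref{prop:r} for the single-line probability $p^{(n)}[k,b]$, decomposing according to the first jump of the block-counting process $Y$ and invoking the conditioning identity \eqref{conditioning}. The only genuinely new ingredient is that, with two marked lines rather than one, the block created by the first merger can be subtended by the first marked line, by the second, or by neither; this produces the three summands on the right-hand side of \eqref{eq:p.eq.rec} in place of the two appearing in \eqref{eq:rnk}.

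First I would dispose of the boundary condition: conditioning on $Y$ hitting $k=n$ is no conditioning at all, and at level $n$ (the sample itself) each of the $n$ lines subtends exactly one leaf, so an ordered pair of distinct lines subtends $(1,1)$ almost surely, giving $p^{(n)}_{\mathrm{eq}}[n;i,j]=\one_{(i=j=1)}$. For $k<n$, I would condition on the first jump carrying $Y$ from $n$ to some $m\in\{k,\dots,n-1\}$; by \eqref{conditioning} this has conditional probability $p_{n,m}\,g(m,k)/g(n,k)$ given that $Y$ hits $k$. This first jump merges $n-m+1$ of the original leaves into a single block, and, conditionally on the whole coalescent from the $m$-lineage level onward (in particular on the level-$k$ configuration), exchangeability makes the identity of this large block uniform over the $m$ blocks present at level $m$. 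Writing $i'$ and $j'$ for the numbers of level-$m$ blocks subtended by the two marked lines at level $k$—necessarily disjoint sets, since two distinct lines at one level have disjoint descendants—the large block lands among the $i'$ blocks under the first line (probability $i'/m$), among the $j'$ under the second ($j'/m$), or among the remaining $m-i'-j'$ ($(m-i'-j')/m$). Translating level-$m$ block counts into level-$n$ leaf counts, the large block contributes $n-m$ extra leaves to whichever marked line subtends it, so the three cases correspond to $(i',j')=(i-(n-m),j)$, to $(i',j')=(i,j-(n-m))$, and to $(i',j')=(i,j)$, respectively. Substituting these, reading off the induced admissibility constraints (which produce the displayed indicators), and summing over $m$ then yields \eqref{eq:p.eq.rec}.

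The step I expect to be the main obstacle is justifying the uniform placement of the merged block: one must argue that, after conditioning on $Y$ hitting $k$ and on the shape of the genealogy from the $m$-lineage level onward, the identity of the large merged block is still uniform over the $m$ present blocks and independent of which of them are subtended by the two marked lines, so that the elementary ratios $i'/m$, $j'/m$ and $(m-i'-j')/m$ are legitimate. The remaining work—checking that the three cases are disjoint and exhaustive, and that the third term $(m-i-j)/m$ carries the constraint $i+j\le m$ so as to stay nonnegative—is routine.
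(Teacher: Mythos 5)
Your proposal is correct and takes essentially the same route as the paper's proof: decompose according to the first jump of $Y$ conditionally on hitting $k$, obtain the factor $p_{n,m}\,g(m,k)/g(n,k)$ from the conditioning identity \eqref{conditioning}, and then place the initial $(n-m+1)$-merger among the lineages subtended by the first marked line, the second marked line, or the remaining ones, giving the three weights $\frac{i-(n-m)}{m}$, $\frac{j-(n-m)}{m}$, $\frac{m-i-j}{m}$. The exchangeability justification you single out as the main obstacle is exactly what the paper dispatches, tersely, with its ``thinking forwards in time from $m$ lineages'' step, so there is no substantive difference between the two arguments.
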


\begin{proof}
  Decompose according to the first jump of $Y$,
  conditionally on hitting $k$. As in the proof of Proposition
  \ref{prop:r}, the probability of jumping to some state $m\in \{k,
  \dots, n-1\}$ is $p_{n,m}\frac{g(m,k)}{g(n,k)}$, by
  \eqref{conditioning}.  Thinking `forwards in time from $m$
  lineages', either the initial $(n-m+1)$-split occured to one of the
  $i-(n-m)$ lineages subtended from the first sampled edge, or to one
  of the $j-(n-m)$ lineages subtended from the second sampled edge, or
  to one of the $m- i-b$ others.
\end{proof}

One can derive a similar recursion for ``labelled'' leaves, where the
order of the sample plays a role. For example, if we distinguish
between individuals in our sample, we obtain for
$p^{(n)}_{\mathrm{eq}}(k; i,j)$, where the round brackets indicate that 
we refer to specific sets of $i$ resp.\ $j$ labelled leaves, the recursion
\begin{align*}
p^{(n)}_{\mathrm{eq}}(k; i,j) & = \sum_{n'=k}^{n-1} p_{n,n'} 
\frac{1}{{n \choose n-n'+1}} \Big[ 
{i \choose n-n'+1} p^{(n')}_{\mathrm{eq}}(k; i-(n-n'),j) \notag \\
& \qquad+ {j \choose n-n'+1} p^{(n')}_{\mathrm{eq}}(k; i,j-(n-n'))  
+ {n-(i+j) \choose n-n'+1} p^{(n')}_{\mathrm{eq}}(k; i,j) \Big]
\end{align*} 
for $n>2$ (noting that $p^{(n)}_{\mathrm{eq}}(k; i,j)=0$ if $i < 1$ or $j < 1$ or 
$i+j \ge n-k+1$).

\subsection{ Recursion for  ${p}^{(n)}_{\mathrm{un}}[k,i ; \ell,j]$   (prop.\ ~\ref{prop:2})  }

\begin{prop}
\label{prop:2}
For $2 \le k < \ell \le n$  we have
\begin{align} 
\label{eq:p.un.rec}
{p}^{(n)}_{\mathrm{un}}[k,i ; \ell,j] 
& = 
\sum_{m=\ell}^{n-1} p_{n,m} \frac{g(m,\ell)}{g(n,\ell)} 
\Big[ \frac{i-(n-m)}{m}  {p}^{(m)}_{\mathrm{un}}[k, i-(n-m); \ell, j] 
\one_{(i>n-m)} \notag \\
& \hspace{9em} {} 
+ \frac{j-(n-m)}{m}  {p}^{(m)}_{\mathrm{un}}[k, i; \ell, j-(n-m)] 
\one_{(j>n-m)} \notag \\
& \hspace{9em} {} 
+ \frac{m-i-j}{m}  {p}^{(m)}_{\mathrm{un}}[k,i; \ell,j] 
\one_{(m>i+j)} \Big] 
\end{align}
with boundary conditions
\begin{align} 
\label{eq:p.un.rec.bc}
{p}^{(n)}_{\mathrm{un}}[k,i ; n,j] = \one_{(j=1)} {p^{(n)}[k,i]} \frac{n-i}{n}, 
\quad  2 \le k < n, \quad 1 \le i <n.
\end{align}
\end{prop}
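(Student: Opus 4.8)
The plan is to mimic the proof of Proposition~\ref{prop:1}, decomposing according to the first (backward) jump of the block-counting process $Y$ started at $n$, but now conditioning on $Y$ hitting \emph{both} levels $\ell$ and $k$. The first thing I would establish is that, because $k<\ell$, the event $\{Y\text{ hits }\ell\}\cap\{Y\text{ hits }k\}$ factorizes: since $Y$ decreases monotonically, hitting $k$ after hitting $\ell$ is, by the strong Markov property at the hitting time of $\ell$, governed by $\P_\ell\{Y\text{ hits }k\}=g(\ell,k)/g(k,k)$, independently of the path from $n$ down to $\ell$. Consequently
\[
\P_n\{Y\text{ first jumps to }m \mid Y\text{ hits }\ell,\ Y\text{ hits }k\}=p_{n,m}\,\frac{g(m,\ell)}{g(n,\ell)},\qquad m\in\{\ell,\dots,n-1\},
\]
exactly as in \eqref{conditioning}: the common factor $g(\ell,k)/g(k,k)$ cancels between numerator and denominator, so the first-jump law depends only on hitting $\ell$. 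Verifying this cancellation cleanly is, I expect, the one genuinely new point compared with Proposition~\ref{prop:1}, where only a single level is involved.

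After the first jump $n\to m$ (an $(n-m+1)$-merger), I would invoke the Markov property to treat the remainder as an $m$-$\Lambda$-coalescent and pass to $p^{(m)}_{\mathrm{un}}$. The bridge between the two scales is the observation that, forward in time, exactly one of the $m$ lineages present at level $m$ expands into the $n-m+1$ merged leaves, and by exchangeability this distinguished lineage is uniform among the $m$. Writing $S_A$ and $S_B$ for the (disjoint, since the configuration is unnested) sets of level-$m$ lineages subtended by the level-$k$ edge $A$ and the level-$\ell$ edge $B$, the distinguished lineage lands in $S_A$, in $S_B$, or in neither, with probabilities $|S_A|/m$, $|S_B|/m$, $(m-|S_A|-|S_B|)/m$; crucially, disjointness of $S_A$ and $S_B$ is preserved under the expansion, so the configuration stays unnested in all three cases.

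Matching leaf counts then produces the three terms. If the distinguished lineage lies in $S_A$ we need $|S_A|+(n-m)=i$, i.e.\ $|S_A|=i-(n-m)$ (possible only when $i>n-m$), contributing $\frac{i-(n-m)}{m}\,p^{(m)}_{\mathrm{un}}[k,i-(n-m);\ell,j]$; symmetrically for $S_B$ one obtains the second term with $j-(n-m)$; and if it lies outside both, the counts are unchanged and the $m-i-j$ available lineages give $\frac{m-i-j}{m}\,p^{(m)}_{\mathrm{un}}[k,i;\ell,j]$. Summing over $m$ against the first-jump law yields \eqref{eq:p.un.rec}.

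For the boundary $\ell=n$, a level-$n$ line is a single leaf, so the second edge subtends exactly one leaf, forcing $j=1$ and hence the indicator $\one_{(j=1)}$. Conditionally on the level-$k$ edge subtending a given set of $i$ leaves (probability $p^{(n)}[k,i]$), unnestedness requires the independently chosen leaf to avoid that set, which happens with probability $(n-i)/n$ by exchangeability, yielding \eqref{eq:p.un.rec.bc}. The main obstacle throughout is the bookkeeping of the double conditioning in the first step; the rest is the same exchangeability and uniform-placement argument already used for $p^{(n)}_{\mathrm{eq}}$.
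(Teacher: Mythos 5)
Your proof is correct and takes essentially the same route as the paper's: a first-jump decomposition of $Y$ conditioned on hitting $\ell$ alone, followed by the usual exchangeability/uniform-placement argument for the expanded lineage and the same reasoning for the boundary case $\ell=n$. Your explicit cancellation of the factor $g(\ell,k)/g(k,k)$ via the strong Markov property is precisely the point the paper compresses into the remark that one ``does not need to condition on hitting $k$ as well'' because that conditioning is hidden in the $p^{(m)}_{\mathrm{un}}$ terms and the boundary condition.
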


\begin{proof}
  As for the previous recursions, we decompose according to the first
  jump of $Y$, conditionally on hitting $\ell$. Due to the Markov
  property we do not need to condition on hitting $k$ as well. This
  second conditioning is `hidden' in the ${p}^{(m)}_{\mathrm{un}}$
  terms and in particular in the boundary terms
  \eqref{eq:p.un.rec.bc}.  The rest of the argument is as usual.
\end{proof}

\subsection{A recursion for ${p}^{(n)}_{\mathrm{ne}}[k,i ; \ell ,j]$ (prop.\  \ref{prop:3})  }

\begin{prop}
\label{prop:3}
For $2 \le k < \ell \le n$ we have
\begin{align} 
\label{eq:p.ne.rec}
{p}^{(n)}_{\mathrm{ne}}[k,i ; \ell ,j] & = 
\sum_{m=\ell}^{n-1} p_{n,m} \frac{g(m,\ell)}{g(n, \ell)} 
\Big[ \frac{i-j-(n-m)}{m}  {p}^{(m)}_{\mathrm{ne}}[k, i-(n-m); \ell, j] 
\one_{(i-j>n-m)} \notag \\
& \hspace{9em} {} 
+ \frac{j-(n-m)}{m} {p}^{(m)}_{\mathrm{ne}}[k, i-(n-m); \ell, j-(n-m)]
\one_{(j>n-m)} \notag \\
& \hspace{9em} {} 
+ \frac{m-i}{m} {p}^{(m)}_{\mathrm{ne}}[k,i; \ell,j] 
\one_{(m>i} \Big] 
\end{align}
with boundary conditions
\begin{align} \label{eq:p.ne.rec.bc}
{p}^{(n)}_{\mathrm{ne}}(k,i ; n,j) = \one_{(j=1)} {p^{(n)}[k,i]} \frac{i}{n}, 
\quad  2 \le k < n, 1 \le i < n.
\end{align}
\end{prop}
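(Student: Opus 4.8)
The plan is to mirror the proofs of Propositions~\ref{prop:1} and~\ref{prop:2} by decomposing according to the first jump of the block-counting process $Y$ started at $n$, conditioned on the event that $Y$ hits $\ell$. Since $k < \ell$, the Markov property lets me condition on hitting $\ell$ alone: the extra conditioning on hitting $k$ is then carried inside the $p^{(m)}_{\mathrm{ne}}$ terms, exactly as the conditioning on $k$ was hidden in the boundary data \eqref{eq:p.un.rec.bc} of Proposition~\ref{prop:2}. By \eqref{conditioning}, the probability that this first jump sends $Y$ from $n$ to a state $m \in \{\ell, \dots, n-1\}$, given that $Y$ hits $\ell$, equals $p_{n,m}\, g(m,\ell)/g(n,\ell)$, which produces the prefactor of the sum.

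Conditioned on landing in state $m$, I would read the most recent coalescence forwards in time as an $(n-m+1)$-merger: one of the $m$ blocks present at level $m$ carries $n-m+1$ of the original leaves, while the remaining $m-1$ blocks each carry a single leaf. By exchangeability this distinguished block is uniform among the $m$ blocks at level $m$, so I split according to its position relative to the nested pair $(\ell,1) \subset (k,1)$. If it lies inside $(\ell,1)$, then at level $m$ the lines $(k,1)$ and $(\ell,1)$ subtend $i-(n-m)$ and $j-(n-m)$ leaves respectively, contributing the middle term with weight $(j-(n-m))/m$ and constraint $\one_{(j>n-m)}$; if it lies inside $(k,1)$ but outside $(\ell,1)$ — there being $i-j-(n-m)$ such intermediate blocks at level $m$ — it contributes the first term; and if it lies outside $(k,1)$ altogether — there being $m-i$ such blocks — it contributes the last term, with $\one_{(m>i)}$. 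Restoring the amount $n-m$ to the appropriate subtended-leaf count in each case recovers the shifted arguments of the $p^{(m)}_{\mathrm{ne}}$ terms on the right-hand side.

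For the boundary condition I would take $\ell = n$, so that level $\ell$ is the level of the leaves themselves: each of the $n$ lines subtends exactly one leaf, forcing $j=1$, and the nesting requirement means the single leaf carried by the sampled line $(\ell,1)$ must be one of the $i$ leaves subtended by $(k,1)$. Given that $(k,1)$ subtends $i$ leaves — an event of probability $p^{(n)}[k,i]$ — the independently sampled line $(\ell,1)$ is one of those $i$ leaves with probability $i/n$, which yields \eqref{eq:p.ne.rec.bc}. I expect the only delicate point to be the bookkeeping in the ``subtended by $(k,1)$ but not $(\ell,1)$'' case: one must count $i-j-(n-m)$ intermediate blocks rather than $i-(n-m)$, and this is precisely the feature that distinguishes the nested recursion here from the unnested recursion of Proposition~\ref{prop:2}.
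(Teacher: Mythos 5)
Your proof is correct and takes essentially the same approach as the paper, which proves Proposition~\ref{prop:3} merely by appeal to ``arguments similar to the previous proofs'': a first-jump decomposition of $Y$ conditioned on hitting $\ell$ via \eqref{conditioning}, with the conditioning on $k$ hidden in the $p^{(m)}_{\mathrm{ne}}$ terms and the boundary data, exactly as you carry out in detail. Your three-way case analysis for the position of the merged block (inside $(\ell,1)$, inside $(k,1)\setminus(\ell,1)$ with $i-j-(n-m)$ choices, or outside $(k,1)$ with $m-i$ choices) and your derivation of the boundary condition at $\ell=n$ reproduce precisely the weights and indicators of \eqref{eq:p.ne.rec} and \eqref{eq:p.ne.rec.bc}.
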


The proof follows from arguments similar to the previous proofs.

The recursions for  $p^{(n)}_{\mathrm{eq}}$ \eqref{eq:p.eq.rec}, $p^{(n)}_{\mathrm{un}}$ \eqref{eq:p.un.rec}, and 
$p^{(n)}_{\mathrm{ne}}$   \eqref{eq:p.ne.rec}, are (strictly) recursive in $n$ and can be solved numerically in a straightforward way, starting
from their respective boundary conditions \eqref{eq:p.eq.rec.bc},
\eqref{eq:p.un.rec.bc}, \eqref{eq:p.ne.rec.bc}. 
%
For numerical checking of implementations it may be useful 
to note that by definition,  
\begin{align}
\sum_{i,j} p^{(n)}_{\mathrm{eq}}(k; i,j) &= 1 \quad 
\text{ for all } 2\le k \le n, \\
\sum_{i,j} \big( p^{(n)}_{\mathrm{un}}(k,i ; \ell ,j) 
+ p^{(n)}_{\mathrm{ne}}(k,i ; \ell,j) \big) &= 1 \quad 
\text{ for all } 2\le k<\ell \le n.
\end{align}

\subsection{Proof of Theorem~\ref{thm:cov} }
\label{sec:proofcov}%

Consider  an  $n$-$\Lambda$-coalescent, and let, for $2 \le k \le n$, $\ell \in
[k]$,
\begin{align}
\label{eq:Lklndef}
L^{(n)}_{k,\ell} = 
\begin{cases} 
\# \, \text{leaves subtended to the $\ell$-th edge while $k$ blocks 
(in $n$-coalescent)},\\
0 \;\; \text{if $k$ blocks are never realised}.
\end{cases}
\end{align} 
Now we think of labelled lines, as opposed to the unlabelled lines in
the previous recursions. By symmetry, $\P\{L^{(n)}_{k,\ell}=i\}$ does
not depend on $\ell \in [k] := \{1, \ldots, k\}$.  In fact
\begin{align} 
\label{1}
\P\{L^{(n)}_{k,\ell}=i\} = {p^{(n)}[k,i]} \frac{g(n,k)}{g(k,k)}, \quad \mbox{ for } 2 \le k \le n, \: \ell \in [k].
\end{align}
The factor $g(n,k)/g(k,k)$ is the probability that the block counting 
process $Y$, starting from $n$ lines, ever hits the state of $k$ lines, 
and thus removes the conditioning from the definition of ${p^{(n)}[k, \cdot]}$.
Similarly, for $2 \le k \le n, \: \ell, \ell' \in [k], \ell \neq \ell',$
\begin{equation}
\label{2}
\P\{L^{(n)}_{k,\ell}=i, L^{(n)}_{k,\ell'}=j\}  = 
\P\{L^{(n)}_{k,1}=i, L^{(n)}_{k,2}=j\} = 
{p}^{(n)}_{\mathrm{eq}}[k; i,j]\frac{g(n,k)}{g(k,k)}, 
\end{equation}
and, for $2 \le k<k' \le n, \: \ell \in [k], \ell' \in [k']$,
\begin{equation}
\label{3}
\P\{L^{(n)}_{k,\ell}=i, L^{(n)}_{k',\ell'}=j\}  = 
 \P\{L^{(n)}_{k,1}=i, L^{(n)}_{k',1}=j\} 
  = \big( p^{(n)}_{\mathrm{un}}(k,i ; k',j) 
+ {p}^{(n)}_{\mathrm{ne}}[k,i ; k',j] \big)
\frac{g(n,k')}{g(k',k')} \frac{g(k',k)}{g(k,k)}.
\end{equation}

The random variables $\xi_i^{(n)}$ counting the number of (derived)
mutations in $i$ copies (for $1 \le i < n)$ can be expressed in terms
of $L_{k,\ell}^{(n)}$ as follows.  Write $M^{(n)}_{k,\ell}$ for the
number of mutations that occur on the $\ell$-th edge while there are
$k$ blocks in our $n$-$\Lambda$-coalescent, with $M^{(n)}_{k,\ell} \equiv 0$   if $k$ blocks are never
realised.  Then
\begin{align} 
\xi^{(n)}_i = \sum_{k=2}^n \sum_{\ell=1}^k \one_{\left(L^{(n)}_{k,\ell}=i \right)} M^{(n)}_{k,\ell}.
\end{align}

\begin{proof}
  As in \eqref{eq:Lklndef}, let $L^{(n)}_{k,\ell}$ denote the random number of leaves - out
  of $n$ - subtended to block $\ell$ when there are $k$ blocks, with
  $L^{(n)}_{k,\ell} = 0$ if $k$ blocks are never realised.

  Distinguishing cases as before, we obtain
\begin{align*} 
\E\big[\xi^{(n)}_i \xi^{(n)}_j \big] &=  \,  
\sum_{k=2}^n \sum_{\ell=1}^k \sum_{k'=2}^n \sum_{\ell'=1}^{k'} 
\P\{L^{(n)}_{k,\ell}=i, L^{(n)}_{k',\ell'}=j\} 
\E\big[ M^{(n)}_{k,\ell} M^{(n)}_{k',\ell'} \big] \notag \\
&= \, \sum_{k=2}^n k(k-1) \P\{L^{(n)}_{k,1}=i, L^{(n)}_{k,2}=j\} 
\E\big[ M^{(n)}_{k,1} M^{(n)}_{k,2} \big] \\
 & \quad\:\;\; + \one_{(i=j)} \sum_{k=2}^n k \P\{L^{(n)}_{k,1}=i\}
\E\big[ (M^{(n)}_{k,1})^2 \big] \\
 & \quad\;\; + 2\sum_{k=3}^n \sum_{k'=2}^{k-1} k \, k' \, 
\P\{L^{(n)}_{k,1}=i, L^{(n)}_{k',1}=j\} \E\big[ M^{(n)}_{k,1} M^{(n)}_{k',1} \big]
\end{align*}
The three expected values on the right-hand side can now be expressed in terms of the total jump rates of our block-counting process $Y$ away from states $k$ resp.\ $k'$. Indeed, for the first 
expectation on the {\em rhs} note that the time while there are $k$ blocks is the same for both edges 1 and 2, but mutations are collected {\em independently}. Thus, the expected number of mutations can  be computed as the second moment of an exponential random variable:
\begin{align*}
\E\big[ M^{(n)}_{k,1} M^{(n)}_{k,2} \big] &= \int_0^\infty 
\E\big[ M^{(n)}_{k,1} M^{(n)}_{k,2} \, \big| \, \mbox{$k$ blocks for $t$ time units}\big] 
(-q_{kk}) e^{-(-q_{kk})t}\, dt\\
&= \int_0^\infty 
\Big(\E\big[ M^{(n)}_{k,1} \, \big| \, \mbox{$k$ blocks for $t$ time units}\big] \Big)^2
(-q_{kk}) e^{-(-q_{kk})t}\, dt\\
&=\int_0^\infty \Big(\frac{\theta}{2} t\Big)^2 
(-q_{kk}) e^{-(-q_{kk})t}\, dt \\
&= \frac{\theta^2}{4} \frac{2}{(-q_{kk})^2}.
\end{align*}
The second expectation is given by the second moment of a mixture of a Poisson random variable with an exponential distribution:
\begin{align*}
\E\big[ \left(M^{(n)}_{k,1}\right)^2 \big] &= \int_0^\infty 
\E\big[ \left(M^{(n)}_{k,1}\right)^2 \, \big| \, \mbox{$k$ blocks for $t$ time units}\big] 
(-q_{kk}) e^{-(-q_{kk})t}\, dt\\
&=\int_0^\infty \Big[ \sum_{k=0}^\infty m^2 \frac{\frac{\theta}{2}t}{m!} e^{-\frac{\theta}{2}t }\Big](-q_{kk}) e^{-(-q_{kk})t}\, dt\\
&= \int_0^\infty \Big[\Big(\frac{\theta}{2}t\Big)^2 + \frac{\theta}{2}t\Big]
(-q_{kk}) e^{-(-q_{kk})t}\, dt\\
&=  \frac{\theta}{2} \frac{1}{-q_{kk}} + 
\frac{\theta^2}{4}\frac{2}{(-q_{kk})^2}.
\end{align*}
The last expectation is  (due to independence obtained from the memoryless-property of coalescent jump times) just the product of the two individual expected values, i.e.
$$
\E\big[ M^{(n)}_{k,1} M^{(n)}_{k',1} \big] = \frac{1}{(-q_{kk})(-q_{k'k'})}.
$$
Altogether, we arrive at
\begin{align*} 
\E\big[\xi^{(n)}_i \xi^{(n)}_j \big] &=  \,
\frac{\theta^2}{4} \sum_{k=2}^n k(k-1) \P\{L^{(n)}_{k,1}=i, L^{(n)}_{k,2}=j\} 
\frac{2}{(-q_{kk})^2}
 \\
& \quad\,\;\; + \one_{(i=j)} \sum_{k=2}^n k \P\{L^{(n)}_{k,1}=i\} 
\Big( \frac{\theta}{2} \frac{1}{-q_{kk}} + 
\frac{\theta^2}{4}\frac{2}{(-q_{kk})^2} \Big)\\
& \quad\;\; + 
2 \frac{\theta^2}{4} \sum_{k=3}^n \sum_{k'=2}^{k-1} k \, k' \, 
\P\{L^{(n)}_{k,1}=i, L^{(n)}_{k',1}=j\} \frac{1}{(-q_{kk})(-q_{k'k'})},
\end{align*}
from which the result follows with the help of \eqref{1}, \eqref{2} and \eqref{3}.
\end{proof}

Let $B_i^{(n)}$ denote the total length of branches subtending $i$ of
$n$ leaves.   The covariance of $B_i^{(n)}$ and $B_j^{(n)}$ now follows immediately from previous results since 
\begin{equation}%
  \label{eq:EBiBj}
  \begin{split}%
    \EE{B_i^{(n)}B_j^{(n)}} & =  \sum_{k=2}^n k(k-1) \P\{L^{(n)}_{k,1}=i, L^{(n)}_{k,2}=j\} \frac{2}{q_{kk}^2} \\
    & +  \one_{(i=j)} \sum_{k=2}^n k \P\{L^{(n)}_{k,1}=i\} \frac{2}{q_{kk}^2} \\
    & +  \sum_{k=3}^n \sum_{k'=2}^{k-1} k \, k' \, \P\{L^{(n)}_{k,1}=i, L^{(n)}_{k',1}=j\} \frac{1}{q_{kk}q_{k'k'}}.\\
    \end{split}%
\end{equation}%

\end{appendix}

\clearpage
\pagebreak
\newpage

\clearpage
\pagebreak
\newpage


\addtocounter{section}{0}
\setcounter{figure}{0}
\renewcommand{\thefigure}{S\arabic{figure}}
\renewcommand{\thetable}{S\arabic{table}}%
\renewcommand{\thesection}{S\arabic{section}}%

\begin{center}%
{\Large  \bf Supporting Information}
\end{center}%
\begin{center}%
  {\Large \bf Statistical properties of the site-frequency spectrum associated with Lambda-coalescents}
\end{center}%
\begin{center}%
  {\bf  Matthias Birkner, Jochen Blath, Bjarki Eldon}
  \end{center}%

\clearpage
\pagebreak
\newpage

\addtocounter{section}{-1}
\section{Fu's covariance formulas for the Kingman case}
\label{ssn:Fu}
\cite{F95} obtains expressions for the expected value $\EE{\xi^{(n)}_i}$,
variance $\Var{\xi^{(n)}_i}$, and covariance \Cov{\xi^{(n)}_i}{\xi^{(n)}_j} of the
number $\xi^{(n)}_i$ of mutations in $i$ copies in a sample of size $n$
whose genealogy {\em is governed by the Kingman coalescent}.  For ease of reference
we now cite \cite{F95}'s results.   Let
$$
a_n :=  1 + \frac{1}{2} + \cdots  +  \frac{1}{n - 1}
$$
for $n \ge 2$ with $a_1 := 0$,  and
$$
\beta_n(i) := \frac{2n}{(n - i + 1)(n - i)}(a_{n + 1} - a_i)  -  \frac{2}{n - i}.
$$
Define the quantities
\begin{displaymath}%
\sigma_{ii} = \begin{cases}\beta_{n}(i + 1), & \mathrm{if} \quad i < n/2, \\
  2\frac{a_n - a_i }{n - i} - i^{-2}, & \mathrm{if} \quad i = n/2, \\
 \beta_n(i)  -  i^{-2}, & \mathrm{if} \quad i > n/2,\\\end{cases}
\end{displaymath}%
and, for $i < j$,
\begin{displaymath}%
\sigma_{ij} = \begin{cases} \frac{\beta_{n}(j + 1) - \beta_n(j)  }{2}, & \mathrm{if} \quad i + j < n, \\
  \frac{a_n - a_i }{n - i}  +  \frac{a_n - a_j }{n - j} -  \frac{\beta_n(i + 1) + \beta_n(j) }{2 }  -    \frac{1}{ij}, & \mathrm{if} \quad i + j = n, \\
 \frac{\beta_n(i)  -  \beta_n(i + 1) }{2}  -  \frac{1}{ij},&  \mathrm{if} \quad i + j > n. \\\end{cases}
\end{displaymath}%

\begin{theorem}[Fu 1995]
\label{thm:Fu}
For the unfolded site frequency spectrum of a $n$-sample with mutation rate $\theta/2$ and genealogy given by a Kingman-coalescent, 
\begin{displaymath}%
  \begin{split}%
    \EE{\xi^{(n)}_i} & = \frac{\theta}{i}, \\
    \Var{\xi^{(n)}_i} & = \frac{\theta}{i}  +  \sigma_{ii}\theta^2, \\
    \Cov{\xi^{(n)}_i}{\xi^{(n)}_j} & = \sigma_{ij}\theta^2,
    \end{split}%
  \end{displaymath}%
  for $i,j=1, \dots, n-1$ with $i \neq j$.
\end{theorem}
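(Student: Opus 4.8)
The plan is to specialize the general Lambda-coalescent formulas already established in this paper — Proposition~\ref{prop:Exi} for the mean and Theorem~\ref{thm:cov} for the second moments — to the Kingman case $\Lambda=\delta_0$, and then to reduce the resulting binomial sums to the stated closed forms. The Kingman coalescent admits only binary mergers, so the block-counting process $Y$ steps down deterministically $n\to n-1\to\cdots\to 2$; hence $-q_{kk}=\binom{k}{2}$ and $g(n,k)=1/\binom{k}{2}$ for \emph{every} $n\ge k$. The main structural simplification is that all the ratios $g(n,k)/g(k,k)$, $g(n,\ell)/g(\ell,\ell)$ and $g(\ell,k)/g(k,k)$ occurring in Theorem~\ref{thm:cov} collapse to $1$. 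Combined with the closed form $p^{(n)(K)}[k,b]=\binom{n-b-1}{k-2}/\binom{n-1}{k-1}$ recorded after Proposition~\ref{prop:r}, every expression becomes an elementary weighted sum over levels $k$ (and $\ell$).

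For the mean, substituting $k\,g(n,k)=2/(k-1)$ into Proposition~\ref{prop:Exi} gives
\[
\EE{\xi_i^{(n)}}=\theta\sum_{k=2}^{n-i+1}\frac{1}{k-1}\,\frac{\binom{n-i-1}{k-2}}{\binom{n-1}{k-1}},
\]
so the claim $\EE{\xi_i^{(n)}}=\theta/i$ is equivalent to the identity $\sum_k \tfrac{1}{k-1}\binom{n-i-1}{k-2}/\binom{n-1}{k-1}=1/i$, which I would establish via a Beta-integral representation of the inverse binomial coefficient followed by summation of the resulting series, or by a short induction on $n$.

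For the second moments I would first derive Kingman closed forms for $p^{(n)}_{\mathrm{eq}}$, $p^{(n)}_{\mathrm{un}}$ and $p^{(n)}_{\mathrm{ne}}$, rather than grinding through their recursions. The cleanest route exploits the classical exchangeability fact that, conditioned on there being $k$ ancestral lineages, the labelled block sizes of the Kingman $n$-coalescent are uniform over the compositions of $n$ into $k$ positive parts; this recovers $p^{(n)(K)}[k,b]$ as a marginal and immediately yields $p^{(n)}_{\mathrm{eq}}[k;i,j]=\binom{n-i-j-1}{k-3}/\binom{n-1}{k-1}$. For the two-level quantities I would factor the coalescent into its level-$\ell$ composition and an independent exchangeable merging of those $\ell$ blocks down to $k$ blocks (itself a uniform composition of $\ell$ into $k$ parts), from which the nested and unnested closed forms follow. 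Substituting into the Kingman-simplified Theorem~\ref{thm:cov} — where $2/(-q_{kk})^2=8/(k^2(k-1)^2)$ and $1/((-q_{kk})(-q_{\ell\ell}))=4/(k(k-1)\ell(\ell-1))$ — produces $\EE{\xi_i^{(n)}\xi_j^{(n)}}$ as a double binomial sum; the variance and covariance then follow from $\Var{\xi_i^{(n)}}=\EE{(\xi_i^{(n)})^2}-\theta^2/i^2$ and $\Cov{\xi_i^{(n)}}{\xi_j^{(n)}}=\EE{\xi_i^{(n)}\xi_j^{(n)}}-\theta^2/(ij)$. As a useful sanity check, the linear-in-$\theta$ part of the variance comes solely from the $\E[(M^{(n)}_{k,1})^2]$ term and equals $\EE{\xi_i^{(n)}}=\theta/i$, matching the $\theta/i+\sigma_{ii}\theta^2$ form exactly.

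The hard part will be the final combinatorial reduction to Fu's $\sigma_{ii}$ and $\sigma_{ij}$. In particular the three-way case split in $\sigma_{ij}$ (whether $i+j<n$, $i+j=n$, or $i+j>n$) is not an artefact but reflects the geometry of the two sampled edges: the unnested contribution needs the two disjoint leaf-sets of sizes $i$ and $j$ to fit among the $n$ leaves and so is supported on $i+j<n$, the boundary $i+j=n$ forces the first split below the root to separate exactly these two groups, and $i+j>n$ can be realized only through the nested configuration. Matching the harmonic sums $a_n$ and the functions $\beta_n(\cdot)$ requires summing the binomial expressions against the weights $1/((k-1)(\ell-1))$, and this bookkeeping — rather than any conceptual obstacle — is where essentially all the effort lies; alternatively, since the statement is included here only for reference, one may simply invoke \cite{F95}.
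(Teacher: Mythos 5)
The paper itself contains no proof of this statement: Theorem~\ref{thm:Fu} is quoted from \cite{F95} explicitly ``for ease of reference,'' so there is no internal argument to compare against. Your proposal therefore takes a genuinely different (and legitimate) route: derive Fu's formulas by specializing the paper's own general machinery --- Proposition~\ref{prop:Exi} and Theorem~\ref{thm:cov} --- to $\Lambda=\delta_0$. Your Kingman reductions are correct: only binary mergers occur, so the skeleton chain is deterministic, $-q_{kk}=\binom{k}{2}$, $g(n,k)=1/\binom{k}{2}$ for all $n\ge k$, every ratio of $g$'s in Theorem~\ref{thm:cov} collapses to $1$, and $p^{(n)(K)}[k,b]=\binom{n-b-1}{k-2}/\binom{n-1}{k-1}$; the binomial identity you need for the mean does hold (it is exactly Fu's computation of $\E[\xi_i]$), your closed form for $p^{(n)}_{\mathrm{eq}}[k;i,j]$ agrees with the one the paper attributes to Fu, and the uniform-composition/exchangeability argument is the standard way to obtain the nested and unnested closed forms. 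Your sanity check that the linear-in-$\theta$ part of the variance reduces to $\theta/i$ is also correct, as is your geometric reading of the three-way case split in $\sigma_{ij}$ (with the caveat that the unnested configuration survives at $i+j=n$ via a root split, which you do note). What your route buys is self-containedness and a nontrivial consistency check of the paper's general recursions against the classical closed forms; what it costs is that the final step --- collapsing the resulting double binomial sums against the weights $1/((k-1)(\ell-1))$ into Fu's harmonic-number expressions $a_n$, $\beta_n(\cdot)$ --- is genuinely laborious and is only sketched, not executed. In effect that step amounts to redoing Fu's original computation, which is precisely why the paper (and your own closing remark) falls back on simply citing \cite{F95}.
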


For small $j$ and large $n$, the covariances turn negative, i.e.\ $\sigma_{ij}<0$.   

\section{Transition rates in special cases}

In this section we specify transition rates associated with
Lambda-coalescents when the measure $\Lambda$ assumes specific forms.

In the case when $\Lambda=\mathrm{Beta}(a,b)$ ($a, b>0$) with density
\begin{equation} 
\frac{\Gamma(a+b)}{\Gamma(a) \Gamma(b)} x^{a-1} (1-x)^{b-1} 
\one_{(0,1)}(x), 
\end{equation}
the $q_{ij}$ can be computed a little more explicitly: 
\begin{eqnarray*}
\lambda_{n,k} & = & \frac{\Gamma(a+b)}{\Gamma(a) \Gamma(b)} 
          \int_0^1 x^{k+a-3} (1-x)^{n-k+b-1} \, dx 
        = \frac{\Gamma(a+b)}{\Gamma(a) \Gamma(b)}  
          \frac{\Gamma(k+a-2) \Gamma(n-k+b)}{\Gamma(n-2+a+b)} \\ 
& = & \frac{(a)_{k-2} (b)_{n-k}}{(a+b)_{n-2}}, 
\end{eqnarray*} 
where $(x)_i = x(x+1)\cdots(x+i-1)$, $(x)_0=1$, and we 
used $\Gamma(x+1)=x \Gamma(x)$. Thus 
\begin{eqnarray*} 
q_{ij} & = & {i \choose i-j+1} \lambda_{i, i-j+1} 
= \frac{i!}{(i-j+1)! (j-1)!} \frac{(a)_{i-j-1} (b)_{j-1}}{(a+b)_{i-2}}.
\end{eqnarray*} 
In the case of the coalescent \eqref{eq:EWoneatom}, we obtain, for $2 \le k \le n$,
\begin{align*}
\lambda_{n,k} &= \int_0^1 x^{k+2}(1-x)^{n-k}\, \Big(\frac{2}{2+\psi^2}\delta_0(dx)+ \frac{\psi^2}{2+\psi^2}\delta_\psi(dx)\Big)\\
&= \frac{2}{2+\psi^2} \one_{(k=2)} + \frac{\psi^2}{2+\psi^2}\Big[ \psi^{k+2}(1-\psi)^{n-k}\Big].
\end{align*}
In particular, for $\psi=0$, one has $\lambda_{n,k} =\one_{(k=2)}$, and for $\psi=1$, this gives $\lambda_{n,k} =\frac 13 \one_{(k=n)}$.





\clearpage
\pagebreak
\newpage

\begin{figure}[h!]
  \caption{The results of \cite{BBL12}  (Thm.\ \ref{thm:bbl}) regarding convergence of the scaled unfolded frequency spectrum   compared to  simulations with no.\ of leaves $n = 10000$ and varying over $\alpha$ and $\theta$ as shown.       }
  \label{fig:bblfitnbig1}%
  \vspace{-1cm}\hspace{-1.1in}%
\includegraphics[width=8.5in,height=9in]{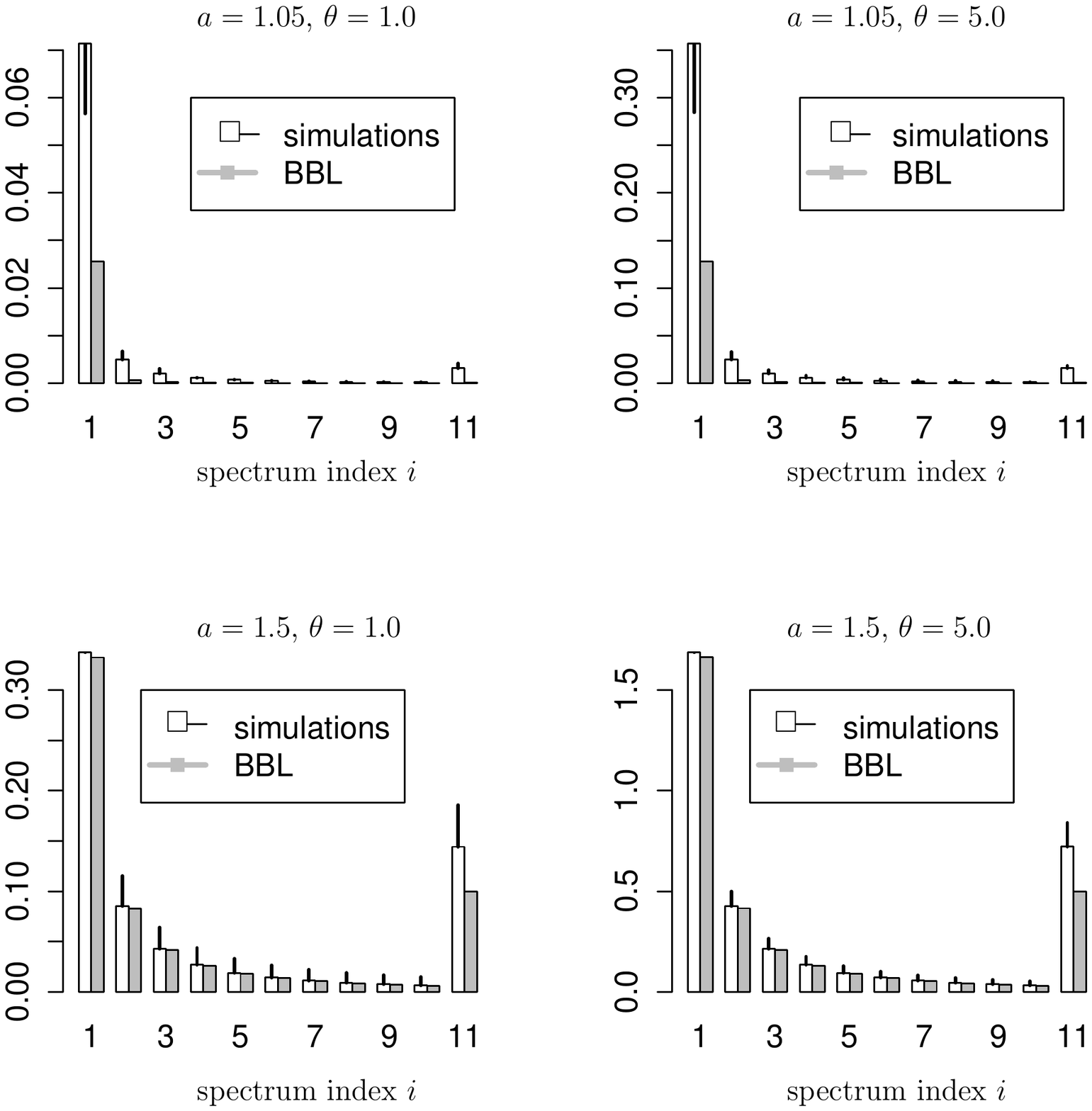}
\vspace{-1cm}
\end{figure}%

\clearpage
\pagebreak
\newpage

\begin{figure}[h!]
  \caption{The results of \cite{BBL12} (Thm.\ \ref{thm:bbl}) regarding convergence of the scaled unfolded frequency spectrum  compared to  simulations with no.\ of leaves $n = 10000$ and varying over $\alpha$ and $\theta$ as shown.   }
  \label{fig:bblfitnbig2}%
  \vspace{-1cm}\hspace{-1in}%
\includegraphics[width=8.5in,height=9in]{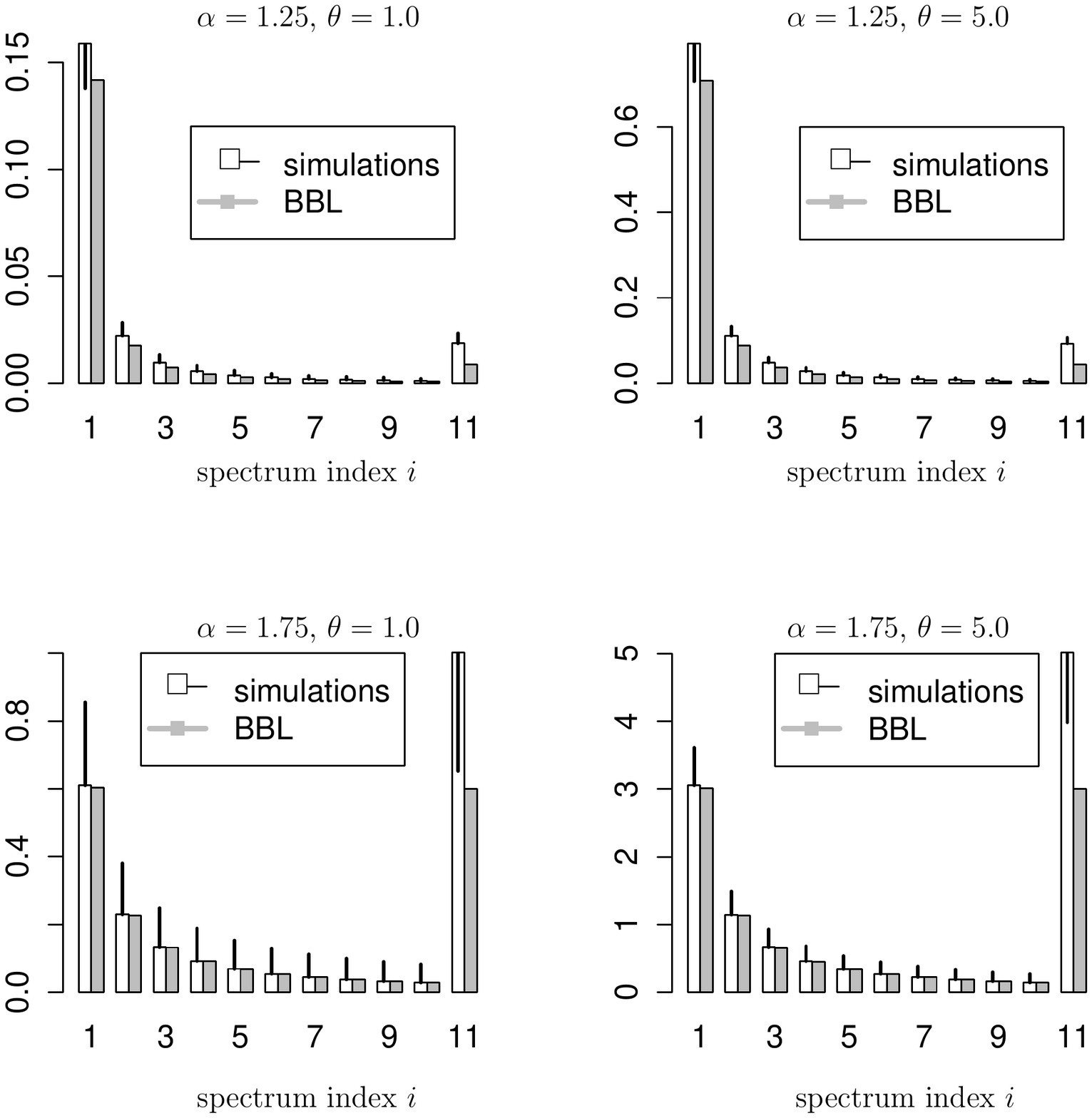}
\vspace{-1cm}
\end{figure}%

\clearpage
\pagebreak
\newpage

\begin{figure}[h!]
  \caption{The results of \cite{BBL12}  (Thm.\ \ref{thm:bbl}) regarding convergence of the scaled unfolded frequency spectrum  (grey bars) compared to  simulations (white bars) varying over $\alpha$ as shown, with $\theta = 1$.     Panels in left column are for sample size  $n=100$, panels in right column are for sample size $n=250$.     }
  \label{fig:bblfitnsmall}%
  \vspace{-1cm}\hspace{-1.1in}
\includegraphics[width=8.5in,height=9in]{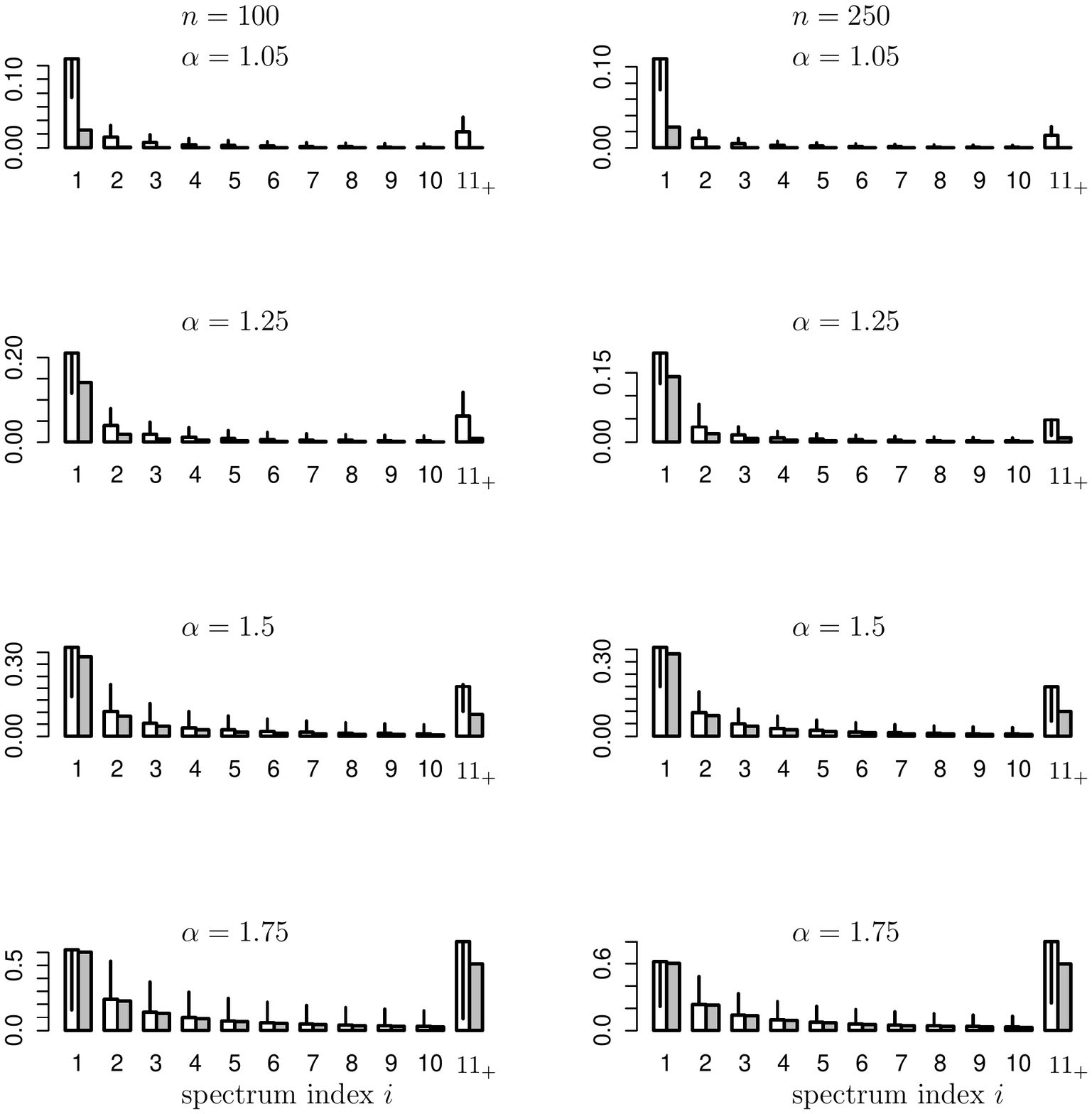}
\vspace{-1cm}
\end{figure}%

\clearpage
\pagebreak
\newpage

\begin{figure}[H!]
  \caption{The folded freq.\ spectrum (white bars) of the data of \cite{SA03} along with predictions of the Kingman coalescent (dark-grey bars), and the Beta$(2-\hat{\alpha},\hat{\alpha})$-coalescent  (light-grey bars). The vertical lines represent the standard deviation; obtained for the Beta$(2-\hat{\alpha},\hat{\alpha})$-coalescent  from $10^5$ iterations. The class labelled `7' represents the collated tail of the spectrum, from 11 to 74/2.    }
  \label{fig:farisdata}%
\includegraphics[width=7in,height=7in]{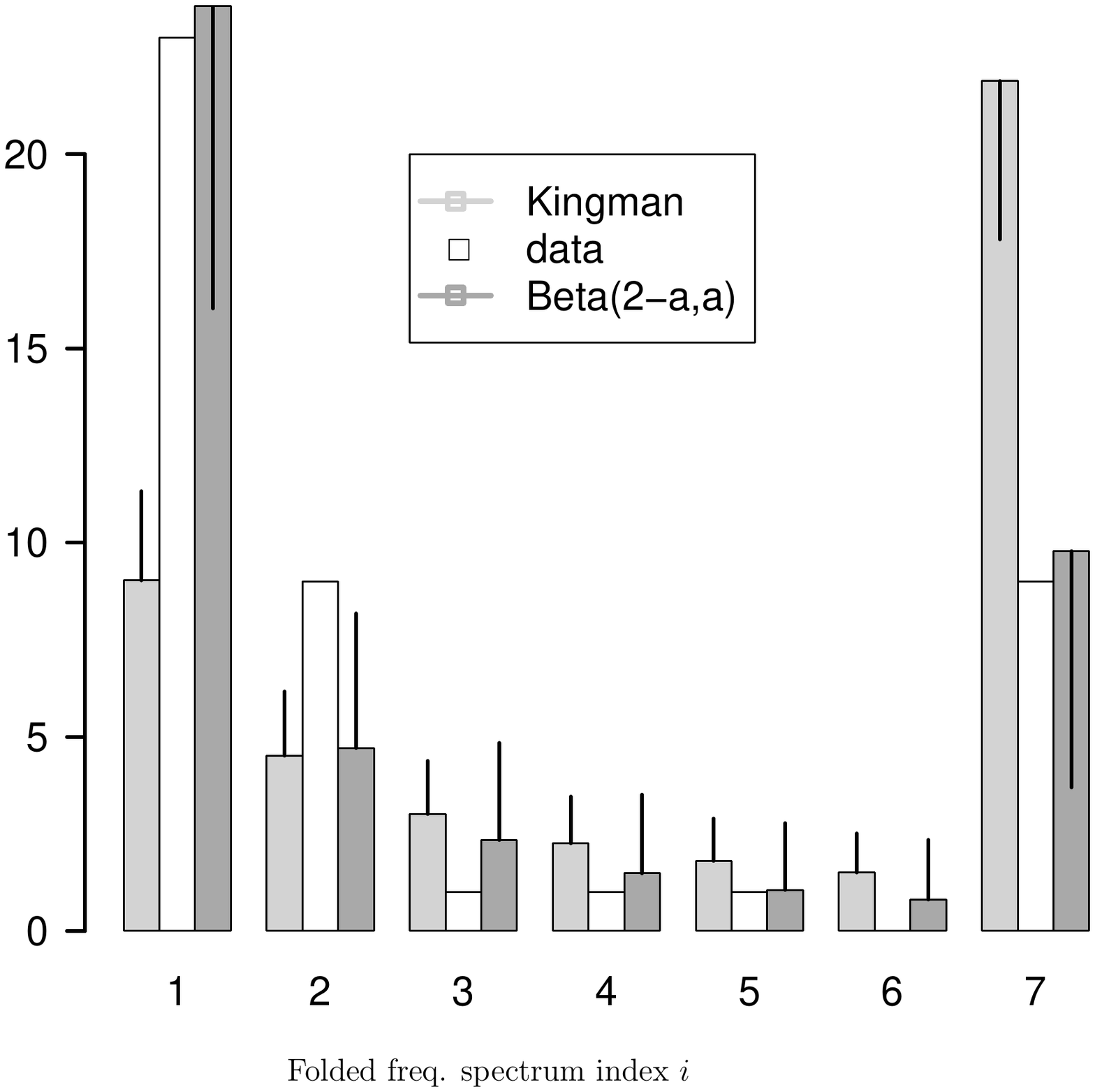}
\end{figure}%

\clearpage
\pagebreak
\newpage


\begin{figure}[H!]
  \caption{ Let $L_i^{(n)}$ denote the random length of branches
    subtending $i$ of $n$ leaves, and let $L^{(n)}$ denote the total
    length of the genealogy.  The ratio $p_i :=
    \EE{L_i^{(n)}}/\EE{L^{(n)}}$ then corresponds to $\varphi_n(i)$.
    Define $R_i := L_i/L$, and denote by $\overline{R_i}$ an estimate
    of $\EE{R_i}$, the correct quantity we approximate with
    $\varphi_n(b)$.  Comparison of the ratios $p_i :=
    \mathbb{E}[L_i]/\mathbb{E}[L]$ and $\overline{R_i}$ for the
    Beta$(2-\alpha,\alpha)$-coalescent with no.\ of leaves $n$ and
    $\alpha = \textrm{\tt a}$ varying as shown in the legend. Entries
    are $\left|p_i - \overline{R_i} \right|/\overline{R_i}$, in which
    $\overline{R_i}$ are estimates of $\mathbb{E}[R_i]$ obtained from
    $10^5$ iterations.  }
  \label{fig:errfit0}%
\includegraphics[width=6in,height=6.6in]{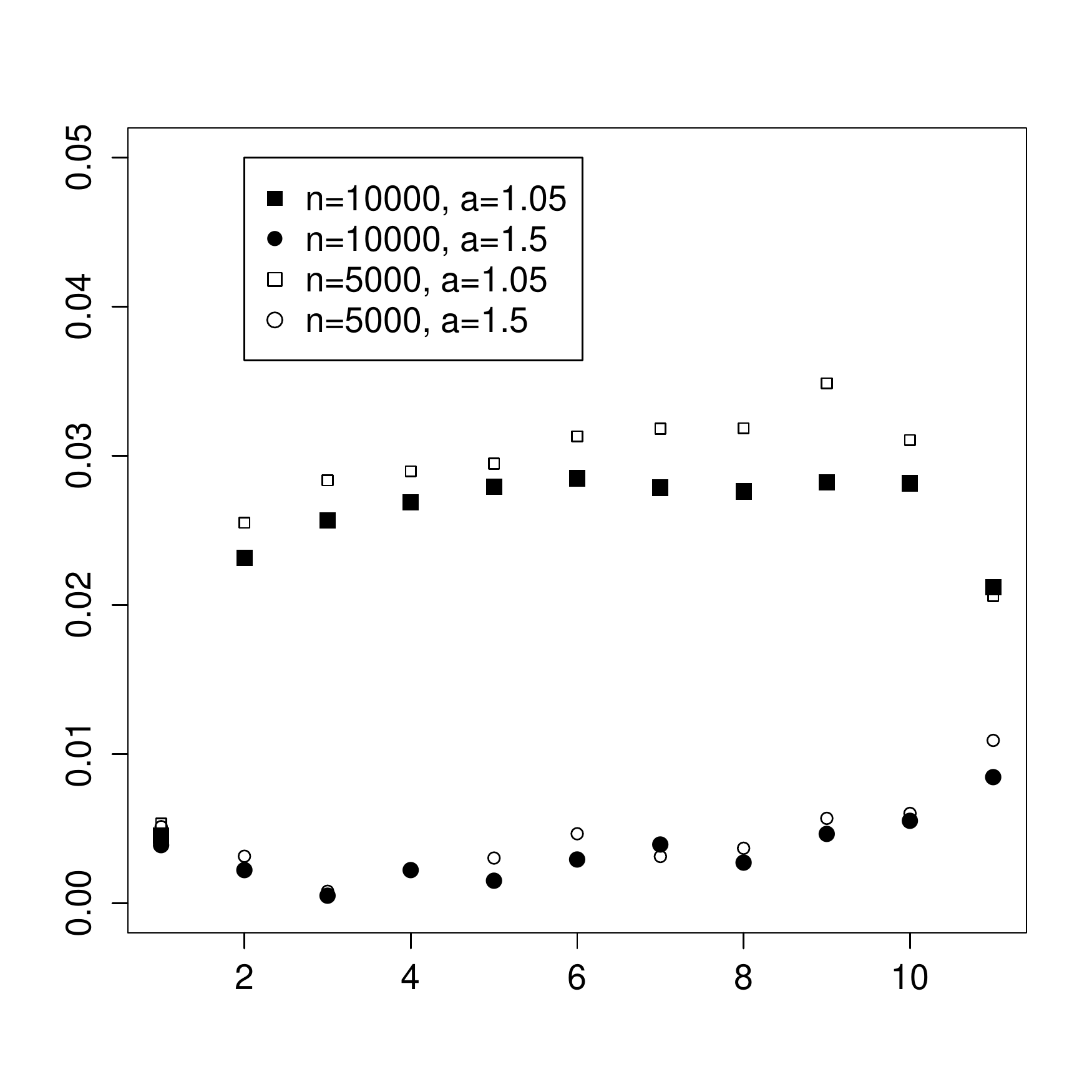}
\end{figure}%

\clearpage
\pagebreak
\newpage

\begin{figure}[H!]
  \caption{ Let $L_i^{(n)}$ denote the random length of branches
    subtending $i$ of $n$ leaves, and let $L^{(n)}$ denote the total
    length of the genealogy.  The ratio $p_i :=
    \EE{L_i^{(n)}}/\EE{L^{(n)}}$ then corresponds to $\varphi_n(i)$.
    Define $R_i := L_i/L$, and denote by $\overline{R_i}$ an estimate
    of $\EE{R_i}$, the correct quantity we approximate with
    $\varphi_n(b)$.  Comparison of the ratios $p_i :=
    \mathbb{E}[L_i]/\mathbb{E}[L]$ and $\overline{R_i}$ for the
    Beta$(2-\alpha,\alpha)$-coalescent with no.\ of leaves $n$ and
    $\alpha = \textrm{\tt a}$ varying as shown in the legend. Entries
    are $\left|p_i - \overline{R_i} \right|/\overline{R_i}$, in which
    $\overline{R_i}$ are estimates of $\mathbb{E}[R_i]$ obtained from
    $10^5$ iterations.  }
  \label{fig:errfit1}%
\includegraphics[width=6in,height=6.6in]{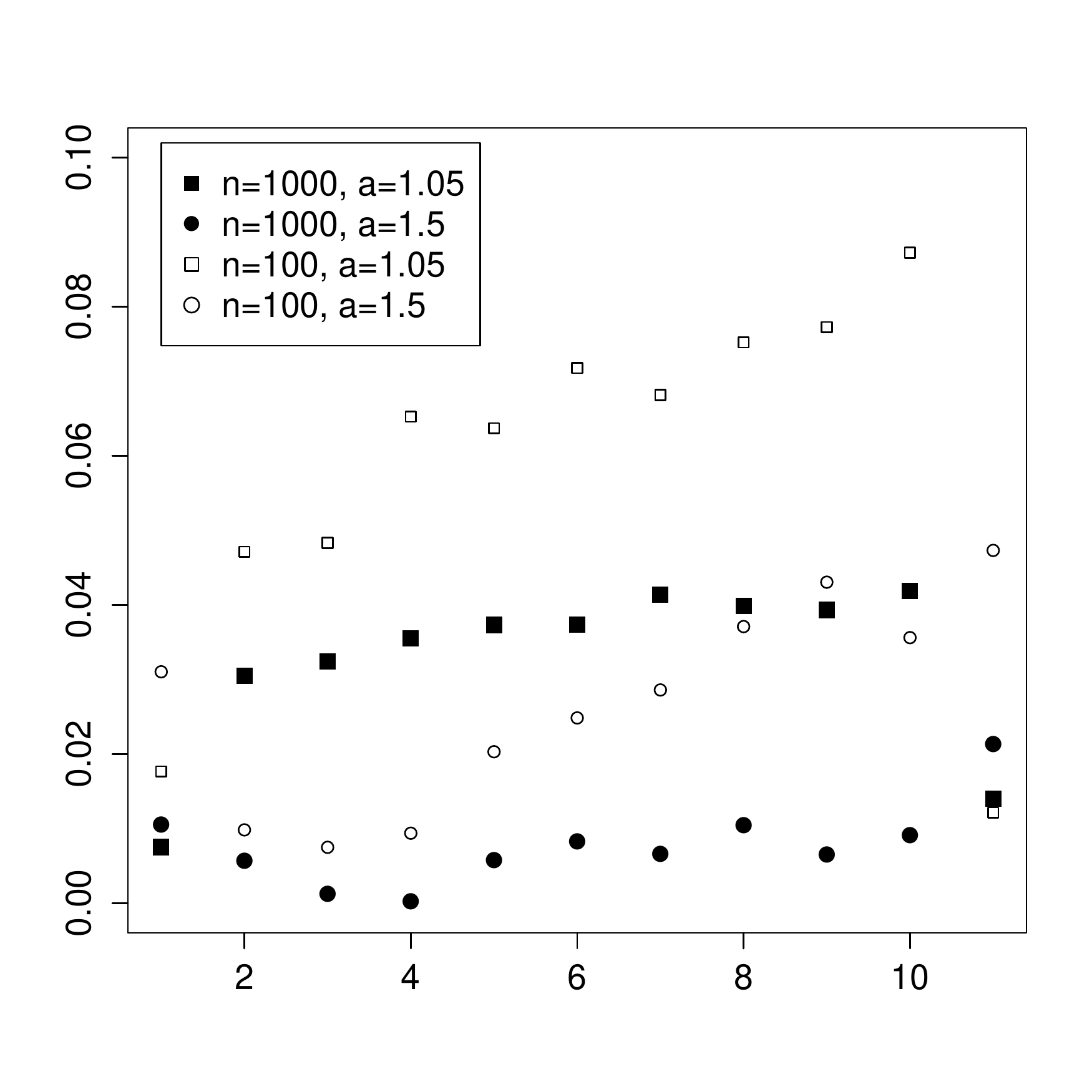}
\end{figure}%

\clearpage
\pagebreak
\newpage


Approximate bayesian computation
\citep[ABC;][]{PSPF99,BZB02,BCMR09,MPRR11} seems well suited to
distinguish Lambda-coalescent processes from the Kingman coalescent,
and to obtain credible intervals for the coalescence parameters. By
way of example, \cite{ZT12} apply ABC on the frequency spectrum to
distinguish between possible demographic scenarios in relation to seed
banks.  Although we do not apply ABC for now, we do consider the
distribution of the $\ell^2$-norm when one compares the observed
spectrum with one simulated from a Lambda-coalescent.  Thus, let
$$
\boldsymbol{\xi}^{(*,n)} := \left(\xi_1^{(*,n)}, \ldots, \xi_{n-1}^{(*,n)} \right) 
$$ 
denote  simulated spectrum obtained under a particular coalescent process.   The $\ell^2$-norm is now
$$
\ell^2 =  \sqrt{\sum_i \left( \xi_i^{n}  -  \xi_i^{(*,n)}  \right)^2 }
$$
where $\xi_i^{(n)}$ denotes the observed spectrum.
Figures~\ref{fig:farisltwodist}--\ref{fig:iceltwodist} in
 show the distribution of the $\ell^2$-norm for
the different Atlantic cod datasets based on $1000$ iterations for
each parameter value.  Each vertical line denotes the interval spanned
by the quantiles $5\%$ and $95\%$, while the closed square on each
line denotes the median.  By way of example, one would clearly reject
the Kingman coalescent in favor of the beta-coalescent for the
\cite{A04} dataset (Figure~\ref{fig:allltwodist}).  In
Figures~\ref{fig:betaone} and \ref{fig:betatwo} the expected values
for the beta-coalescent for $\alpha = 1.5$ and $\alpha = 1.05$, resp.,
replaces the data; i.e.\ we consider the distribution of the
$\ell^2$-norm if the data were exactly the expected site-frequency
spectrum associated with the beta-coalescent.  The Figures indicate
that an analysis of the distribution of the $\ell^2$-norm is a
promising way to distinguish Lambda-coalescents from the Kingman
coalescent. 


\begin{figure}[H!]
  \caption{The $5\%$ and $95\%$ quantiles (lines) and the median (squares) of the $\ell^2$-distance  between simulated data for each parameter value ($\alpha$ or $\psi$) and the Faroe Islands cod dataset of \cite{SA03}.  The quantiles for each parameter value are based on $1000$ iterations.   The $\alpha$ value $1.435$ yields  the minimum $95\%$ quantile of  $17.177$, while the value $0.06$ of $\psi$ yields  the minimum $95\%$ quantile of $19.081$.       }
  \label{fig:farisltwodist}%
\hspace{-1in}\includegraphics[width=8in,height=8in]{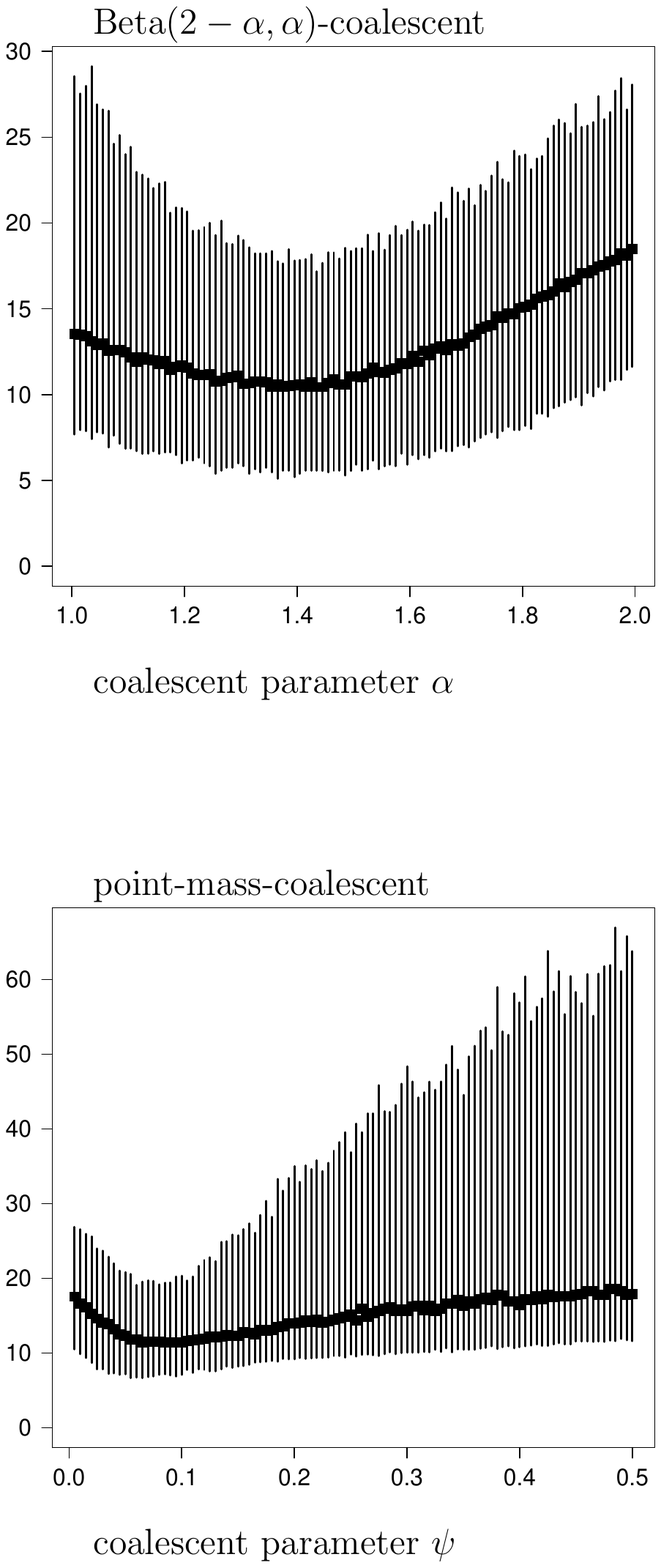}
\end{figure}%

\clearpage
\pagebreak
\newpage

\begin{figure}[H!]
  \caption{The $5\%$ and $95\%$ quantiles (lines) and the median (squares) of the $\ell^2$-distance  between simulated data for each parameter value ($\alpha$ or $\psi$) and the Greenland cod dataset of \cite{A00}.  The quantiles for each parameter value are based on $1000$ iterations.   The $\alpha$ value $1.665$ yields  the minimum $95\%$ quantile of  $5.661$, while the value $0.025$ of $\psi$ yields  the minimum $95\%$ quantile of $5.839$.       }
  \label{fig:greenltwodist}%
\hspace{-1in}\includegraphics[width=8in,height=8in]{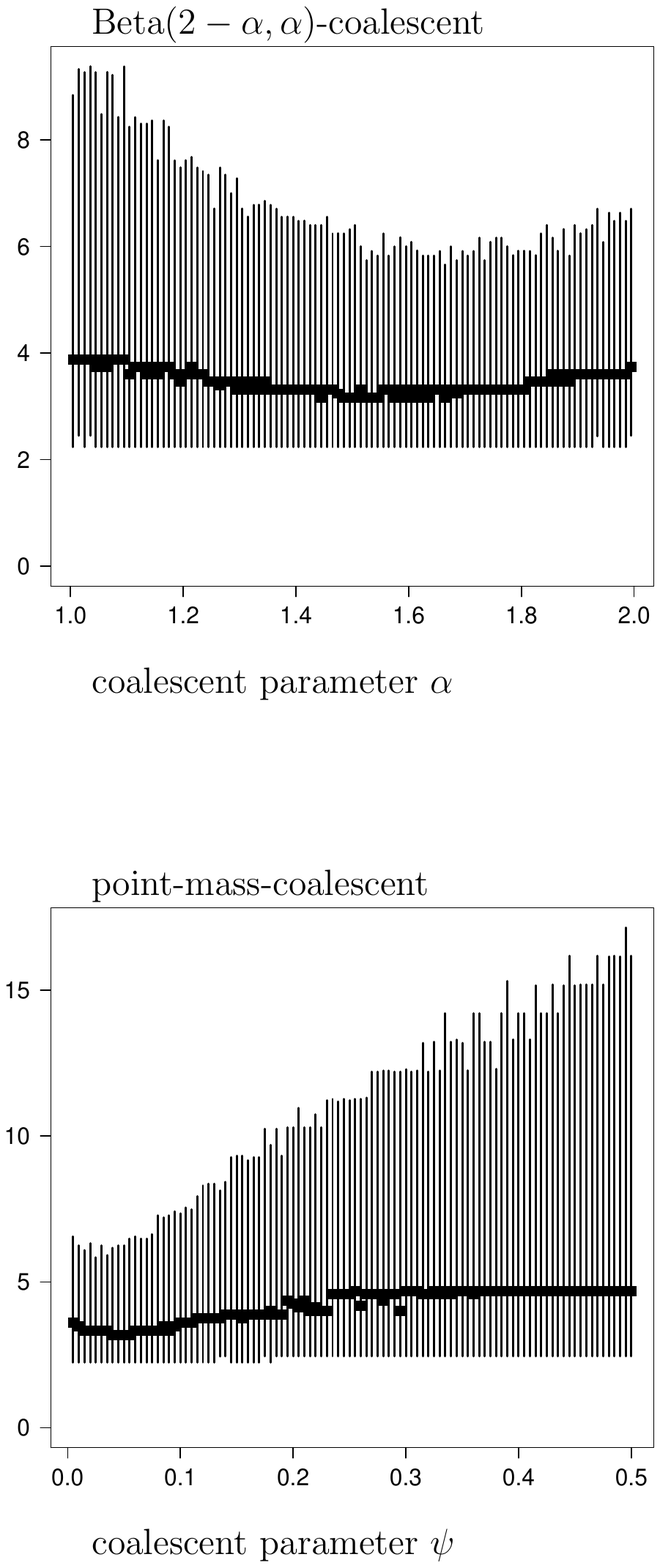}
\end{figure}%

\clearpage
\pagebreak
\newpage

\begin{figure}[H!]
  \caption{The $5\%$ and $95\%$ quantiles (lines) and the median (squares) of the $\ell^2$-distance  between simulated data for each parameter value ($\alpha$ or $\psi$) and the Iceland cod dataset of \cite{A00}.  The quantiles for each parameter value are based on $1000$ iterations.   The $\alpha$ value $1.405$ yields  the minimum $95\%$ quantile of  $9.168$, while the value $0.02$ of $\psi$ yields  the minimum $95\%$ quantile of $9.806$.       }
  \label{fig:iceltwodist}%
\hspace{-1in}\includegraphics[width=8in,height=8in]{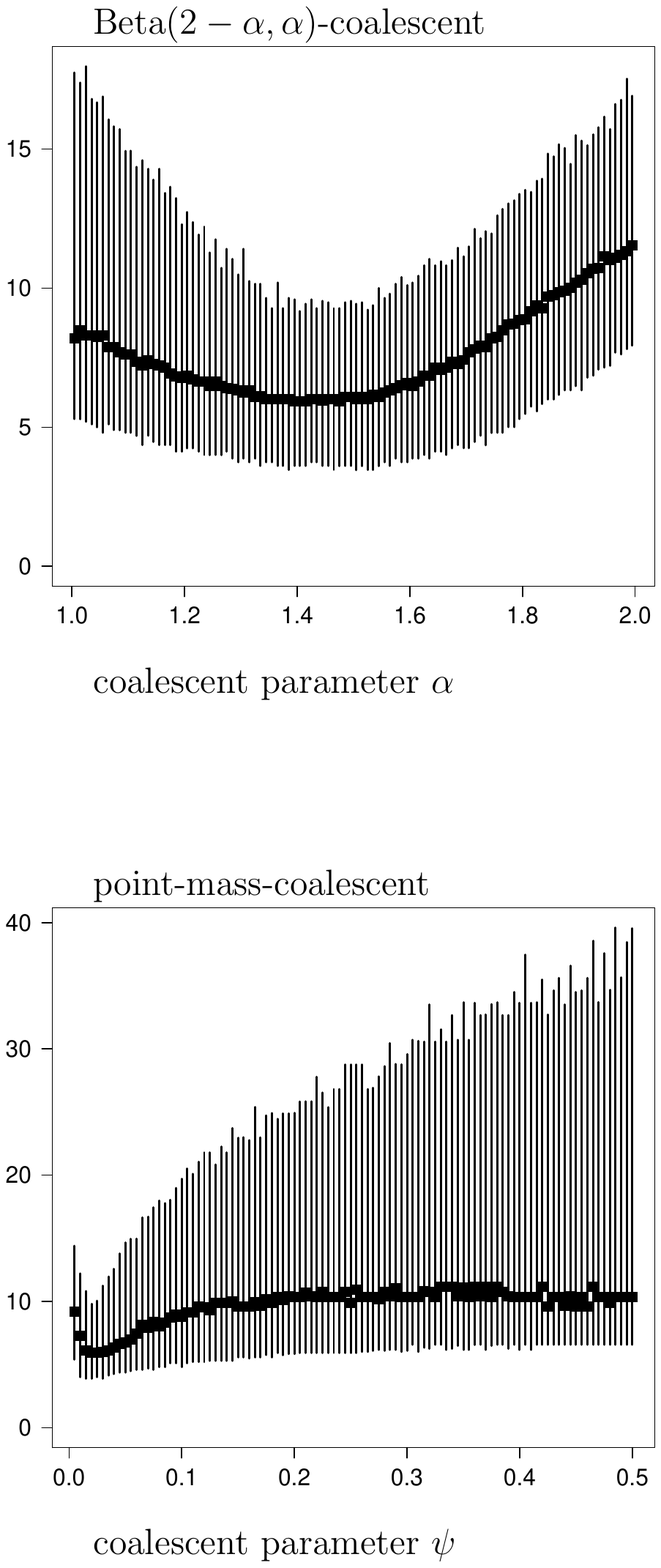}
\end{figure}%

\clearpage
\pagebreak
\newpage



\begin{figure}[H!]
  \caption{The $5\%$ and $95\%$ quantiles (lines) and the median (squares) of the $\ell^2$-distance  between simulated data for each parameter value ($\alpha$ or $\psi$) and the joint Iceland and Greenland cod dataset of \cite{A00}.  The quantiles for each parameter value are based on $1000$ iterations.   The $\alpha$ value $1.535$ yields  the minimum $95\%$ quantile of  $8.426$, while the value $0.015$ of $\psi$ yields  the minimum $95\%$ quantile of $10.863$.       }
  \label{fig:giltwodist}%
\hspace{-1in}\includegraphics[width=8in,height=8in]{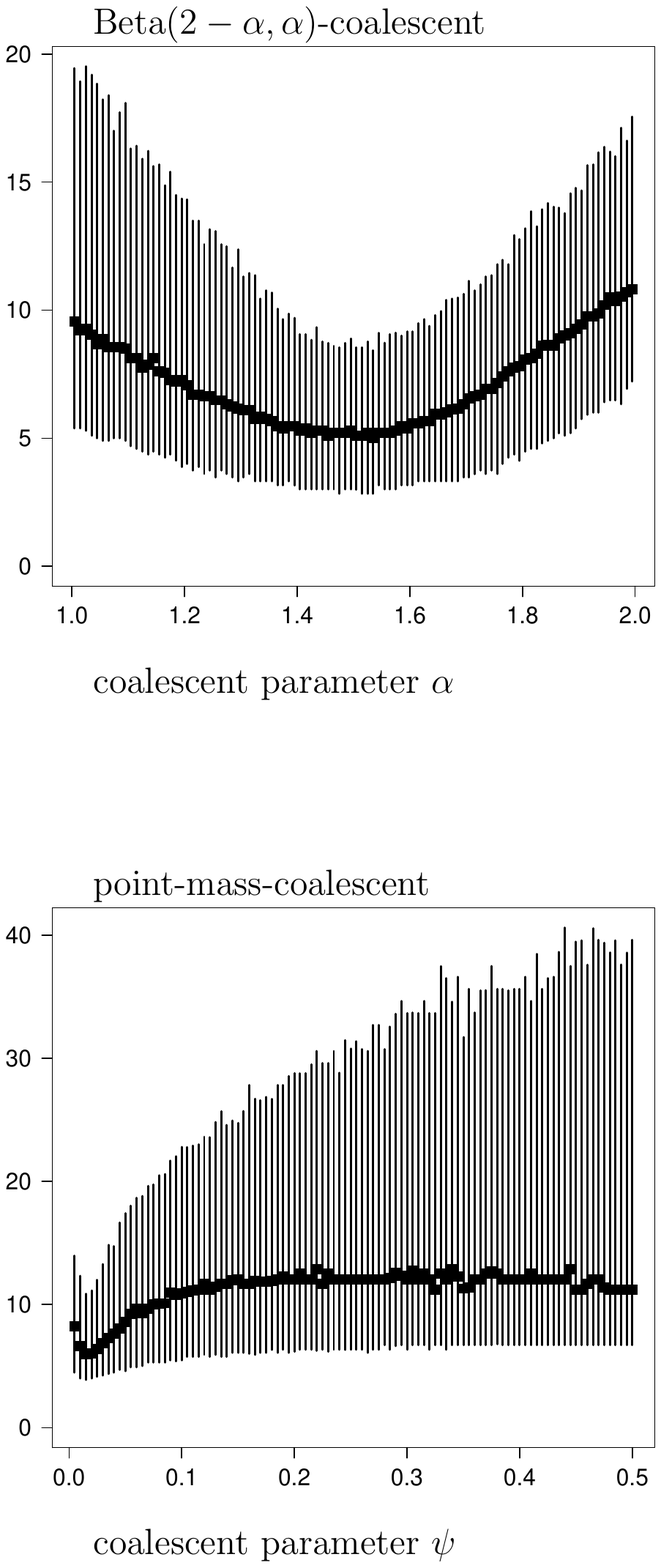}
\end{figure}%

\clearpage
\pagebreak
\newpage

\begin{figure}[H!]
  \caption{The $5\%$ and $95\%$ quantiles (lines) and the median (squares) of the $\ell^2$-distance  between simulated data for each parameter value ($\alpha$ or $\psi$) and the Atlantic cod dataset of \cite{A04}.  The quantiles for each parameter value are based on $1000$ iterations.   The $\alpha$ value $1.545$ yields  the minimum $95\%$ quantile of  $11.874$, while the value $0.01$ of $\psi$ yields  the minimum $95\%$ quantile of $16.735$.       }
  \label{fig:allltwodist}%
\hspace{-1in}\includegraphics[width=8in,height=8in]{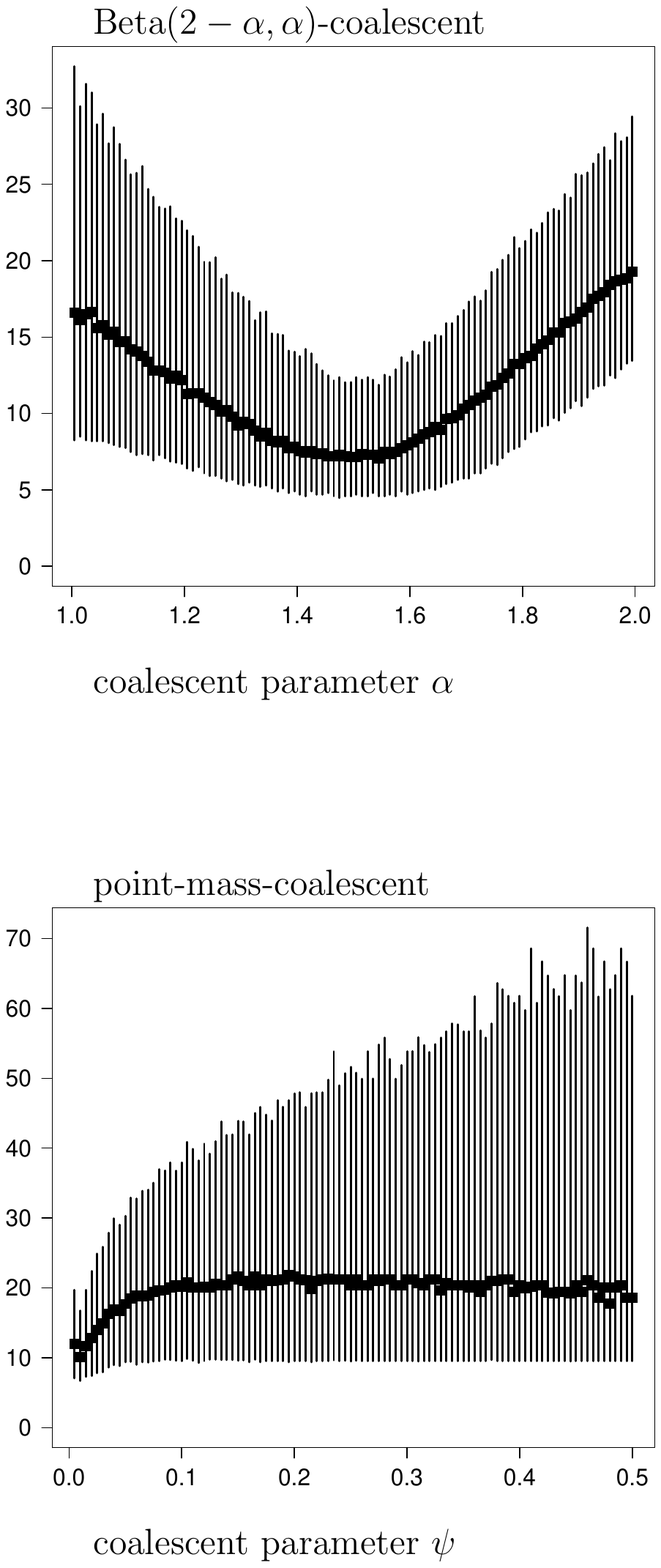}
\end{figure}%

\clearpage
\pagebreak
\newpage

\begin{figure}[H!]
  \caption{The $5\%$ and $95\%$ quantiles (lines) and the median (squares) of the $\ell^2$-distance  between simulated data for each parameter value ($\alpha$ or $\psi$) and expected values of the site-frequency spectrum associated with the Beta$(2-\alpha,\alpha)$-coalescent when  $\alpha = 1.5$, sample size is   $n = 1000$ and  $\theta = 1.0$.      The quantiles for each parameter value are based on $1000$ iterations.   The $\alpha$ value $1.505$ yields  the minimum $95\%$ quantile of  $8.035$, while the value $0.01$ of $\psi$ yields  the minimum $95\%$ quantile of $10.587$.       }
  \label{fig:betaone}%
\hspace{-1in}\includegraphics[width=8in,height=8in]{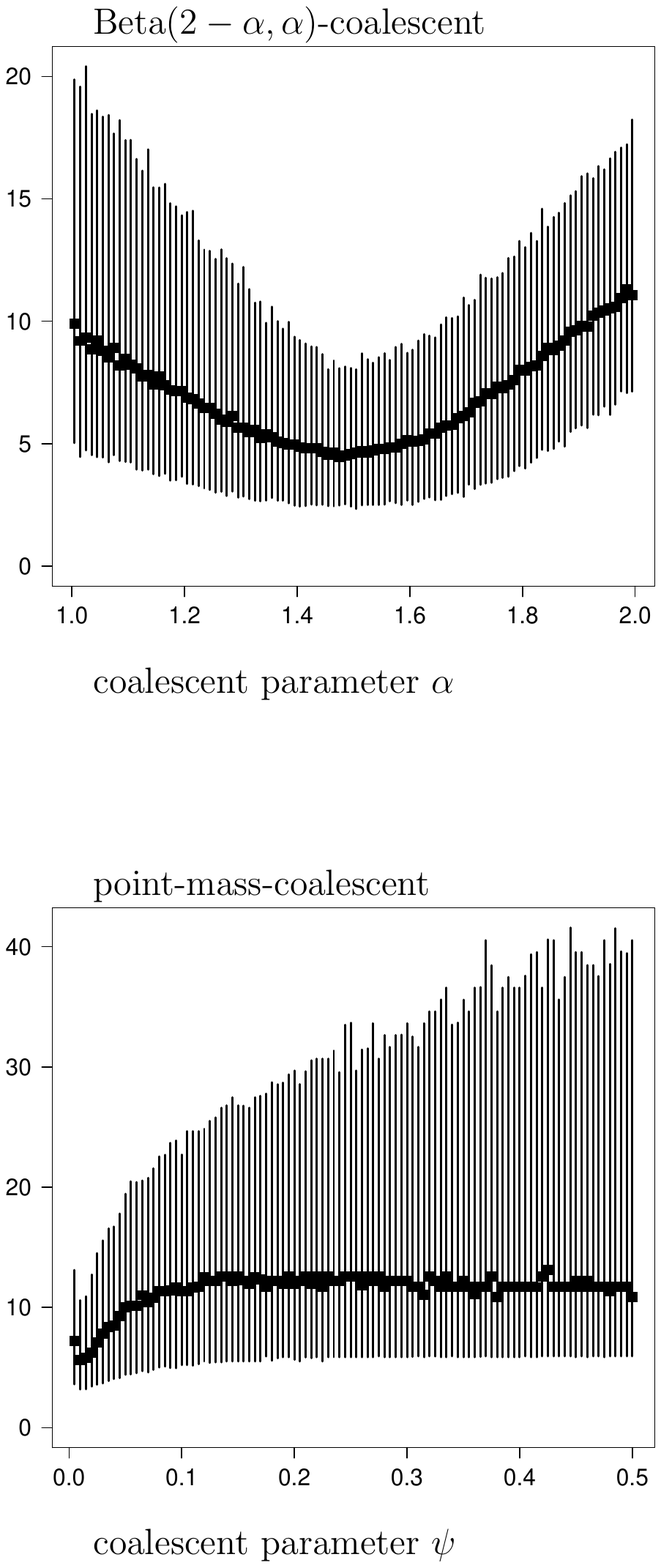}
\end{figure}%

\clearpage
\pagebreak
\newpage

\begin{figure}[H!]
  \caption{The $5\%$ and $95\%$ quantiles (lines) and the median (squares) of the $\ell^2$-distance  between simulated data for each parameter value ($\alpha$ or $\psi$) and expected values of the site-frequency spectrum associated with the Beta$(2-\alpha,\alpha)$-coalescent when  $\alpha = 1.05$, sample size is   $n = 1000$ and  $\theta = 1.0$.      The quantiles for each parameter value are based on $1000$ iterations.   The $\alpha$ value $1.205$ yields  the minimum $95\%$ quantile of  $28.472$, while the value $0.04$ of $\psi$ yields  the minimum $95\%$ quantile of $26.318$.       }
  \label{fig:betatwo}%
\hspace{-1in}\includegraphics[width=8in,height=8in]{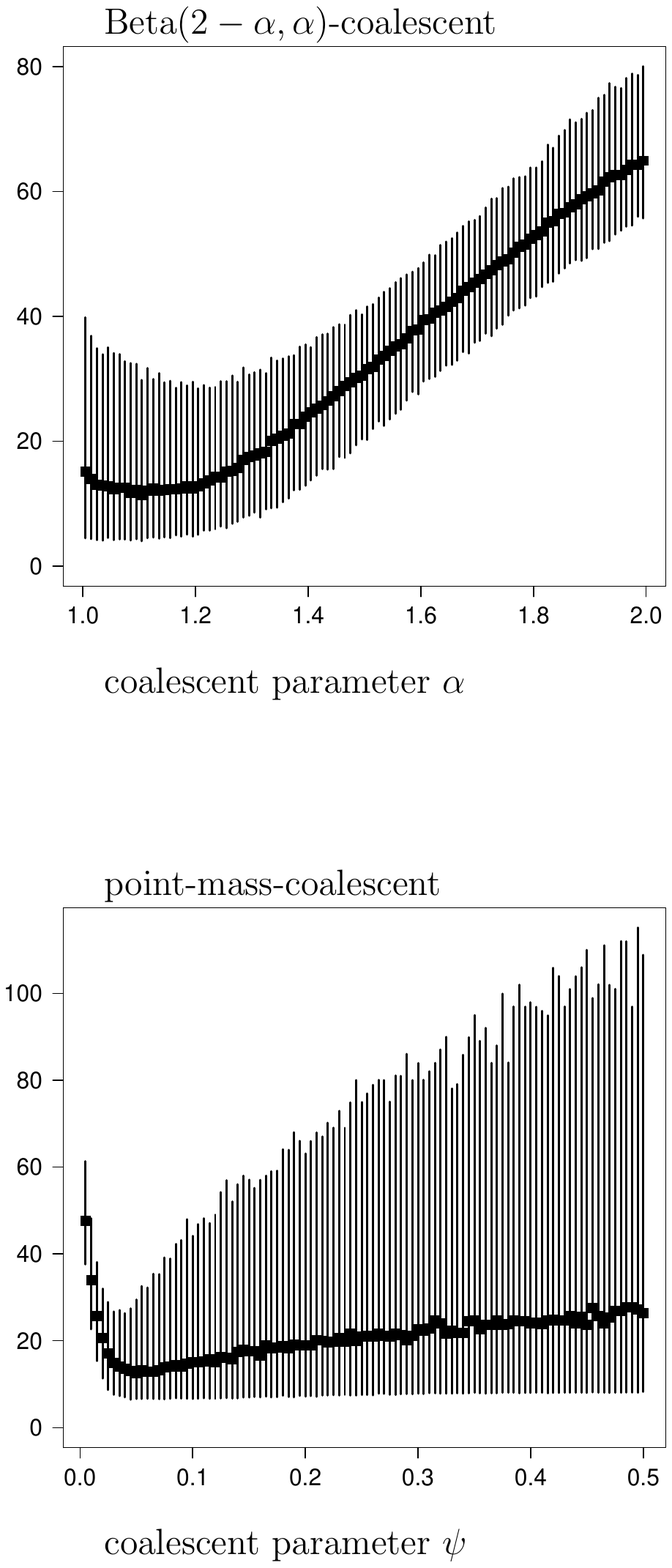}
\end{figure}%

\clearpage
\pagebreak
\newpage



\begin{figure}[H!]
  \caption{Estimates $(\hat{\alpha})$ obtained by applying the pseudo-likelihood to each of $1000$ simulated datasets for sample size $n = 100$ obtained for value of $\alpha = a$ and $\theta$ as shown.   Solid vertical lines indicate the true values of $\alpha = a$, and the broken lines indicate the $5\%$, $50\%$, and $95\%$ quantiles.   }
  \label{fig:graphlBBab}
\hspace{-1in}\includegraphics[width=8in,height=8in]{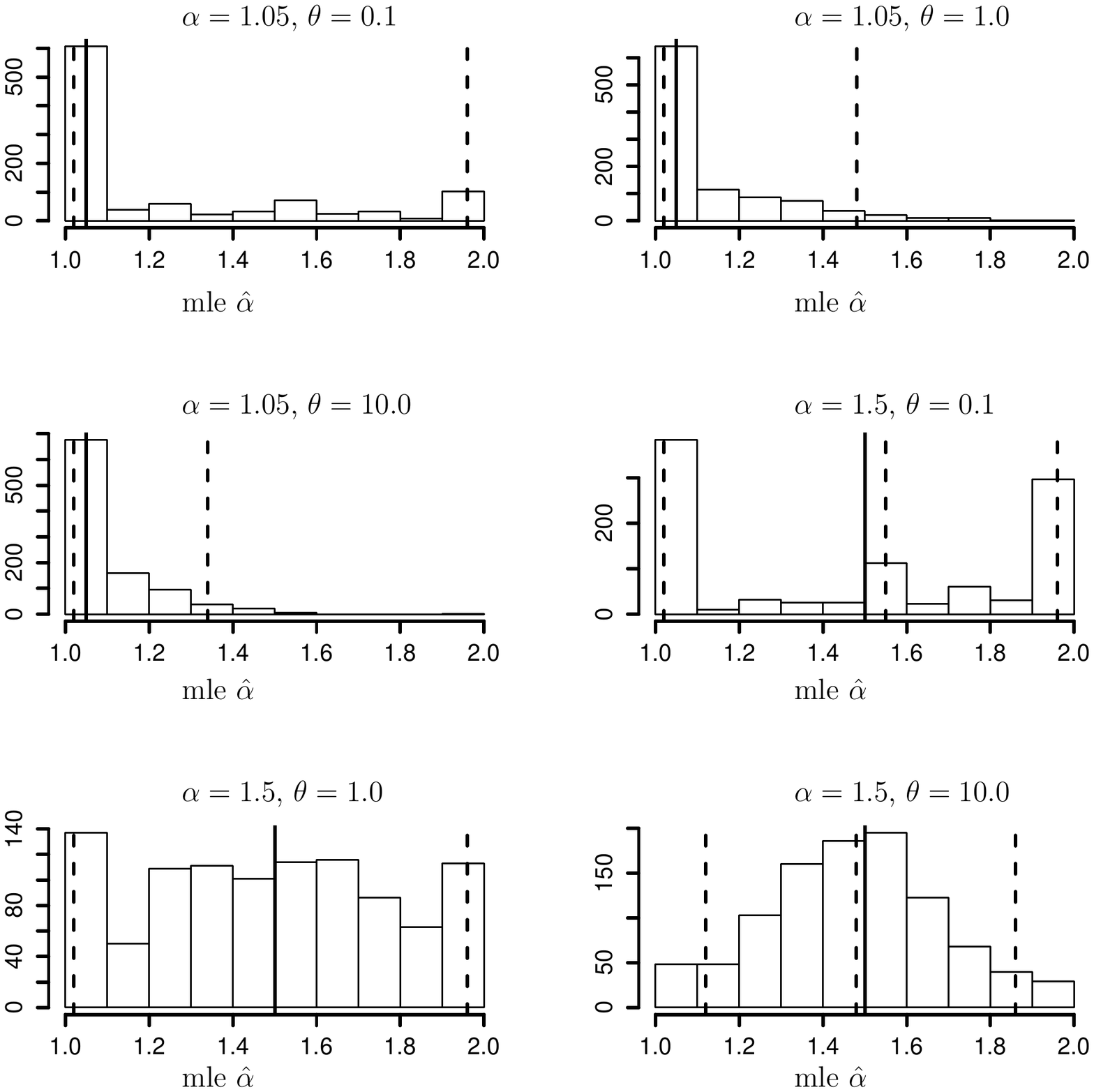}
  \end{figure}%

\clearpage
\pagebreak
\newpage

\begin{figure}[H!]
  \caption{Estimates $(\hat{\psi})$ obtained by applying the pseudo-likelihood to each of $1000$ simulated datasets for sample size $n = 100$ obtained for value of $\psi$ and $\theta$ as shown.   Solid vertical lines indicate the true values of $\psi$, and the broken lines indicate the $5\%$, $50\%$, and $95\%$ quantiles.   }
\hspace{-1in}\includegraphics[width=8in,height=8in]{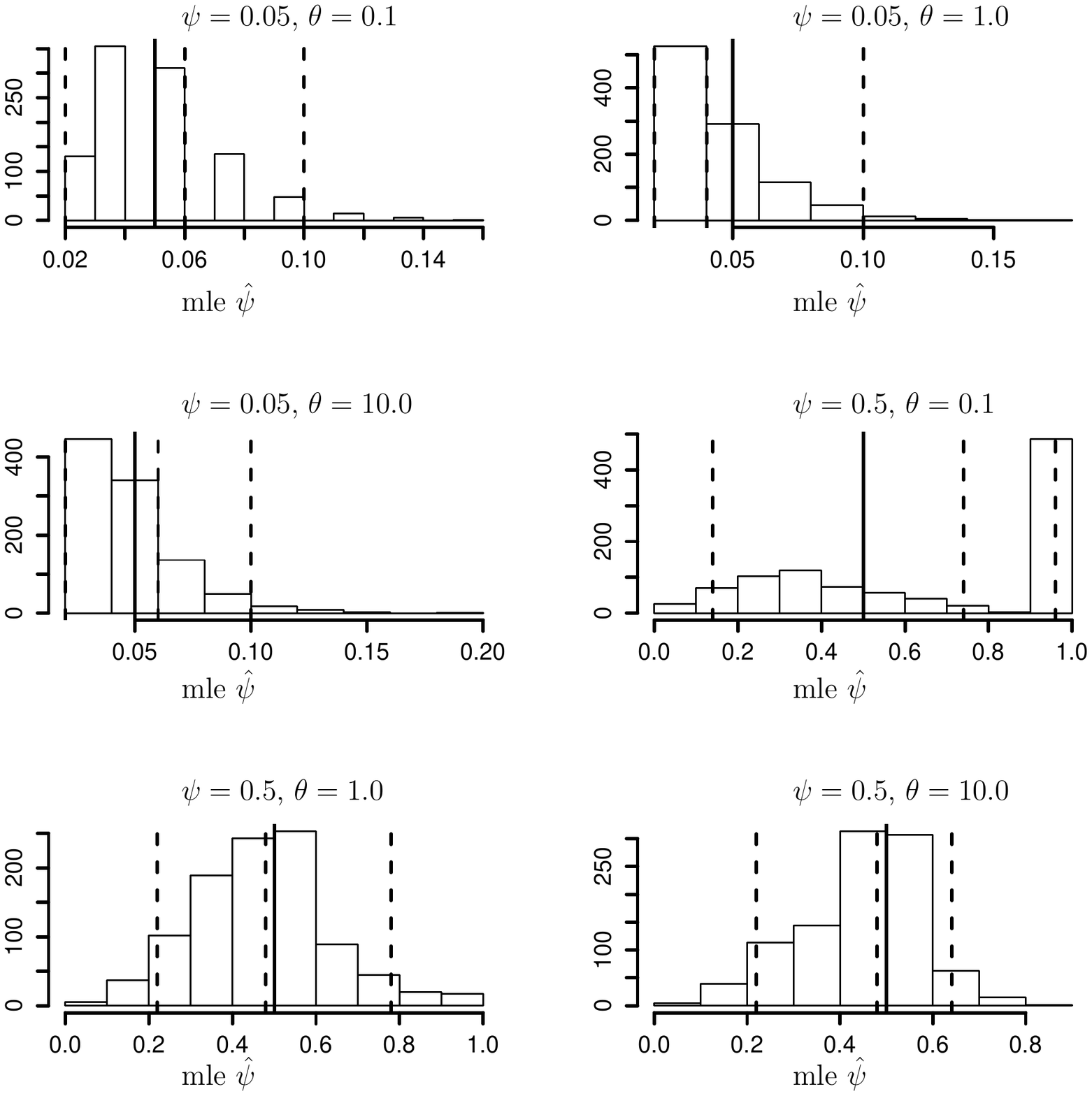}
  \end{figure}%

\clearpage
\pagebreak
\newpage

\begin{figure}[H!]
  \caption{Estimates ($\hat{\psi}$ and $\hat{\alpha}$) obtained by applying the pseudo-likelihood to each of $1000$ simulated datasets for sample size $n = 100$ under the Kingman coalescent and varying over $\theta$ as shown.   Broken lines indicate the $5\%$, $50\%$, and $95\%$ quantiles.   }
  \label{fig:mlekingman}
 \hspace{-1in}\includegraphics[width=8in,height=8in]{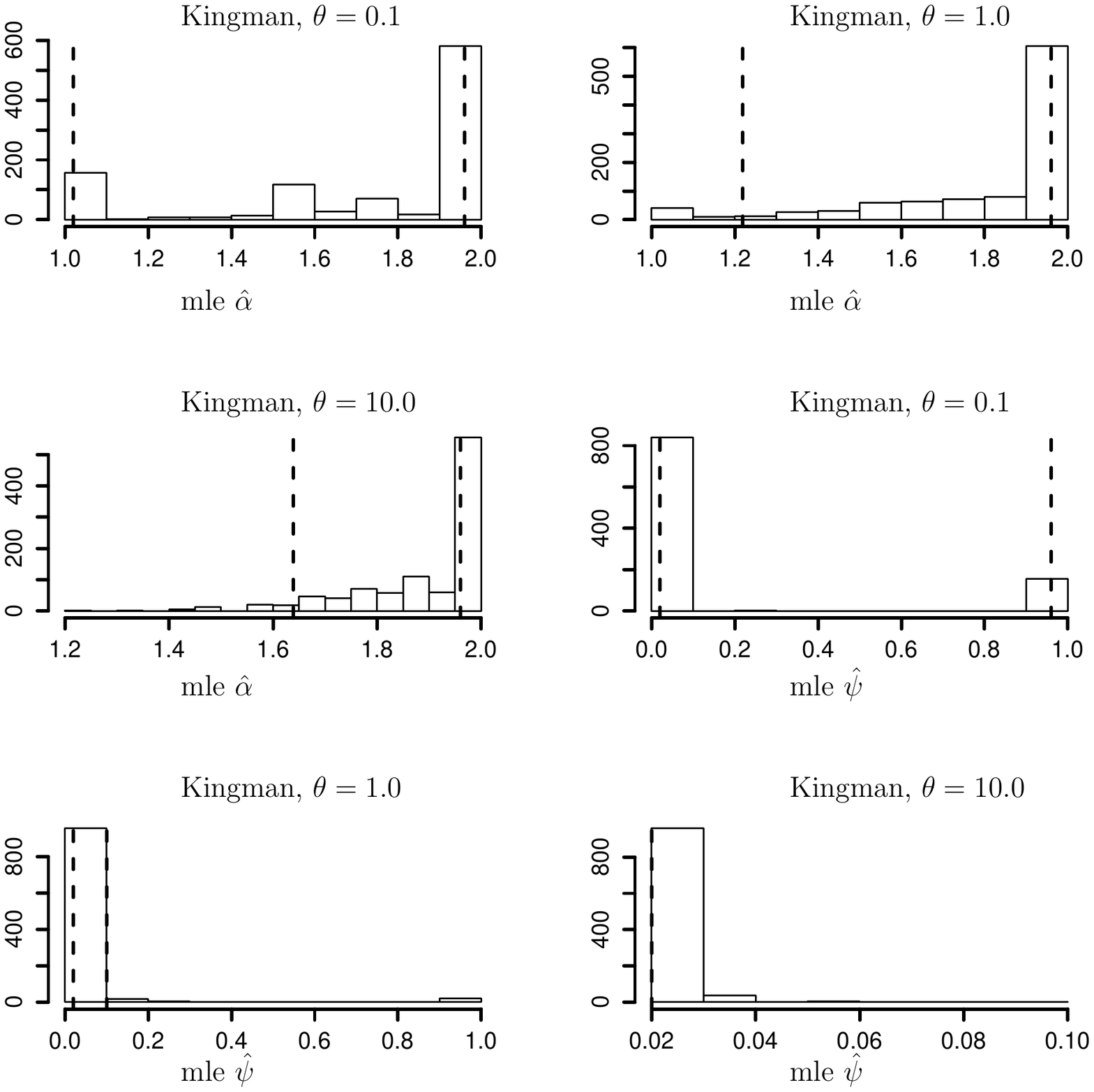}
  \end{figure}%

\end{document}